\newif\ifappendix
\newcommand{\refappendix}[1]{\ifappendix
 Appendix~\ref{#1}\xspace
\else
 the supplementary material\xspace
\fi}
\DeclareMathOperator{\dist}{d}
\DeclareMathOperator{\opt}{OPT}
\DeclareMathOperator{\activewindow}{\mathrm{W}}
\DeclareMathOperator{\argmin}{argmin}
\DeclareMathOperator{\weight}{\mathsf{weight}}
\DeclareMathOperator{\last}{\mathsf{last}}
\mathchardef\mhyphen="2D
\renewcommand{\O}{\mathcal{O}}
\newcommand{\cost}{f}
\newcommand{\acost}{\widehat{\cost}}
\newcommand{\costm}{\mathsf{cost_\mu}}
\newcommand{\centers}{\mathcal{C}}
\newcommand{\cen}{\mathsf{c}}
\newcommand{\suffix}{\mathsf{Suffix}}
\newcommand{\calC}{\mathcal{C}}
\newcommand{\cl}{\mathsf{cl}}
\newcommand{\algo}{\mathsf{ALG}}
\newcommand{\comment}[1]{}
\newtheorem{theorem}{Theorem}[section]
\newtheorem{lemma}[theorem]{Lemma}
\newtheorem{definition}[theorem]{Definition}
\title{Sliding Window Algorithms for k-Clustering Problems}
\author{%
  Michele Borassi, Alessandro Epasto, Silvio Lattanzi,\\
  Sergei Vassilviskii, Morteza Zadimoghaddam\\
  Google \\
  \texttt{\{borassi,aepasto,silviol,sergeiv,zadim\}@google.com}\\
}
\begin{document}

\maketitle

\begin{abstract}
The sliding window model of computation captures scenarios in which data is arriving continuously, but only the latest $w$ elements should be used for analysis. The goal is to design algorithms that update the solution efficiently with each arrival rather than recomputing it from scratch. In this work\footnote{This manuscript is an extended version of the paper ``Sliding Window Algorithms for k-Clustering Problems'' to appear in Neurips2020.}, we focus on $k$-clustering problems such as $k$-means and $k$-median. In this setting, we provide simple and practical algorithms that offer stronger performance guarantees than previous results. Empirically, we show that our methods store only a small fraction of the data, are orders of magnitude faster, and find solutions with costs only slightly higher than those returned by algorithms with access to the full dataset. 
\end{abstract}

\section{Introduction}
\label{sec:intro}

Data clustering is a central tenet of unsupervised machine learning. One version of the problem can be phrased as grouping data into $k$ clusters so that elements within the same cluster are similar to each other. Classic formulations of this question include the $k$-median and $k$-means problems for which good approximation algorithms are known~\cite{DBLP:conf/focs/AhmadianNSW17,DBLP:journals/siamcomp/LiS16}. Unfortunately, these algorithms often do not scale to  large modern datasets requiring researchers to turn to parallel~\cite{kmeansparallel},  distributed~\cite{distributedkmeans}, and streaming methods. In the latter model, points arrive one at a time and the goal is to find algorithms that quickly update a small {\em sketch} (or summary) of the input data that can then be used to compute an approximately optimal solution. 

One significant limitation of the classic data stream model is that it ignores the time when a data point arrived; in fact, all of the points in the input are treated with equal significance. However, in practice, it is often important (and sometimes necessary) to restrict the computation to very recent data. This restriction may be due to data freshness---e.g., when training a model on recent events, data from many days ago may be less relevant compared to data from the previous hour. Another motivation arises from legal reasons, e.g., data privacy laws such as the General Data Protection Regulation (GDPR), encourage and mandate that companies not retain certain user data beyond a specified period. This has resulted in many products including a data retention policy~\cite{upadhyay2019sublinear}. Such recency requirements can be modeled by the {\em sliding window} model. Here the goal is to maintain a small sketch of the input data, just as with the streaming model, and then use only this sketch to approximate the solution on the last $w$ elements of the stream. 

Clustering in the sliding window model is the main question that we study in this work. A trivial solution simply maintains the $w$ elements in the window and recomputes the clusters from scratch at each step. We intend to find solutions that use less space, and are more efficient at processing each new element.  In particular, we present an algorithm which uses space linear in $k$, and polylogarithmic in $w$, but still attains a constant factor approximation.

\paragraph{Related Work}
\label{sec:related}
{\noindent \emph{Clustering.}}
Clustering is a fundamental problem in unsupervised machine learning and has application in a disparate variety of settings, including data summarization, exploratory data analysis,  matrix approximations and outlier detection~\cite{jain2010data,pmlr-v97-kleindessner19a,malkomes2015fast,oglic2017nystrom}. One of the most studied formulations in clustering of metric spaces is that of finding $k$ centers that minimize an objective consisting of the $\ell_p$ norm of the distances of all points to their closest center. For $p\in\{1,2,\infty\}$ this problem corresponds to $k$-median, $k$-means, and $k$-center, respectively, which are NP-hard, but constant factor approximation algorithms are known~\cite{DBLP:conf/focs/AhmadianNSW17,gonzalez1985clustering,DBLP:journals/siamcomp/LiS16}. Several techniques have been used to tackle these problems at scale, including dimensionality reduction~\cite{makarychev2019performance}, core-sets~\cite{DBLP:conf/aistats/BachemLL18}, distributed algorithms~\cite{bachem2017distributed}, and streaming methods reviewed later. To clarify between Euclidean or general metric spaces, we note that our results work on arbitrary general metric spaces. The hardness results in the literature hold even for special case of Euclidean metrics and the constant factor approximation algorithms hold for the general metric spaces.

{\noindent \emph{Streaming model.}}
Significant attention has been devoted to models for analyzing large-scale datasets that evolve over time. The streaming model of computation is of the most well-known (see~\cite{streams} for a survey) and focuses on defining low-memory algorithms for processing data arriving one item at a time. A number of interesting results are known in this model ranging from the estimation of stream statistics~\cite{AMS, UniqueStream}, to submodular optimization~\cite{KDD14}, to graph problems~\cite{l0sampling, semistreaming, krauthgamer2019almost}, and many others. Clustering is also well studied in this setting, including algorithms for $k$-median, $k$-means, and $k$-center in the insertion-only stream case~\cite{DBLP:conf/aistats/BachemLL18,charikar03better, Guha:2000:CDS:795666.796588}.  

{\noindent \emph{Sliding window streaming model.}}
The sliding window model significantly increases the difficultly of the problem, since deletions need to be handled as well. Several techniques are known, including the exponential histogram framework~\cite{datar2002maintaining} that addresses weakly additive set functions, and the smooth histogram framework~\cite{Braverman07} that is suited for functions that are well-behaved and possesses a sufficiently small constant approximation. Since many  problems, such as $k$-clustering, do not fit into these two categories, a number of algorithms have been developed for specific problems such as submodular optimization~\cite{SlidingDiversityPODS2019,ChenNZ16,epasto2017submodular}, graph sparsification~\cite{crouch2013dynamic}, minimizing the enclosing ball~\cite{wang2019coresets}, private heavy hitters~\cite{upadhyay2019sublinear}, diversity maximization~\cite{SlidingDiversityPODS2019} and linear algebra operations~\cite{braverman2018near}. Sliding window algorithms find also applications in data summarization~\cite{chrysos2019}.

Turning to sliding window algorithms for clustering, for the $k$-center problem Cohen et al.~\cite{cohen2016diameter} show a $(6+\epsilon)$-approximation using $O(k \log \Delta)$ space and per point update time of $O(k^2 \log \Delta)$, where $\Delta$ is the spread of the metric, i.e. the ratio of the largest to the smallest pairwise distances.  
For $k$-median and $k$-means,~\cite{DBLP:conf/soda/BravermanLLM16} give constant factor approximation algorithms that use $O(k^3 \log^6 w)$ space and per point update time of $O(poly(k, \log w))$.\footnote{We note that the authors assume that the cost of any solution is polynomial in $w$. We chose to state our bounds explicitly, which introduces a dependence on the ratio of the max and min costs of the solution.}

Their bound is polylogarithmic in $w$, but {\em cubic} in $k$, making it impractical unless $k \ll w$.\footnote{We note here that in some practical applications $k$ can be large. For instance, in spam and abuse~\cite{sheikhalishahi2015fast}, near-duplicate detection~\cite{hassanzadeh2009framework} or reconciliation tasks~\cite{rostami2018interactive}.} In this paper we improve their bounds and  give a simpler algorithm with only linear dependency of $k$. Furthermore we show experimentally (Figure~\ref{fig:soda} and Table~\ref{table:soda16}) that our algorithm is faster and uses significantly less memory than the one presented in~\cite{DBLP:conf/soda/BravermanLLM16} even with very small values $k$ (i.e.,  $k\geq 4$). In a different approach, \cite{youn2018efficient} study a variant where one receives points in batches and uses heuristics to reduce the space and time. Their approach does provide approximation guarantees but it applies only to the Euclidean k-means case. Recently, ~\cite{gayen2018new} studied clustering problems in the distributed sliding window model, but these results are not applicable to our setting.

The more challenging fully-dynamic stream case has also received attention~\cite{braverman2017clustering,hu2018nearly}. Contrary to our result for the sliding window case, in the fully-dynamic case, obtaining a $\tilde \O(k)$ memory, low update time algorithm, for the {\it arbitrary} metric $k$-clustering case with general $\ell_p$ norms is an open problem. For the special case of $d$-dimensional Euclidean spaces for $k$-means, there are positive results---\cite{hu2018nearly} give $\tilde O(k d^4)$-space core-set with $1+\epsilon$ approximation. 

Dynamic algorithms have also been studied in a consistent model ~\cite{cohen2019fully,lattanzi2017consistent}, but there the objective is to minimize the number of changes to the solution as the input evolves, rather than minimizing the approximation ratio and space used.  Finally, a relaxation of the fully dynamic model that allows only a limited number of deletions has also been addressed~\cite{ginart2019making,mirzasoleiman2017deletion}. The only work related to clustering is that of submodular maximization ~\cite{mirzasoleiman2017deletion} which includes exemplar-based clustering as a special case.

\paragraph{Our Contributions}
\label{sec:contrib}
We simplify and improve the state-of-the-art of $k$-clustering sliding window algorithms, resulting in lower memory algorithms.  Specifically, we:
\begin{itemize}\itemsep=0in\vspace{-0.0in}
\item Introduce a simple new algorithm for $k$-clustering in the sliding window setting (Section~\ref{sec:algo}). The algorithm is an example of a more general technique that we develop for minimization problems in this setting. (Section \ref{sec:algorithm}).
\item Prove that the algorithm needs space linear in $k$ to obtain a constant approximate solution (Theorem~\ref{th:algo-main-th}), thus improving over the best previously known result which required $\Omega(k^3)$ space.
\item Show empirically that the algorithm is orders of magnitude faster, more space efficient, and more accurate than previous solutions, even for small values of $k$ (Section~\ref{sec:experiments}).  
\end{itemize}\vspace{-0.15in}

\section{Preliminaries}
\label{sec:preliminaries}

Let $X$ be a set of arbitrary points, and $\dist: X \times X \rightarrow \mathbb{R}$ be a distance function. We assume that $(X,\dist)$ is an arbitrary metric space, that is, $\dist$ is non-negative, symmetric, and satisfies the triangle inequality. 
For simplicity of exposition we will make a series of additional assumptions, \ifappendix
in Appendix~\ref{app:assumption},
\else
 in supplementary material, 
\fi we explain how we can remove all these assumptions.
We assume that the distances are normalized to lie between $1$ and $\Delta$. 
We will also consider weighted instances of our problem where, in addition, we are given a function $\weight: X\rightarrow \mathbb{Z}$ denoting the multiplicity of the point.

The $k$-clustering family of problems asks to find a set of $k$ cluster centers that minimizes a particular objective function. For a point $x$ and a set of points $Y = \{y_1, y_2, \ldots, y_m\}$, we let  $d(x, Y) = \min_{y \in Y} d(x, y)$, and let $\cl(x, Y)$ be the point  that realizes it, $\arg\min_{y \in Y} d(x, y).$ The cost of a set of centers $\centers$ is: $\cost_p(X, \centers) =  \sum_{x \in X} d^p(x, \centers)$. 
Similarly for weighted instances, we have $\cost_p(X, \weight, \centers) =  \sum_{x \in X} \weight(x) d^p(x, \centers)$. %

Note that for $p=2$, this is precisely the \textsc{$k$-Medoids} problem.\footnote{In the Euclidean space, if the centers do not need to be part of the input, then setting $p=2$ recovers the \textsc{$k$-Means} problem.} For $p=1$, the above encodes the \textsc{$k$-Median} problem. When $p$ is clear from the context, we will drop the subscript. We also refer to the optimum cost for a particular instance $(X,d)$ as $\opt_p(X)$, and the optimal clustering as $\centers^*_p(X) =  \{\cen_1^*, \cen_2^*, \ldots, \cen_k^*\}$ , shortening to $\centers^*$ when clear from context.  Throughout the paper, we assume that $p$ is a constant with $p\ge 1$. 

While mapping a point to its nearest cluster is optimal, any map $\mu: X \rightarrow X$ will produce a valid clustering.  In a slight abuse of notation we extend the definition of $\cost_p$ to say $\cost_p(X, \mu) = \sum_{x \in X} d(x, \mu(x))^p.$ 

In this work, we are interested in algorithms for sliding window problems, we refer to the window size as $w$ and to the set of elements in the active window as $W$, and we use $n$ for the size of the entire stream, typically $n \gg w$. We denote by $X_t$ the $t$-th element of the stream and by $X_{[a,b]}$ the subset of the stream from time $a$ to $b$ (both included). For simplicity of exposition, we assume that we have access to a lower bound $m$ and upper bound $M$ of the cost of the optimal solution in any sliding window.\footnote{These assumptions are not necessary.
\ifappendix
  In section~\ref{sect:app-exp}, we explain how we estimate them in our experiments and in appendix~\ref{app:assumption} we show how we can remove them.
\else
 In the supplementary material, we explain how we estimate them in our experiments and how from a theoretical perspective we can remove the assumptions.
\fi
} 

We use two tools repeatedly in our analysis. The first is the relaxed triangle inequality. For $p \geq 1$ and any $x,y,z \in X$, we have:
$d(x,y)^p\leq 2^{p-1} (d(x,z)^p + d(z,y)^p).$ The second is the fact that the value of the optimum solution of a clustering problem does not change drastically if the points are shifted around by a small amount. This is captured by Lemma~\ref{lem:cmap} which was first proved in \cite{Guha:2000:CDS:795666.796588}. For completeness we present its proof in 
\ifappendix
 Appendix~\ref{app:prelim}.
\else 
 the supplementary material.
\fi 
\begin{lemma}\label{lem:cmap}
Given a set of points $X = \{x_1, \ldots, x_n\}$ consider a multiset $Y = \{y_1, \ldots, y_n\}$ such that $\sum_i d^p(x_i, y_i) \leq \alpha \opt_p(X)$, for a constant $\alpha$. Let $\mathcal{B}^*$ be the optimal $k$-clustering solution for $Y$. Then $\cost_p(X, \mathcal{B^*}) \in O((1+\alpha)OPT_p(X))$.
\end{lemma}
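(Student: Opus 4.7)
The plan is to apply the relaxed triangle inequality (with exponent $p$) twice, using the perturbation bound $\sum_i d^p(x_i,y_i) \le \alpha\,\opt_p(X)$ to carry the cost back and forth between $X$ and $Y$. The key observation is that $\mathcal{B}^*$ is optimal for $Y$, so any other set of $k$ centers, in particular the optimal centers $\centers^*$ for $X$, only makes the $Y$-cost larger; this lets us control $\cost_p(Y,\mathcal{B}^*)$ in terms of $\opt_p(X)$.

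First I would bound $\cost_p(Y,\mathcal{B}^*)$ by $\cost_p(Y,\centers^*)$. For each $i$, $d(y_i,\centers^*)^p \le d(y_i,\cl(x_i,\centers^*))^p \le 2^{p-1}\bigl(d(y_i,x_i)^p + d(x_i,\centers^*)^p\bigr)$ by the relaxed triangle inequality. Summing over $i$ and using the hypothesis and the definition of $\opt_p(X)$ gives
\[
\cost_p(Y,\mathcal{B}^*)\;\le\;\cost_p(Y,\centers^*)\;\le\;2^{p-1}(\alpha+1)\,\opt_p(X).
\]

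Next I would transfer this back to $X$ under the same clustering $\mathcal{B}^*$. For each $i$, $d(x_i,\mathcal{B}^*)^p \le d(x_i,\cl(y_i,\mathcal{B}^*))^p \le 2^{p-1}\bigl(d(x_i,y_i)^p + d(y_i,\mathcal{B}^*)^p\bigr)$. Summing and plugging in both the hypothesis and the bound from the previous step yields
\[
\cost_p(X,\mathcal{B}^*)\;\le\;2^{p-1}\bigl(\alpha\,\opt_p(X)+\cost_p(Y,\mathcal{B}^*)\bigr)\;\le\;2^{p-1}\bigl(\alpha + 2^{p-1}(\alpha+1)\bigr)\opt_p(X),
\]
which is $O((1+\alpha)\opt_p(X))$ since $p$ is a constant. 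There is no real obstacle here; the only subtlety is being careful that the map $y_i\mapsto\cl(x_i,\centers^*)$ used in the first inequality, and $x_i\mapsto\cl(y_i,\mathcal{B}^*)$ used in the second, are suboptimal but valid assignments, which suffices to upper bound the true nearest-center cost. The two applications of the relaxed triangle inequality are what absorb the factor $2^{p-1}$, and the constant becomes $2^{2p-2}(\alpha+1) + 2^{p-1}\alpha$.
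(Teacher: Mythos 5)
Your proof differs from the paper's at exactly one step, and that step is where the gap lies. You write $\cost_p(Y,\mathcal{B}^*)\le\cost_p(Y,\centers^*)$ on the grounds that $\mathcal{B}^*$ is optimal for $Y$ and $\centers^*$ is ``any other set of $k$ centers.'' But the paper works in the discrete ($k$-medoids-style) setting in which a feasible center set for an instance must be drawn from that instance's points: $\mathcal{B}^*$ minimizes $\cost_p(Y,C)$ over $C\subseteq Y$ with $|C|\le k$, while $\centers^*\subseteq X$, which in general is not a subset of $Y$. So optimality of $\mathcal{B}^*$ does \emph{not} give $\cost_p(Y,\mathcal{B}^*)\le\cost_p(Y,\centers^*)$, and the first displayed inequality in your argument is unjustified in the paper's model. (In a model where centers may be arbitrary points of the metric space, your shorter argument is fine, but that is not what the lemma is being used for.)

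The paper's proof repairs exactly this: instead of comparing $\mathcal{B}^*$ against $\centers^*$ directly, it projects $\centers^*$ onto $Y$, defining $t_j=\cl(\cen_j^*,Y)$ and $T=\{t_1,\dots,t_k\}\subseteq Y$, which \emph{is} a feasible comparison set for $Y$. Optimality then yields $\sum_i d^p(y_i,\mathcal{B}^*)\le\sum_i d^p(y_i,T)$, and one extra relaxed triangle inequality (controlling $d(\centers^*(x_i),t(i))\le d(\centers^*(x_i),y_i)$ via the choice of $t(i)$ as the nearest $Y$-point to $\centers^*(x_i)$) brings the bound back to $\sum_i d^p(y_i,x_i)+d^p(x_i,\centers^*)$. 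Your second step (transferring $\cost_p(Y,\cdot)$ back to $\cost_p(X,\cdot)$ via $d^p(x_i,\mathcal{B}^*)\le 2^{p-1}(d^p(x_i,y_i)+d^p(y_i,\mathcal{B}^*))$) is fine and matches the paper's Equation~\eqref{eqn:tri1}. To fix your proof, replace the comparison set $\centers^*$ with its projection $T\subseteq Y$ and absorb the extra $2^{p-1}$ factor; the resulting constant matches the paper's $2^{2p-1}(1+\alpha)$ for the inner bound.
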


Given a set of points $X$, a mapping $\mu: X \rightarrow Y$, and a weighted instance defined by $(Y, \weight)$, we say that the weighted instance is {\em consistent} with $\mu$, if for all $y\in Y$, we have that $\weight(y) = |\{x\in X | \; \mu(x) = y\}|$. We say it is {\em $\epsilon$-consistent} (for constant $\epsilon \geq 0$), if for all $y\in Y$, we have that $|\{x\in X \;|\; \mu(x) = y\}| \leq \weight(y) \leq (1+\epsilon) |\{x\in X \;|\; \mu(x) = y\}|$.

Finally, we remark that the $k$-clustering problem is NP-hard, so our focus will be on finding efficient approximation algorithms. We say that we obtain an $\alpha$ approximation for a clustering problem if  $\cost_p(X, \centers) \leq \alpha \cdot \opt_p(X)$. The best-known approximation factor for all the problems that we consider are constant ~\cite{DBLP:conf/focs/AhmadianNSW17,DBLP:conf/soda/ByrkaPRST15,DBLP:journals/corr/abs-0809-2554}.
Additionally, since the algorithms work in arbitrary metric spaces, we measure update time in terms of distance function evaluations and use the number of points as space cost (all other costs are negligible).

\section{Algorithm and Analysis}
\label{sec:algorithm}

The starting point of our clustering is the development of efficient sketching technique that, given a stream of points, $X$, a mapping $\mu$, and a time, $\tau$,  returns a weighted instance that is $\epsilon$-consistent with $\mu$ for the points inserted at or after $\tau$. To see why having such a sketch is useful, suppose $\mu$ has a cost a constant factor larger than the cost of the optimal solution. Then we could get an approximation to the sliding window problem by computing an approximately optimal clustering on the weighted instance (see Lemma~\ref{lem:cmap}). 

To develop such a sketch, we begin by relaxing our goal by allowing our sketch to return a weighted instance that is $\epsilon$-consistent with $\mu$ for the {\it entire stream} $X$ as opposed to the substream starting at $X_{\tau}$. Although a single sketch with this property is not enough to obtain a good algorithm for the overall problem, we design a sliding window algorithm that builds multiple such sketches in parallel. We can show that it is enough to maintain a polylogarithmic number of carefully chosen sketches to guarantee that we can return a good approximation to the optimal solution in the active window.

In subsection~\ref{sec:meyerson} we describe how we construct a single efficient sketch. Then, in the subsection~\ref{sec:algo}, we describe how we can combine different sketches to obtain a good approximation. 
All of the missing proofs of the lemmas and the pseudo-code for all the missing algorithms are presented \ifappendix 
 in the appendix.
\else in the supplementary material.
\fi

\subsection{Augmented Meyerson Sketch}
\label{sec:meyerson}

Our sketching technique builds upon previous clustering algorithms developed for the streaming model of computation. Among these, a powerful approach is the sketch introduced for facility location problems by Meyerson~\cite{meyerson2001online}.

At its core, given an approximate lower bound to the value of the optimum solution, Meyerson's algorithm constructs a set $\centers$ of size $O(k \log \Delta)$, known as a {\em sketch}, and a consistent weighted instance, such that, with constant probability, $\cost_p(X, \centers) \in O(\opt_p(X))$. 
Given such a sketch, it is easy to both: amplify the success probability to be arbitrarily close to 1 by running multiple copies in parallel, and reduce the number of centers to $k$ by keeping track of the number of points assigned to each $\cen \in \centers$ and then clustering this weighted instance into $k$ groups. 

What makes the sketch appealing in practice is its easy construction---each arriving point is added as a new center with some carefully chosen probability. If a new point does not make it as a center, it is assigned to the nearest existing center, and the latter's weight is incremented by 1. 

Meyerson algorithm was initially designed for online problems, and then adapted to algorithms in the streaming computation model, where points arrive one at a time but are never deleted. To solve the sliding window problem naively, one can simply start a new sketch with every newly arriving point, but this is inefficient. To overcome these limitations we extend the Meyerson sketch. In particular, there are two challenges that we face in sliding window models: 
\begin{enumerate}\itemsep=0in
\item The weight of each cluster is not monotonically increasing, as points that are assigned to the cluster time out and are dropped from the window. 
\item The designated center of each cluster may itself expire and be removed from the window, requiring us to pick a new representative for the cluster. 
\end{enumerate}

Using some auxiliary bookkeeping we can augment the classic Meyerson sketch to return  a weighted instance that is $\epsilon$-consistent with a mapping $\mu$ whose cost is a constant factor larger than the cost of the optimal solution for the entire stream $X$. More precisely, 

\begin{lemma}\label{lemma:mb}
Let $w$ be the size of the sliding window, $\epsilon\in (0,1)$ be a constant and $t$ the current time. Let $(X,\dist)$ be a metric space and fix $ \gamma \in (0,1)$. The augmented Meyerson algorithm computes
an \emph{implicit} mapping $\mu: X \rightarrow \centers$, and
 an $\epsilon$-consistent weighted instance $(\centers, \widehat{\weight})$ for all substreams $X_{[\tau, t]}$ with $\tau \geq t - w$, such that,  with probability $1- \gamma$, we have: 
$ |\centers| \leq 2^{2p+8} k \log  \gamma^{-1}\log \Delta \qquad {\text{and}} \qquad $
$ \cost_p(X_{[\tau,t]}, \centers)  \le  2^{2p+8} \opt_p(X).$

The algorithm uses space $O( k \log  \gamma^{-1} \log \Delta \log (M/m) (\log M + \log w + \log \Delta))$ and stores the cost of the consistent mapping, $\cost(X, \mu)$, and allows a $1+\epsilon$ approximation to the cost of the $\epsilon$-consistent mapping, denoted by $\widehat{\cost}(X_{[\tau, t]}, \mu)$.
This is the $\epsilon$-consistent mapping that is computed by the augmented Meyerson algorithm.
In section~\ref{sec:preliminaries}, $M$ and $m$ are defined as the upper and lower bounds on the cost of the optimal solution. 
\end{lemma}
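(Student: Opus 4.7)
The plan has four stages. First, I would run a fresh Meyerson online facility-location sketch for each of $O(\log(M/m))$ geometric guesses $\lambda \in \{m, 2m, 4m, \ldots, M\}$ of $\opt_p(X)$, and for each guess spin up $O(\log \gamma^{-1})$ independent copies. Within a single copy, upon receiving $x$ one computes $d = d(x, \centers)$ and opens a new center at $x$ with probability $\min(d^p/\lambda, 1)$; otherwise $x$ is assigned to its closest existing center and that cluster's bookkeeping is updated. The classical Meyerson analysis (adapted to $\ell_p$ via the relaxed triangle inequality, which contributes the $2^{O(p)}$ factor) shows that when $\lambda$ is within a constant factor of $\opt_p(X)/k$, the sketch has at most $O(k \log \Delta)$ centers with induced mapping cost $O(\opt_p(X))$ with constant probability. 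Repetition over the $O(\log \gamma^{-1})$ copies then amplifies the overall success probability to $1 - \gamma$, yielding the stated $|\centers|$ and cost bounds; note that $\cost_p(X_{[\tau,t]}, \centers) \le \cost_p(X, \centers)$ holds trivially since per-point costs are nonnegative.

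Second, the sliding window introduces the two complications noted in the paper, which force us to augment the standard sketch. The designated center $c$ of a cluster may itself expire (its arrival time drops below $t - w$); to handle this I would maintain, alongside each center, a short queue of the most recently assigned points with their timestamps, and promote the freshest still-live one as the new representative whenever $c$ expires. By the relaxed triangle inequality $d(x, c')^p \le 2^{p-1}(d(x,c)^p + d(c,c')^p)$ summed over the cluster, and telescoped across at most $O(\log w)$ promotions per cluster, the cost inflation is at most a constant, which can be absorbed into the $2^{2p+8}$ prefactor.

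Third, for the $\epsilon$-consistent weight tracking I would attach to each center a smooth/exponential-histogram structure over its assignment stream, supporting $(1 + \epsilon)$-approximate count queries over any suffix $X_{[\tau,t]}$; since $\epsilon$ is a constant, each histogram holds a constant number of entries, each of $O(\log w)$ bits for timestamps plus $O(\log \Delta + \log M)$ bits for accumulated distance contributions. Reading $\widehat{\weight}(c)$ off this structure for the queried $\tau$ yields the $\epsilon$-consistency. The exact total $\cost(X, \mu)$ is a running scalar, updated by $d(x, \mu(x))^p$ at each step, while $\widehat{\cost}(X_{[\tau,t]}, \mu)$ is obtained analogously by summing per-cluster weighted-distance histograms and is accurate to within $1 + \epsilon$.

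The main obstacle is the simultaneity requirement: a single sketch must serve $\epsilon$-consistent queries for \emph{every} $\tau \in [t-w, t]$, not just one. This is exactly what necessitates the histogram machinery and forces the promotion argument to bound cost inflation uniformly across all suffixes rather than a single one; in particular the promotion bookkeeping must be oblivious to which $\tau$ is later queried. Tallying space: $O(k \log \Delta)$ centers per copy, each stored with $O(\log M + \log w + \log \Delta)$ bits for IDs, timestamps, cost accumulators, and the constant-size histogram, multiplied by $O(\log \gamma^{-1} \log(M/m))$ copies, matches the claimed bound and completes the proof.
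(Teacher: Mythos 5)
Your Stages 1, 3, and 4 align closely with the paper's construction (geometric guesses over $[m,M]$, $O(\log\gamma^{-1})$ repetitions, smooth-histogram bookkeeping for suffix weights and suffix costs), so I will focus on Stage 2, which contains a genuine gap.

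Your center-replacement mechanism is not sound as described. Keeping ``a short queue of the most recently assigned points'' and promoting ``the freshest still-live one'' does not give any control over $\dist(\cen,\cen')$: the most recently assigned point may be arbitrarily far from the expired center $\cen$, so the replacement need not be an $\epsilon$-replacement and the per-cluster cost can blow up by an unbounded factor in a single promotion. The paper instead maintains, for each center $\cen$, a set of $O(\log_{1+\epsilon}\Delta)$ \emph{concentric geometric shells} around $\cen$ and, in each shell, the last arrival time of any point assigned to $\cen$ from that shell; a query at time $\tau$ returns the representative of the smallest shell whose timestamp is $\ge\tau$. This is what forces $\dist(\cen,\cen')^p \le (1+\epsilon)\dist(\cen,x)^p$ for \emph{every} live $x$ assigned to $\cen$, which is the invariant the cost argument needs.

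Separately, your telescoping argument would not close even if each promotion step were good. Chaining the relaxed triangle inequality $d(x,c')^p \le 2^{p-1}(d(x,c)^p + d(c,c')^p)$ across $O(\log w)$ promotions multiplies the cost by $2^{(p-1)\cdot O(\log w)} = w^{O(p-1)}$, which is polynomial in $w$, not an absorbable constant. The paper sidesteps this entirely: replacements are always taken \emph{relative to the original center} $\cen$ (the internal sketch never mutates $\cen$ in place), so only a single application of the relaxed triangle inequality is ever needed, giving the clean $2^{p+1}$ inflation factor that, multiplied by the base Meyerson factor $2^{p+7}$, yields the claimed $2^{2p+8}$. To repair your proof you would need to (i) replace the recency queue with the geometric-shell structure (or something equivalent guaranteeing $\epsilon$-replacement uniformly over all live points), and (ii) state the replacement bound relative to the original center so the triangle inequality is applied only once rather than telescoped. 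A minor further note: the suffix histograms have $O(\log_{1+\epsilon} w)$ entries, not $O(1)$, though this does not alter the final space bound you report.
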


Note that when $M/m$ and $\Delta$ are polynomial in $w$,\footnote{We note that prior work~\cite{DBLP:conf/soda/BravermanLLM16,cohen2016diameter} makes similar assumptions to get a bound depending on $w$.} the above space bound is $O(k \log \gamma^{-1} \log^3(w)$).

\subsection{Sliding Window Algorithm}
\label{sec:algo}
In the previous section we have shown that we can the Meyerson sketch to have enough information to output a solution using the points in the active window whose cost is comparable to the cost of the optimal computed on the whole stream. However, we need an algorithm that is competitive with the cost of the optimum solution computed solely on the elements in the sliding window. 

We give some intuition behind our algorithm before delving into the details. Suppose we had a good guess on the value of the optimum solution, $\lambda^*$ and imagine splitting the input $x_1, x_2, \ldots, x_t$ into blocks $A_1 = \{x_1,x_2, \ldots, x_{b_1}\}$, $A_2 = \{x_{b_1+1}, \ldots, x_{b_2}\}$, etc. with the constraints that (i) each block has optimum cost smaller than $\lambda^*$, and (ii) is also maximal, that is adding  the next element to the block causes its cost to exceed $\lambda^*$. It is easy to see, that any sliding window of optimal solution of cost $\lambda^*$ overlaps at most two blocks. The idea behind our algorithm is that, if we started an augmented Meyerson sketch in each block, and we obtain a good mapping for the suffix of the first of these two blocks, we can recover a good approximate solution for the sliding window.

We now show how to formalize this idea. 
During the execution of the algorithm, we first discretize the possible values of the optimum solution, and run a set of sketches for each value of $\lambda$.
Specifically, for each guess $\lambda$, we run Algorithm~\ref{alg:pseudo} to compute the $\textrm{AugmentedMeyerson}$ for two consecutive substreams, $A_\lambda$ and $B_\lambda$, of the input stream $X$. (The full pseudocode of $\textrm{AugmentedMeyerson}$ is available in \refappendix{app:AugmentedMeyerson}.)
When a new point, $x$, arrives we check whether the $k$-clustering cost of the solution computed on the sketch after adding $x$ to $B_\lambda$ exceeds $\lambda$. If not, we add it to the sketch for $B_\lambda$, if so we reset the $B_\lambda$ substream to $x$, and rename the old sketch of $B_\lambda$ as $A_\lambda$.  Thus the algorithm maintains two sketches, on consecutive subintervals. Notice that the cost of each sketch is at most $\lambda$, and each sketch is grown to be maximal before being reset. 

We remark that to convert the Meyerson sketch to a $k$-clustering solution, we need to run a $k$-clustering algorithm
 on the weighted instance given by the sketch. Since the problem is NP-hard, let $\algo$ denote any $\rho$-approximate algorithm, such as the one by~\cite{DBLP:journals/corr/abs-0809-2554}. Let $S(Z) = (Y(Z), \weight(Z))$ denote the augmented Meyerson sketch built on a (sub)stream $Z$, with $Y(Z)$ as the centers, and $\weight(Z)$ as the (approximate) weight function. We denote by $\algo(S(Z))$  the solution obtained by running $\algo$ over the weighted instance $S(Z)$.  Let $\acost_p(S(Z), \algo(S(Z)))$ be the estimated cost of the solution $\algo(S(Z))$ over the stream $Z$ obtained by the sketch $S(Z)$.  
 
 We show that we can implement a function $\acost_p$ that operates only on the information in the augmented Meyerson sketch $S(Z)$ and gives a $\beta \in O(\rho)$ approximation to the cost on the unabridged input.
 
\begin{lemma}[Approximate solution and approximate cost from a sketch]\label{lem:aas}
Using an approximation algorithm $\algo$, from the augmented Meyerson sketch $S(Z)$, with probability $\ge 1-\gamma$, we can output a solution $\algo(S(Z))$ and an estimate  $\acost_p(S(Z), \algo(S(Z)))$ of its cost s.t. 
$
\cost_p(Z, \algo(S(Z)))   \le  \acost_p(S(Z), \algo(S(Z)))
  \le \beta(\rho) \cost_p(Z, \opt(Z))
$
for a constant $\beta(\rho) \le 2^{3p+6}\rho$ depending only the approximation factor $\rho$ of $\algo$. 
\end{lemma}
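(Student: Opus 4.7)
The plan is to define the estimator as the natural upper bound coming from a single application of the relaxed triangle inequality combined with $\epsilon$-consistency. Specifically, for each $x \in Z$,
\[
\dist^p(x, \algo(S(Z))) \le 2^{p-1}\bigl(\dist^p(x,\mu(x)) + \dist^p(\mu(x),\algo(S(Z)))\bigr),
\]
and summing over $x$ and grouping points by their image $\mu(x)=y$ suggests setting
\[
\acost_p(S(Z),\algo(S(Z))) \;=\; 2^{p-1}\!\left(\acost_p(Z,\mu) \;+\; \sum_{y\in Y(Z)}\weight(y)\,\dist^p(y,\algo(S(Z)))\right),
\]
both pieces of which the sketch makes available by Lemma~\ref{lemma:mb} (the first via the stored $(1+\epsilon)$-approximation to the mapping cost, the second via the weighted instance). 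This immediately yields the first inequality $\cost_p(Z,\algo(S(Z)))\le \acost_p(S(Z),\algo(S(Z)))$ after using $|\mu^{-1}(y)\cap Z|\le \weight(y)$ and $\cost_p(Z,\mu)\le \acost_p(Z,\mu)$.

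For the upper bound on $\acost_p(S(Z),\algo(S(Z)))$ I would condition on the $1-\gamma$ success event of Lemma~\ref{lemma:mb}, which gives $\cost_p(Z,\mu)\le 2^{2p+8}\opt_p(Z)$ and hence $\acost_p(Z,\mu)\le (1+\epsilon)\,2^{2p+8}\opt_p(Z)$. The mapping term is then directly controlled. For the weighted term, the $\rho$-approximation guarantee of $\algo$ on the weighted instance $(Y(Z),\weight)$ reduces the task to bounding the optimal weighted cost by a candidate, which I would choose to be $\centers^*(Z)$:
\[
\sum_y \weight(y)\,\dist^p(y,\centers^*(Z)) \;\le\; (1+\epsilon)\sum_{x\in Z}\dist^p(\mu(x),\centers^*(Z)) \;\le\; (1+\epsilon)\,2^{p-1}\bigl(\cost_p(Z,\mu)+\opt_p(Z)\bigr),
\]
using $\epsilon$-consistency followed by the relaxed triangle inequality applied in the direction $\dist^p(\mu(x),\centers^*(Z))\le 2^{p-1}(\dist^p(x,\mu(x))+\dist^p(x,\centers^*(Z)))$. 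Substituting the Meyerson bound on $\cost_p(Z,\mu)$ and adding the two pieces of $\acost_p$ yields $\acost_p\le \beta(\rho)\opt_p(Z)$ with $\beta(\rho)=O(\rho\cdot 2^{3p})$, matching the claim.

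The main obstacle I foresee is bookkeeping: three approximation layers---the Meyerson cost guarantee from Lemma~\ref{lemma:mb}, the $\epsilon$-consistency of the weights, and the $\rho$-approximation of $\algo$---interact multiplicatively through the relaxed triangle inequality, so the definition of $\acost_p$ must be just tight enough to keep both directions within the claimed constant. Pinning down $2^{3p+6}\rho$ requires absorbing the additive $+1$ terms into the dominant $2^{2p+8}$ factor and choosing $\epsilon$ sufficiently small. A minor point is that every bound is conditional on the single success event of Lemma~\ref{lemma:mb}, so no extra union bound is needed beyond the $1-\gamma$ already stated.
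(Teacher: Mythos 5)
Your proposal follows essentially the same decomposition as the paper: define $\acost_p$ as the relaxed-triangle-inequality upper bound $2^{p-1}(\text{mapping cost} + \text{weighted-instance cost of the solution})$, get the lower bound for free, and control the upper bound by showing the weighted instance has small optimum. That is exactly the paper's strategy, so the structure is right. Two issues prevent the argument from reaching the claimed constant, however.

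First, when you bound the optimum of the weighted instance via the candidate $\centers^*(Z)$, you implicitly assume $\centers^*(Z)$ is a legal set of centers for the instance on which $\algo$ runs. But $\algo$ is a $k$-clustering (medoid) solver over the weighted point set $Y(Z)$: in a general metric space, its centers must come from $Y(Z)$, not from $Z$. The paper therefore first projects the optimal centers onto $Y(Z)$, incurring an extra $2^{p}$ factor (exactly as in the proof of Lemma~\ref{lem:cmap}, which is why the paper writes the weighted optimum as $\le 2^{2p-1}(\costm(Z)+\opt_p(Z))$ rather than your $2^{p-1}(\cdot)$). Your argument is missing this projection step.

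Second, you invoke the bound $\cost_p(Z,\mu)\le 2^{2p+8}\opt_p(Z)$ from Lemma~\ref{lemma:mb}, but that is the \emph{suffix} bound that already pays for replacing expired centers; it also bounds the nearest-center cost $\cost_p(X_{[\tau,t]},\centers)$, not the mapping cost $\cost_p(Z,\mu)$. For the full stream $Z$, the sketch stores the exact mapping cost $\costm(Z)$ and exact weights, and the relevant bound is the much tighter $\costm(Z)\le 2^{p+5}\opt_p(Z)$ from Lemma~\ref{lem:full_meyerson}; the paper uses these exact quantities in its definition of $\acost_p$ rather than the $(1+\epsilon)$-approximate suffix quantities. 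Using your looser bound pushes the leading term to roughly $2^{p-1}\cdot 2^{2p+8}=2^{3p+7}$ before even accounting for $\algo$, already above $2^{3p+6}$; absorbing additive $+1$'s or shrinking $\epsilon$, as you suggest, does not fix this leading-order discrepancy. With the tight mapping-cost bound and the projection factor restored, the constant bookkeeping does close.
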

 
\begin{footnotesize}
\begin{algorithm}
\small
\caption{\small Meyerson Sketches, $ComputeSketches(X, w,\lambda, m, M, \Delta)$}\label{alg:pseudo}
\begin{algorithmic}[1]
\STATE Input: A sequence of points $X=x_0,x_1,x_2,\dots,x_n$. The size of the window $w$. Cost threshold $\lambda$. A lower bound $m$ and upper bound $M$ of the cost of the optimal solution and upper bound on distances $\Delta$.
\STATE Output: Two sketches for the stream $S_1$ and $S_2$.
\STATE  $S_1 \gets \textrm{AugmentedMeyerson}(\emptyset, w, m, M, \Delta)$; $S_2 \gets \textrm{AugmentedMeyerson}(\emptyset, w, m, M, \Delta)$\; 
\STATE  $A_{\lambda} \gets \emptyset$; $B_{\lambda} \gets \emptyset$\; (Recall that $A_{\lambda}, B_{\lambda}$ are sets and $S_1$ and 
$S_2$ the corresponding sketches. Note that the content of the sets is not stored explicitly.)
  \FOR{$x \in X$}{
   \STATE Let $S_{temp}$ be computed by $\textrm{AugmentedMeyerson}(B_{\lambda}\cup \{x\}, w, m, M, \Delta)$ . (Note: it can be computed by adding $x$ to a copy of the sketch maintained by $S_2$) 
    \IF{$\acost_p(S_{temp}, \algo(S_{temp})) \leq \lambda$}{
      \STATE Add $x$ to the stream of the sketch $S_2$.\; ($B_{\lambda} \gets B_{\lambda}\cup \{x\},\; S_2 \gets \textrm{AugmentedMeyerson}(B_{\lambda}, w, m, M, \Delta)$) 
    } \ELSE {
      \STATE $S_1 \gets S_2$; $S_2 \gets \textrm{AugmentedMeyerson}(\{x\}, w, m, M, \Delta)$.\; ($A_{\lambda} \gets B_{\lambda}$; $B_{\lambda} \gets \{x\}$)
    }
    \ENDIF
  }
  \ENDFOR
  \STATE Return ($S_1, S_2$, and start and end times of $A_\lambda$ and $B_\lambda$)
\end{algorithmic}
\end{algorithm}
\end{footnotesize}

\paragraph{Composition of sketches from sub-streams}
Before presenting the global sliding window algorithm that uses these pairs of sketches, we introduce some additional notation. Let $S(Z)$ be the augmented Meyerson sketch computed over the stream $Z$. Let $\suffix_\tau(S(Z))$ denote the sketch obtained from a sketch $S$ for the points that arrived after $\tau$. This can be done using the operations defined in \refappendix{app:suffix-sketch}. 

We say that a time $\tau$ is contained in a substream $A$ if $A$ contains elements inserted on or after time $\tau$. Finally we define $A_\tau$ as the suffix of $A$ that contains elements starting at time $\tau$.
Given two sketches $S(A)$, and $S(B)$ computed over two disjoint substreams $A,B$, let $S(A) \cup S(B)$ be the sketch obtained by joining the centers of $S(A)$ and $S(B)$ (and summing their respective weights) in a single instance. We now prove a key property of the augmented Meyerson sketches we defined before. 
\begin{lemma}[Composition with a Suffix of stream] \label{lem:suffix-composition}
 Given two substreams $A$,$B$ (with possibly $B=\emptyset$) and a time $\tau$ in $A$, let $\algo$ be a constant approximation algorithm for the $k$-clustering problem. Then if $\opt_p(A) \le O(\opt_p(A_\tau \cup B)$, then, with probability $\ge 1-O(\gamma)$, we have  $f_p(A_\tau \cup B, \algo(\suffix_\tau(S(A)) \cup S(B))) \le O(\opt_p(A_\tau \cup B))$.
\end{lemma}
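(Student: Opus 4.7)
My plan is to stitch together the per-sketch guarantees of Lemma~\ref{lemma:mb} with the per-sketch approximation quality of Lemma~\ref{lem:aas}, using the relaxed triangle inequality twice to transfer cost bounds between the original points $A_\tau \cup B$ and the merged weighted instance $\suffix_\tau(S(A)) \cup S(B)$.

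First I would apply Lemma~\ref{lemma:mb} to each of the two sketches. From $\suffix_\tau(S(A))$ I obtain centers $\centers_A$ together with an implicit mapping $\mu_A : A_\tau \to \centers_A$ whose assignment cost is at most $2^{2p+8}\opt_p(A)$, and invoking the hypothesis $\opt_p(A) \le O(\opt_p(A_\tau \cup B))$ turns this into $O(\opt_p(A_\tau \cup B))$. From $S(B)$ I similarly obtain $\mu_B : B \to \centers_B$ with cost at most $2^{2p+8}\opt_p(B) \le 2^{2p+8}\opt_p(A_\tau \cup B)$ trivially. Gluing these into a single mapping $\mu$ on $A_\tau \cup B$ gives $\cost_p(A_\tau \cup B, \mu) \le O(\opt_p(A_\tau \cup B))$, and the union $\suffix_\tau(S(A)) \cup S(B)$ is an $\epsilon$-consistent weighted instance on $\centers_A \cup \centers_B$ with respect to $\mu$.

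Let $\mathcal{K} := \algo(\suffix_\tau(S(A)) \cup S(B))$. Applying the relaxed triangle inequality pointwise and then summing, together with the bound $\widehat{\weight}(y) \ge |\mu^{-1}(y)|$ from $\epsilon$-consistency, yields
\[ \cost_p(A_\tau \cup B, \mathcal{K}) \le 2^{p-1}\bigl(\cost_p(A_\tau \cup B, \mu) + \cost_p(\centers_A \cup \centers_B, \widehat{\weight}, \mathcal{K})\bigr). \]
By Lemma~\ref{lem:aas}, the second term is at most $\beta(\rho)\,\opt_p(\centers_A \cup \centers_B, \widehat{\weight})$, which I would in turn bound by plugging the true optimum $\centers^*(A_\tau \cup B)$ into the weighted instance: another application of the relaxed triangle inequality, combined with $\widehat{\weight}(y) \le (1+\epsilon)|\mu^{-1}(y)|$, gives $\cost_p(\centers_A \cup \centers_B, \widehat{\weight}, \centers^*) \le (1+\epsilon)\,2^{p-1}\bigl(\cost_p(A_\tau \cup B, \mu) + \opt_p(A_\tau \cup B)\bigr) = O(\opt_p(A_\tau \cup B))$. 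Chaining the three inequalities gives the desired $O(\opt_p(A_\tau \cup B))$ bound on $\cost_p(A_\tau \cup B, \mathcal{K})$.

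The main bookkeeping obstacle will be moving cleanly between three closely related quantities---the unweighted cost on $A_\tau \cup B$, the weighted cost under the exact multiplicities of $\mu$, and the weighted cost under the $\epsilon$-approximate weights $\widehat{\weight}$ on which Lemma~\ref{lem:aas} operates. Each substitution costs at most a factor of $(1+\epsilon)$, and all these constants collapse into the final $O(\cdot)$. The claimed success probability follows from a union bound over the two invocations of Lemma~\ref{lemma:mb} and the one invocation of Lemma~\ref{lem:aas}, each failing with probability at most $\gamma$, giving an overall failure probability of $O(\gamma)$ as required.
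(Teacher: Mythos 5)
Your proposal follows essentially the same route as the paper's own proof: glue the two per-sketch mappings into a single mapping $\mu$ on $A_\tau\cup B$, observe that $\suffix_\tau(S(A))\cup S(B)$ is $\epsilon$-consistent with $\mu$ and that $\mu$'s moving cost is $O(\opt_p(A_\tau\cup B))$ (using the hypothesis for the $A$ part), and then transfer costs between the original points and the composite weighted instance with two applications of the relaxed triangle inequality together with the $\rho$-approximation guarantee of $\algo$. Two small inaccuracies worth fixing in a write-up, neither of which breaks the argument: the bound $\cost_p(\centers_A\cup\centers_B,\widehat{\weight},\mathcal{K})\le\beta(\rho)\,\opt_p(\centers_A\cup\centers_B,\widehat{\weight})$ is not an instance of Lemma~\ref{lem:aas} (which is stated for a single augmented Meyerson sketch and its internal cost estimator), but simply the definition of $\algo$ being a $\rho$-approximation run on the composite weighted instance; and the claim that $\opt_p(B)\le\opt_p(A_\tau\cup B)$ holds ``trivially'' actually relies on Lemma~\ref{lem:adding} when centers must come from the point set, incurring a harmless $2^p$ factor that is absorbed into the $O(\cdot)$.
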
 
The main idea of the proof is to show that 
 $\suffix_\tau(S(A)) \cup S(B)$ is $\epsilon$-consistent with a good mapping from $A_\tau \cup B$ and then by using a technique similar to  Lemma~\ref{lem:cmap}  show that we can compute a constant approximation from an $\epsilon$-consistent sketch.

\paragraph{Final algorithm.}

\begin{footnotesize}
\begin{algorithm}
\small
\caption{Our main algorithm. Input: $X,m,M,\Delta$,  approx. factor of $\algo$ ($\beta$) and $\delta$.}
\label{alg:algo-min}
\begin{algorithmic}[1]
\STATE $\Lambda \gets \{m,(1+\delta) m,\dots,2^p\beta(1+\delta)M\}$\;
\FOR  {$\lambda \in \Lambda$}
 \STATE $S_{\lambda, 1}, S_{\lambda, 2} \gets \mathrm{ComputeSketches}(X, w, \lambda, m, M, \Delta)$
\ENDFOR
\STATE {\bf if } {$B_{\lambda^*}=W$ for some $\lambda^*$}
  {\bf then return}  $\algo(S_{\lambda^*, 2})$
%\ENDIF
\STATE $\lambda^*\gets \min(\{\lambda:A_\lambda \not\subseteq W\})$
\STATE $\tau \gets \max(|X|-w, 1)$

\STATE {\bf if } {$W \cap A_{\lambda^*} \neq \emptyset$}
 {\bf then return} $\algo(\suffix_\tau(S_{\lambda^*, 1} ) \cup S_{\lambda^*, 2})$  \\
%\ELSE
\STATE {\bf else}  {\bf  return} $\algo(\suffix_\tau(S_{\lambda^*, 2}))$
%\ENDIF
\end{algorithmic}
\end{algorithm}
\end{footnotesize}

We can now  present the full algorithm in Algorithm ~\ref{alg:algo-min}.  As mentioned before, we run multiple copies of $ComputeSketches$ in parallel, for geometrically increasing values of  $\lambda$. 

For each value of  $\lambda$, we maintain the pair of sketches over the stream $X$.  Finally, we compute the centers using such sketches. If we get lucky, and for the sliding window $W$ there exists a subsequence where $B_{\lambda^*}$ is precisely $W$, we use the appropriate sketch and return $\algo(S_{\lambda^*,2})$. Otherwise, we find the smallest $\lambda^*$ for which $A_\lambda$ is not a subset of $W$. We then use the pair of sketches associated with $A_{\lambda^*}$ and $B_{\lambda^*}$, combining the sketch of the suffix of $A_{\lambda*}$ that intersects with $W$, and the sketch on $B_{\lambda^*}$. 

The main result is that this algorithm provides a constant approximation of the $k$-clustering problem, for any $p\ge 1$, with probability at least $1-\gamma$, using space linear in $k$ and logarithmic in other parameters. The total running time of the algorithm depends on the complexity of $\algo$. Let $T(n,k)$ be the complexity of solving an instance of $k$-clustering with size $n$ points using $\algo$. 

\begin{theorem}\label{th:algo-main-th}
With probability $1-\gamma$, Algorithm~\ref{alg:algo-min}, outputs an $O(1)$-approximation
for the sliding window $k$-clustering problem using space:
$O\big(k \log (\Delta) (\log (\Delta) + \log(w) + \log(M) )$
$\log^2 (M / m) \log(\gamma^{-1} \log(M/m)) \big)$
and total update time 
$O(  T(k \log (\Delta), k)$ $\log^2 (M / m) \log(\gamma^{-1}  \log(M / m))$ $ (\log (\Delta) + \log(w) + \log(M) )$.
\end{theorem}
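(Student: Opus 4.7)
The plan is to argue approximation and resource bounds separately, and the approximation piece rests on picking the right $\lambda^*$ and invoking the composition lemma.

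\textbf{Approximation.} First I would identify a guess $\lambda^* \in \Lambda$ satisfying $\opt_p(W) \le \lambda^* \le O(\opt_p(W))$. Concretely, if $B_\lambda = W$ for some $\lambda$, the algorithm returns $\algo(S_{\lambda,2})$ directly and Lemma~\ref{lem:aas} yields $\cost_p(W,\algo) \le \acost_p \le \lambda \le \beta\,\opt_p(W)$, since the acceptance test for the last point ensured the sketch cost did not exceed $\lambda$ while growing. Otherwise, let $\lambda^*$ be the smallest guess with $A_{\lambda^*} \not\subseteq W$, and set $\lambda' = \lambda^*/(1+\delta)$, so that $A_{\lambda'} \subseteq W$. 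Because $A_{\lambda'}$ was grown maximally, there is a point $x$ (the first point of $B_{\lambda'}$) for which $\acost_p$ on $A_{\lambda'} \cup \{x\}$ exceeded $\lambda'$. Lemma~\ref{lem:aas} then gives $\opt_p(A_{\lambda'} \cup \{x\}) > \lambda'/\beta$, and since $x$ arrived after $A_{\lambda'} \subseteq W$ we also have $x \in W$. Thus $\opt_p(W) \ge \lambda'/\beta$, i.e.\ $\lambda^* \le \beta(1+\delta)\opt_p(W)$. On the other hand, because $B_{\lambda^*}$ passed every acceptance test before becoming $A_{\lambda^*}$, Lemma~\ref{lem:aas} gives $\opt_p(A_{\lambda^*}) \le \lambda^*$.

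\textbf{Invoking the composition lemma.} With the bound $\opt_p(A_{\lambda^*}) \le \lambda^* \le O(\opt_p(W))$ in hand, I would apply Lemma~\ref{lem:suffix-composition} in each of the two remaining cases of Algorithm~\ref{alg:algo-min}. If $W \cap A_{\lambda^*} \ne \emptyset$, set $\tau$ to the start of the window and $A = A_{\lambda^*}$, $B = B_{\lambda^*}$, noting $A_\tau \cup B = W$; the hypothesis $\opt_p(A) \le O(\opt_p(A_\tau \cup B))$ follows from the bounds above, and the lemma yields $\cost_p(W,\algo(\suffix_\tau(S_{\lambda^*,1}) \cup S_{\lambda^*,2})) \le O(\opt_p(W))$. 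If $W \cap A_{\lambda^*} = \emptyset$ then $W \subseteq B_{\lambda^*}$ and the lemma (with $B = \emptyset$ and $A = B_{\lambda^*}$) gives the same conclusion, after noting $\opt_p(B_{\lambda^*}) \le \lambda^* \le O(\opt_p(W))$ by the same argument applied one block later. The case $B_{\lambda^*} = W$ was already handled.

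\textbf{Probability, space, and time.} The above analysis requires Lemmas~\ref{lemma:mb},~\ref{lem:aas}, and~\ref{lem:suffix-composition} to hold simultaneously for all $|\Lambda| = O(\log_{1+\delta}(\beta M/m))$ guesses. Running each sketch with failure probability $\gamma/(C|\Lambda|)$ for a suitable constant $C$ and applying a union bound replaces $\log \gamma^{-1}$ by $\log(\gamma^{-1}\log(M/m))$. The space bound is then immediate: multiply the per-sketch space in Lemma~\ref{lemma:mb} by $2|\Lambda|$, absorbing constants. For update time, each arriving point requires, for every $\lambda$, updating a constant number of augmented Meyerson sketches (each insertion costs $O(\log\Delta + \log w + \log M)$ distance evaluations by the sketch construction) and one invocation of $\algo$ on a weighted instance of size $O(k\log\Delta)$ to evaluate $\acost_p$; summing over $|\Lambda|$ gives the stated bound.

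\textbf{Main obstacle.} The technical heart is verifying the hypothesis of Lemma~\ref{lem:suffix-composition} on the correct side of the acceptance threshold: one must simultaneously use the maximality of the \emph{previous} block $A_{\lambda'}$ (to lower-bound $\opt_p(W)$) and the admission test for $A_{\lambda^*}$ (to upper-bound $\opt_p(A_{\lambda^*})$), both of which are only guaranteed up to the approximation factor $\beta$ of $\acost_p$. Keeping these two $\beta$-factors separate while still concluding $\opt_p(A_{\lambda^*}) = O(\opt_p(W))$ is the delicate step; everything else reduces to combining Lemma~\ref{lemma:mb}, Lemma~\ref{lem:aas}, and a geometric discretization of $\Lambda$.
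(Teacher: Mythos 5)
Your proof takes the same route as the paper: treat the $B_\lambda=W$ case separately, otherwise take $\lambda^*$ from the algorithm's $\min$, lower-bound $\opt_p(W)$ via the maximality of the preceding block $A_{\lambda'}$ and Lemma~\ref{lem:aas}, upper-bound $\opt_p(A_{\lambda^*})$ by $\lambda^*$ from the admission test, invoke Lemma~\ref{lem:suffix-composition} in the two remaining branches, and union-bound over the $O(\log_{1+\delta}(M/m))$ guesses for the resource claims. The overall structure matches.

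Three places where your write-up is off, none of them fatal. First, in the $B_\lambda=W$ case the chain $\acost_p\le\lambda\le\beta\,\opt_p(W)$ contains a false link: there is no reason that the particular $\lambda$ with $B_\lambda=W$ satisfies $\lambda\le\beta\,\opt_p(W)$ (it could simply be a large guess whose block never needed to reset). Fortunately the detour through $\lambda$ is unnecessary: Lemma~\ref{lem:aas} gives $\cost_p(W,\algo(S_{\lambda,2}))\le\acost_p\le\beta\,\opt_p(W)$ directly, which is how the paper argues this case. Second, when you pass from $\opt_p\bigl(A_{\lambda'}\cup\{x\}\bigr)>\lambda'/\beta$ and $A_{\lambda'}\cup\{x\}\subseteq W$ to $\opt_p(W)\ge\lambda'/\beta$, you silently drop a $2^p$ factor; since adding points can reduce the optimum by up to $2^p$ (Lemma~\ref{lem:adding}), the correct conclusion is $\opt_p(W)\ge\lambda'/(2^p\beta)$, which is still $O(1)$ for constant $p$ but should be stated. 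Third, your argument that "$A_{\lambda'}$ was grown maximally" assumes $A_{\lambda'}\neq\emptyset$. This needs to be justified; the paper addresses it by defining $\lambda'$ as the largest guess with $A_{\lambda'}$ \emph{non-empty} and contained in $W$ and invoking $A_m\neq\emptyset$ for the smallest guess. You should state and justify non-emptiness (or adopt the paper's formulation of $\lambda'$) before applying the "previous block was maximal" argument.
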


We remark that if $M$ and $\Delta$ are polynomial in $w$, then the total space is $O(k \log^4 w \log (\log w/\gamma))$ and the total update time is $O(T(k \log w, k) \log^3(w)  \log (\log w/\gamma))$. The main component in the constant approximation factor of Theorem~\ref{th:algo-main-th} statement comes from the $2^{3p+5}\rho$ approximation for the insertion-only case~\cite{lattanzi2017consistent}. Here $p$ is the norm, and $\rho$ is the offline algorithm factor. Given the composition operation in our analysis in addition to applying triangle inequality and some other steps, we end up with an approximation factor $\approx 2^{8p+6}\rho$. We do not aim to optimize for this approximation factor, however it could be an interesting future direction.

\section{Empirical Evaluation}
\label{sec:experiments}

We now describe the methodology of our  empirical evaluation before providing our experiments results.
We report only the main results in the section, more details on the experiments and results are \ifappendix 
 in Appendix~\ref{sect:app-exp}. \else 
 in supplementary material.
 \fi
Our code is available {open-source} on github\footnote{\url{https://github.com/google-research/google-research/tree/master/sliding_window_clustering/}}. All datasets used are {\em publicly-available}.

\textbf{Datasets.}
We used 3 real-world datasets from the UCI Repository~\cite{Dua:2017} that have been used in previous experiments on $k$-clustering for data streams settings: SKINTYPE~\cite{bhatt2010skin}, $n=245057, d=4$,  SHUTTLE,  $n=58000, d=9$, and COVERTYPE~\cite{blackard1999comparative},  $n=581012, d=54$. Consistent with previous work, we stream all points in the natural order (as they are stored in the dataset). 
We also use 4 publicly-available synthetic dataset from~\cite{ClusteringDatasets} (the S-Set series) that have ground-truth clusters. We use 4 datasets (s1, s2, s3, s4) that are increasingly harder to cluster and have each $k=15$ ground-truth clusters. Consistent with previous work, we stream the points in random order (as they are sorted by ground truth in the dataset). 
In all datasets, we pre-process each dataset to have zero mean and unit standard deviation in each dimension.  All experiments use  Euclidean distance, we focus on the  the \textsc{k-Means} objective ($p=2$) which we use as cost. We use $k$-means++~\cite{arthur2007k} as the solver $\algo$ to extract the solution from our sketch.

\textbf{Parameters.} We vary the number of centers, $k$, from $4$ to $40$ and window size, $w$,  from $10{,}000$ to $40{,}000$. We experiment with $\delta  = [0.1, 0.2]$ and set $\epsilon = 0.05$ (empirically the results are robust to wide settings of $\epsilon$).

\textbf{Metrics.}
We focus on three key metrics: cost of the clustering, maximum space requirement of our sketch, and average running time of the update function. To give an implementation independent view into space and update time, we report as space usage the number of points stored, and as update time the number of distance evaluations. All of the other costs are negligible by comparison.

\begin{figure}
\begin{center}
\includegraphics[width=0.40\textwidth,keepaspectratio]{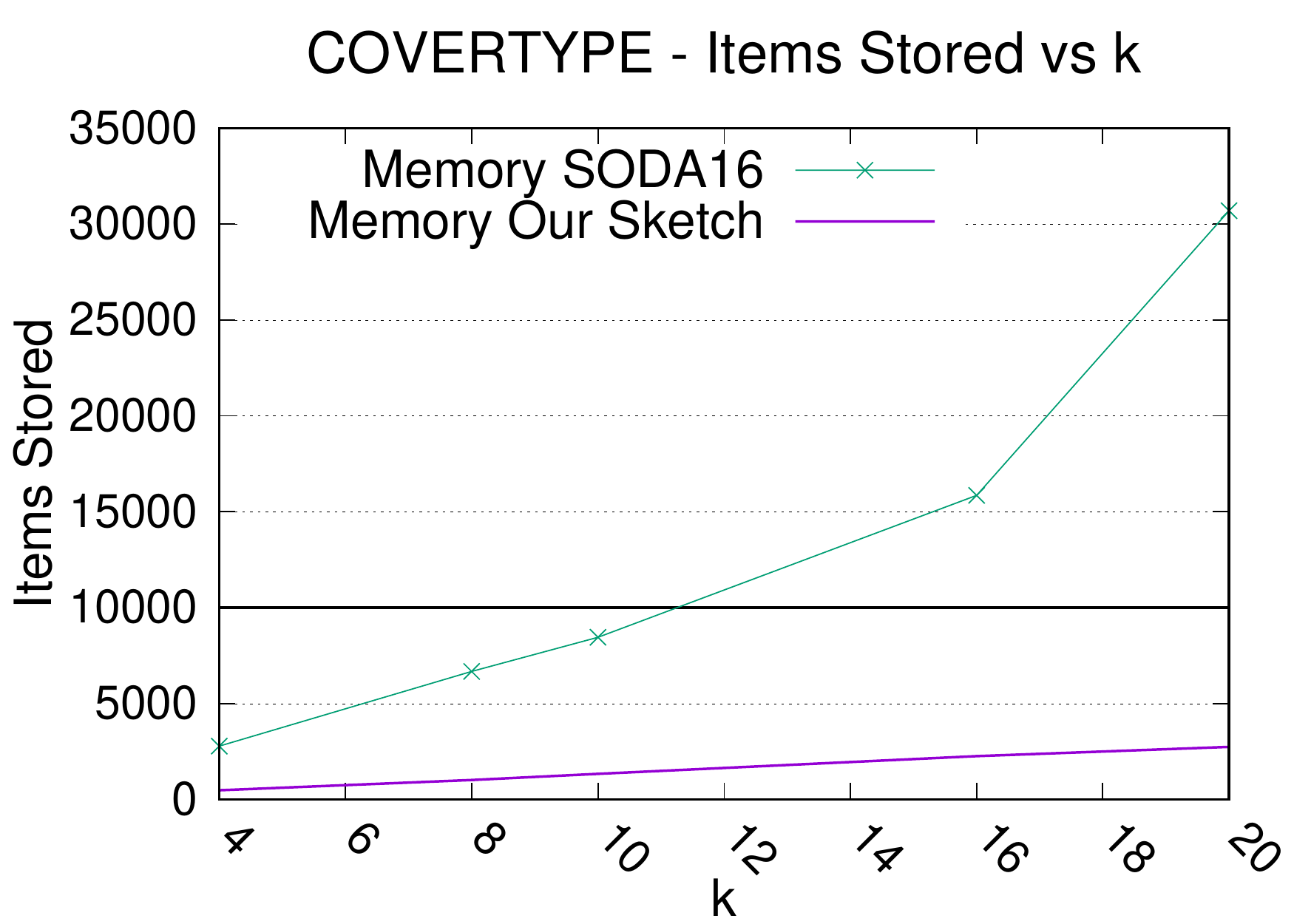}
\caption{Comparison of the max memory use of SODA16 and our algorithm for $W=10{,}000$. }
\label{fig:soda}
\end{center}
\end{figure}

\textbf{Baselines.} We consider the following baselines.\\
\textbf{Batch K-Means++}:  We use $k$-means++ over the entire window as a proxy for the optimum, since the latter is NP-hard to compute. At every insertion, we report the best solution over 10 runs of $k$-means++ on the window.  Observe that this is inefficient as it requires $\Omega(w)$ space and $\Omega(kw)$ run time per update. 
\textbf{Sampling}: We maintain a random sample of points from the active window, and then run $k$-means++ on the sample. This allows us to evaluate the performance of a baseline, at the same space cost of our algorithm. 
\textbf{SODA16}: We also evaluated the only previously published algorithm for this setting in~\cite{DBLP:conf/soda/BravermanLLM16}. 

We note that we made some practical modifications to further improve the performance of our algorithm which we report \ifappendix 
in the Appendix. 
\else 
in the supplementary material.
\fi 

\begin{table*}
\small
\centering
\begin{tabular}{ c | c | c | c | c |}
\textbf{Dataset} & $k$ &
\makecell{\textbf{Space Decr.} \\ \textbf{Factor}} &
\makecell{\textbf{Speed-Up} \\ \textbf{Factor}}  &
\makecell{\textbf{Cost} \\ \textbf{(ratio)}}  \\
\hline
\multirow{2}{*}{COVER}&4&5.23&10.88&$99.5\%$\\
 &16&6.04&13.10&$95.1\%$\\
\hline
\multirow{2}{*}{SHUTTLE}&4&5.07&9.09&$106.8\%$\\
&16&15.32&15.14&$118.9\%$\\
\end{tabular}
\caption{Decrease of space use, decrease in (speed-up) and ratio of mean cost of the solutions of our algorithm vs the SODA16 baseline  ($100\%$ means same cost, $< 100\%$ means a reduction in cost). 
}
\label{table:soda16}
\end{table*}

\paragraph{Comparison with previous work.}
We begin by comparing our algorithm to the previously published algorithm of~\cite{DBLP:conf/soda/BravermanLLM16}.  
The baseline in this paragraph is \textbf{SODA16} algorithm in \cite{DBLP:conf/soda/BravermanLLM16}.
We confirm empirically that the memory use of this baseline already exceeds the size of the sliding window for very small $k$, and that it is significantly slower than our algorithm. Figure~\ref{fig:soda} shows the space used by our algorithm and by the baseline over the COVERTYPE dataset for a $|W|=10{,}000$ and different $k$. We confirm that our algorithm's memory grows linearly in $k$ while the baseline grows super-linearly in $k$ and that for $k>10$ the baseline costs more than storing the entire window.
In Table~\ref{table:soda16} we show that our algorithm is significantly faster and uses less memory than the \textbf{SODA16} already for small values of $k$. 
\ifappendix
In the appendix  
\else
In the supplementary material 
\fi 
we show that the difference is even larger for bigger values of $k$.
Given the inefficiency of the SODA16 baseline, for the rest of the section we do not run experiments with it.

\begin{table*}%{r}{.70\linewidth}
\begin{center}
\begin{tabular}{ c | c | c | c |}
$W$ & $k$ & \textbf{Space}  & \textbf{Time}  \\
\hline
\multirow{3}{*}{40000}&10&$3.5\%$ $ (0.39\%)$&$0.45\%$ $ (0.29\%)$\\
&20&$6.5\% $ $ (0.87\%)$&$0.93\% $ $(0.63\%)$\\
&40&$11.3\%$ $ (1.74\%)$&$1.58\%$ $ (1.23\%)$\\
\end{tabular}
\caption{\small Max percentage of sliding window (length $W$) stored (Space) and median percentage of time (Time) vs. one run of $k$-means++ (stddev in parantesis).}
\label{table:time-space-over-baseline}
\end{center}
\end{table*}

\begin{figure}
\begin{centering}
\includegraphics[width=0.45\textwidth,keepaspectratio]{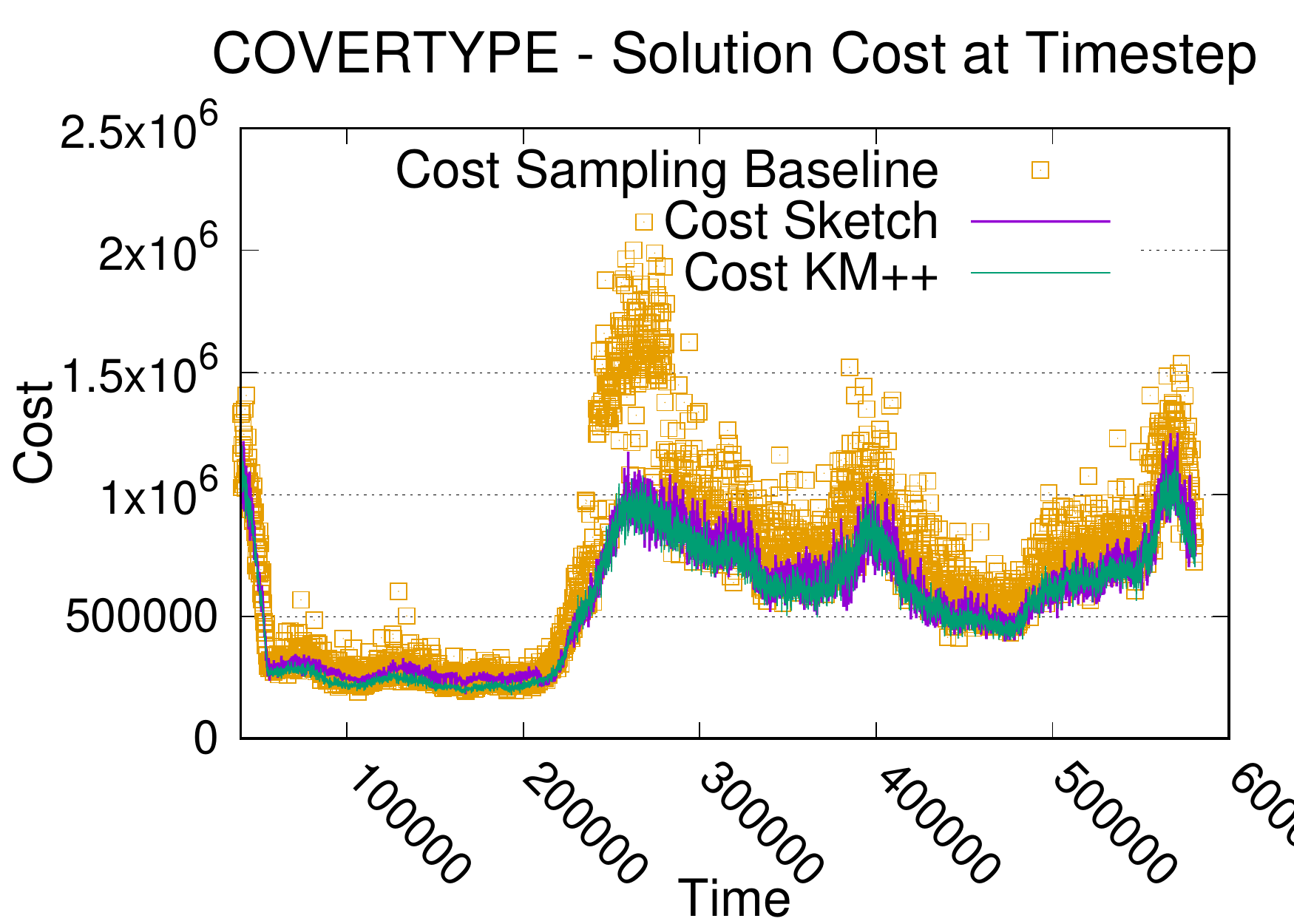}
\caption{Cost of the solution obtained by our algorithm (Sketch) and the two baselines for $k=20$, $|W|=40{,}000$ and $\delta=0.2$ on COVERTYPE.}
\label{fig:evolution-covtype}
\end{centering}
\end{figure} 

\paragraph{Cost of the solution.} 
We now take a look at how the cost of the solution evolves over time during the execution of our algorithm.
In Figure \ref{fig:evolution-covtype} we plot the cost of the solution obtained by our algorithm (Sketch), our proxy for the optimum (KM++) and the sampling baseline (Sampling Baseline) on the COVERTYPE dataset. The sampling baseline is allowed to store the same number of points stored by our algorithm (at the same point in time). We use $k=20$, $|W|=40{,}000$, and $\delta=0.2$. The plot is obtained by computing the cost of the algorithms every $100$ timesteps. Observe that our algorithm closely tracks that of the offline algorithm result, even as the cost fluctuates up and down. Our algorithm's cost is always close to that of the off-line algorithm and significantly better than the random sampling baseline

\paragraph{Update time and space tradeoff.}

We now investigate the time and space tradeoff of our algorithm. As a baseline we look at the cost required simply to recompute the solution using $k$-means++ at every time step. 
In Table~\ref{table:time-space-over-baseline} ($\delta=0.2$) we focus on the COVERTYPE dataset, the other results are similar. Table~\ref{table:time-space-over-baseline} shows the percent of the sliding window data points stored (Space) and the percent of update time (Time) of our algorithm vs a single run of $k$-means++ over the window. 
\ifappendix 
The Table~\ref{table:time-space-over-baseline2} in Appendix~\ref{sect:app-exp} shows that the savings become larger (at parity of $k$) as $|W|$ grows and that we always store a small fraction of the window, providing order-of-magnitude speed ups (e.g., we use $<0.5\%$ of the time of the baseline for $k=10$, $|W|=40{,}000$). Here the baseline is the $k$-means++ algorithm.
\else 
In the supplementary material we show that the savings become larger (at parity of $k$) as $|W|$ grows and that we always store a small fraction of the window, providing order-of-magnitude speed ups (e.g., we use $<0.5\%$ of the time of the baseline for $k=10$, $|W|=40{,}000$). Here the baseline is the $k$-means++ algorithm.
\fi

\paragraph{Recovering ground-truth clusters.}
We evaluated the accuracy of the clusters produced by our algorithm on a dataset with ground-truth clusters using the well known V-Measure accuracy definition for clustering~\cite{rosenberg2007v}. We observe that on all datasets our algorithm performs better than the sampling baseline and in line with the offline $k$-means++. For example, on the s1 our algorithm gets V-Measure of 0.969, while $k$-means++ gets $0.969$ and sampling gets  $0.933$. The full results are available \ifappendix in the appendix~\ref{sect:app-exp}. \else in the supplementary material. \fi

\section{Conclusion}
We present the first  algorithms for the $k$-clustering problem on sliding windows with  space linear in $k$. Empirically we observe that the algorithm performs much better than the analytic bounds, and it allows to store only a small fraction of the input. A natural avenue for future work is to give a tighter analysis, and reduce this gap between theory and practice. 

\section*{Broader Impact}
Clustering is a fundamental unsupervised machine learning problem that lies at the core of multiple real-world applications. In this paper, we address the problem of clustering in a sliding window setting. As we argued in the introduction, the sliding window model allows us to discard old data which is a core principle in data retention policies.

Whenever a clustering algorithm is used on user data it is  important to consider the impact it may have on the users. In this work we focus on the algorithmic aspects of the problem and we do not address other considerations of using clustering that may be needed in practical settings. For instance, there is a burgeoning literature on fairness considerations in unsupervised methods, including clustering, which further delves into these issues. We refer to this literature~\cite{chierichetti2017fair,kearns2019ethical,barocas-hardt-narayanan} for addressing such issues.
\bibliographystyle{plain}
\bibliography{references}

\ifappendix
\newpage
\onecolumn

\appendix
\section{Omitted Proofs from Preliminaries}
\label{app:prelim}

\begin{lemma}[Restatement of Lemma~\ref{lem:cmap}]
 Given a set of points $X = \{x_1, \ldots, x_n\}$ consider a multiset $Y = \{y_1, \ldots, y_n\}$ such that $\sum_i d^p(x_i, y_i) \leq \alpha \opt_p(X)$, for a constant $\alpha$. Let $\mathcal{B}^*$ be the optimal solution for $Y$. Then $\cost_p(X, \mathcal{B^*}) \in O(OPT_p(X))$.
\end{lemma}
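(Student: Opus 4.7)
The plan is to apply the relaxed triangle inequality twice, using $Y$ as a ``bridge'' between $X$ and the cluster centers. Since $\mathcal{B}^*$ is optimal for $Y$, the only leverage we have is to compare it with the performance of $X$'s optimum centers $\centers^*(X)$ on $Y$, and then pay the cost of transferring between $X$ and $Y$.

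First I would bound $\cost_p(Y, \mathcal{B}^*)$. By optimality of $\mathcal{B}^*$ on $Y$ we have $\cost_p(Y, \mathcal{B}^*) \le \cost_p(Y, \centers^*(X))$. For every $i$, letting $\cen^*(x_i) \in \centers^*(X)$ be the optimal center of $x_i$, the relaxed triangle inequality gives
\begin{equation*}
d^p(y_i, \centers^*(X)) \le d^p(y_i, \cen^*(x_i)) \le 2^{p-1}\bigl(d^p(x_i, y_i) + d^p(x_i, \cen^*(x_i))\bigr).
\end{equation*}
Summing over $i$ and using the hypothesis $\sum_i d^p(x_i, y_i) \le \alpha\, \opt_p(X)$ together with $\sum_i d^p(x_i, \cen^*(x_i)) = \opt_p(X)$, I obtain $\cost_p(Y, \mathcal{B}^*) \le 2^{p-1}(1+\alpha)\,\opt_p(X)$.

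Next I would transfer this bound back from $Y$ to $X$. For each $x_i$, let $b_i \in \mathcal{B}^*$ be the center assigned to $y_i$ by $\mathcal{B}^*$. Then
\begin{equation*}
d^p(x_i, \mathcal{B}^*) \le d^p(x_i, b_i) \le 2^{p-1}\bigl(d^p(x_i, y_i) + d^p(y_i, b_i)\bigr) = 2^{p-1}\bigl(d^p(x_i, y_i) + d^p(y_i, \mathcal{B}^*)\bigr).
\end{equation*}
Summing and plugging in the two bounds already established gives
\begin{equation*}
\cost_p(X, \mathcal{B}^*) \le 2^{p-1}\bigl(\alpha\,\opt_p(X) + 2^{p-1}(1+\alpha)\,\opt_p(X)\bigr) = O\bigl(\opt_p(X)\bigr),
\end{equation*}
where the constant depends on $p$ and $\alpha$ only.

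Nothing in this plan is technically deep; the only delicate point is being careful that $Y$ is a multiset (so the indexing by $i$ is well-defined and the assignment from $x_i$ to the same-indexed $y_i$ makes sense), and that the two invocations of the relaxed triangle inequality use compatible routings: first $y_i \to x_i \to \cen^*(x_i)$, then $x_i \to y_i \to b_i$. Once these assignments are fixed, the remainder is a routine summation.
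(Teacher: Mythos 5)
Your plan follows the paper's outline (route through $Y$ via the relaxed triangle inequality, then use optimality of $\mathcal{B}^*$ on $Y$), but you skip the one step that makes the argument valid in the paper's setting. The paper works over arbitrary metric spaces, where cluster centers must come from the point set being clustered (the $k$-medoids formulation; see the remark in the preliminaries and the phrase ``$T$ represents a feasible solution'' in the paper's proof). So the optimal solution $\mathcal{B}^*$ for $Y$ satisfies $\mathcal{B}^* \subseteq Y$, and the competitor you invoke, $\centers^*(X) \subseteq X$, is not a feasible solution for the clustering instance on $Y$. The inequality $\cost_p(Y, \mathcal{B}^*) \le \cost_p(Y, \centers^*(X))$ therefore does not follow from optimality of $\mathcal{B}^*$; it would only be valid under the stronger Euclidean $k$-means assumption that centers may be placed anywhere, which the paper deliberately avoids.

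The paper closes this gap by rounding $\centers^*(X)$ into $Y$: set $t_j = \cl(\cen_j^*, Y)$, the nearest point of $Y$ to the $j$-th optimal center of $X$, and let $T = \{t_1, \dots, t_k\} \subseteq Y$ be the feasible competitor. With $t(i) = \cl(\centers^*(x_i), Y)$, the rounding error is bounded by the witness $y_i$: since $y_i \in Y$ we have $d(\centers^*(x_i), t(i)) \le d(\centers^*(x_i), y_i)$. Chaining $y_i \to \centers^*(x_i) \to t(i)$ costs one extra application of the relaxed triangle inequality and an extra factor $2^p$, giving $\cost_p(Y, T) \le 2^{2p-1}(1+\alpha)\opt_p(X)$ rather than your $2^{p-1}(1+\alpha)\opt_p(X)$. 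Once you replace $\centers^*(X)$ by $T$ in the first step, the remainder of your calculation goes through unchanged (with the slightly worse but still $O(1)$ constant). So the plan is sound and close to the paper's, but as written it is missing this one rounding step, without which the crucial first inequality is unjustified in the discrete metric setting.
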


\begin{proof}
This proof follows through several applications of triangle inequality.  First, observe that:
\begin{equation}
\label{eqn:tri1}
\cost_p(X, \mathcal{B}^*) = \sum_i d^p(x_i, \mathcal{B^*}) \leq 
2^{p-1} \sum_i \Big(d^p(x_i, y_i) + d^p(y_i, \mathcal{B^*})\Big)
\end{equation}

instance $(X,d)$ as $\opt_p(X)$, and the optimal clustering as $\centers^*_p(X) =  \{\cen_1^*, \cen_2^*, \ldots, \cen_k^*\}$ , shortening to $\centers^*$ when clear from context.  Notice that everywhere in the paper, we assume that $p$ is a constant with $p\ge 1$.

Let $\centers^* = \{\cen_1^*, \ldots, \cen_k^*\}$ be the optimal solution for $X$. In a slight abuse of notation, let $\centers^*(x_i)$ represent the optimal center closest to $x_i$. In addition, for every $\cen_j^*$, let $t_j$ be its closest neighbor in $Y$, $t_j = \cl(\cen_j^*, Y)$. Observe that the set $T = \{t_1, \ldots, t_k\}$ represents a feasible solution. 

Therefore:
\begin{equation}
\label{eqn:tri2}
\sum_i d^p(y_i, B^*) \leq d^p(y_i, T).
\end{equation}

Finally, let $t(i) = \cl(\centers^*(x_i), Y)$ represent the point in $Y$ which is closest to the optimal center assigned to $x_i$.

\begin{align*}
\sum_i d^p(y_i, \mathcal{B^*}) &\leq \sum_i d^p(y_i, t(i)) \\
&\leq 2^{p-1} \sum_i \Big(d^p(y_i, \centers^*(x_i)) + d^p(\centers^*(x_i), t(i))\Big) \\
&\leq 2^{p-1} \sum_i 2 \cdot d^p(y_i, \centers^*(x_i)) \\
&\leq 2^{2p-1} \Big(\sum_i d^p(y_i, x_i) +  d^p(x_i, \centers^*(x_i))\Big)\\
&\leq 2^{2p-1} (1+\alpha) OPT_p(X),
\end{align*}
Where the first inequality follows because $T$ is a feasible solution, the second and fourth by the generalized triangle inequality, and the third by the definition of $t(i)$. Putting this together with Equation \ref{eqn:tri1} completes the proof. 
\end{proof}

For completeness we present the next lemma that would be useful to show our main theorem

\begin{lemma}\label{lem:adding}
Given a set of points $X = \{x_1, \ldots, x_n\}$ by adding points to the set $X$ the optimal cost of the $k$-clustering can decrease at most by a  $2^p$.
\end{lemma}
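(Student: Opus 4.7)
}
The statement I am interpreting is: if $X \subseteq X'$, then $\opt_p(X) \le 2^p \cdot \opt_p(X')$, so that enlarging the point set cannot drop the optimal $k$-clustering cost by more than a factor of $2^p$. The claim is only nontrivial when centers must be chosen from the input set (the standard convention for $k$-median/$k$-medoids in a general metric space). Otherwise the cost is monotone in $X$ and the bound is immediate.

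The plan is to build a feasible solution for $X$ from the optimal solution for $X'$ and pay a factor of $2^p$ in the charging argument. Concretely, let $C^* = \{c_1^*, \ldots, c_k^*\} \subseteq X'$ be the optimal $k$ centers for $X'$, and for each $c_j^*$ define $\tilde c_j \in X$ to be a closest point of $X$ to $c_j^*$. The set $\tilde C = \{\tilde c_1, \ldots, \tilde c_k\}$ is then a feasible solution for the instance $X$. For any point $x \in X$, let $c^*(x) \in C^*$ be its closest center in $C^*$ and let $\tilde c(x)$ be the corresponding element of $\tilde C$. By definition of $\tilde c(x)$ we have $d(c^*(x), \tilde c(x)) \le d(c^*(x), x)$, since $x \in X$ is itself a candidate. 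Applying the relaxed triangle inequality (as stated in the preliminaries) yields
\[
 d(x, \tilde c(x))^p \;\le\; 2^{p-1}\bigl(d(x, c^*(x))^p + d(c^*(x), \tilde c(x))^p\bigr) \;\le\; 2^{p-1}\cdot 2\, d(x, c^*(x))^p \;=\; 2^p\, d(x, c^*(x))^p.
\]

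Summing this inequality over all $x\in X$ and using $d(x,\tilde C) \le d(x,\tilde c(x))$ on the left and $d(x, c^*(x)) = d(x, C^*)$ on the right gives $\cost_p(X, \tilde C) \le 2^p\, \cost_p(X, C^*)$. Since $X \subseteq X'$ and the cost function is a sum of non-negative terms we have $\cost_p(X, C^*) \le \cost_p(X', C^*) = \opt_p(X')$, and since $\tilde C \subseteq X$ is a feasible solution for $X$ we have $\opt_p(X) \le \cost_p(X, \tilde C)$. Chaining these three inequalities gives the desired bound $\opt_p(X) \le 2^p\, \opt_p(X')$.

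The only subtlety is the use of the relaxed triangle inequality with the right vertex $c^*(x)$ as the intermediate point, together with the observation that replacing each optimal center by its nearest $X$-point at most doubles each individual distance. There is no real obstacle; the argument is the standard ``snap centers into the input set'' trick and closely mirrors the manipulations already carried out in the proof of Lemma~\ref{lem:cmap}.
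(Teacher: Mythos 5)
Your proof is correct and matches the paper's argument: snap each optimal center of the enlarged set to its nearest point of $X$, observe that this snap distance is bounded by $d(x,\cen^*(x))$ because $x$ itself is a candidate, and then apply the relaxed triangle inequality to pay the factor $2^p$. The only cosmetic difference is that you spell out the final chain of inequalities a bit more explicitly than the paper does.
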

\begin{proof}
Let $Y = \{x_1, \ldots, x_n, y_1,\ldots, y_m\}$ be the set obtained by adding $\{y_1,\ldots, y_m\}$ arbitrary points to X. Let $ \centers^*=\{\cen_1^*, \cen_2^*, \ldots, \cen_k^*\}$ be the optimal set of centers for $Y$ and let $\centers' = \{x_{\cen_1^*},\ldots,x_{\cen_k^*}\}$ the (multi-)set where $x_{\cen_i^*}$ is the closest point to  $\cen_i^*$ in $X$. Furthermore let $\cen^*(x)$ be the closest center in  $ \centers^*$ to a point $x$. Then we have $\opt_p(X) \leq \sum_{x\in X}d(x, \centers')\leq\sum_{x\in X}2^{p-1}(d(x, \cen^*(x))+d(\cen^*(x), x_{\cen^*(x)})\leq \sum_{x\in X}2^{p}d(x, \cen^*(x)) \leq 2^p \sum_{x\in X}d(x,  \centers^*)\leq 2^p \sum_{y\in Y}d(y,  \centers^*) \leq 2^p \opt_p(Y)$.
\end{proof}

\section{Meyerson Sketches}
 
In this section we define the Meyerson sketch algorithm and show some basic facts about the property of any mapping $\mu$ and properties of the Meyerson sketch.

The results presented here follow from well-known results~\cite{charikar03better, lattanzi2017consistent}, but we present them here for completeness. 

\subsection{Meyerson algorithm}

We begin by formally stating the guarantees of the Meyerson sketch, giving a constant factor approximation for any constant $p$.\footnote{We do not optimize constants throughout the paper to keep proofs simpler. We empirically evaluate the algorithm's performance and show the efficacy of our approach.}
The following Lemma, captures the main properties of the sketch.

\begin{lemma}\label{lemma:corollary-of-lemma5o1}
Let $(X,\dist)$ be a metric space and fix $\gamma \in (0,1)$. Then the Meyerson sketch algorithm computes a mapping $\mu: X \rightarrow \centers$, and
a consistent weighted instance $(\centers$,$\weight$), such that $\centers \subseteq X$ and, with probability $1- \gamma$:
$
|\centers| \leq 2^{2p+8} k \log  \nicefrac{1}{\gamma}\log \Delta {\text{;}} \quad 
\cost_p(X, \centers)  \le  2^{p+7} \opt_p(X).$
The algorithm uses at most  $2^{2p+8} k \log  \nicefrac{1}{\gamma}$ $\log \Delta \lceil\log M/m\rceil$ space and in the consistent mapping $\mu$, each point is mapped to a center inserted before it into the stream. 
\end{lemma}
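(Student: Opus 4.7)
The plan is to instantiate the classical online facility-location sketch of Meyerson and amplify it in the standard two-dimensional way (over guesses of $\opt$ and over independent trials), generalized to the $p$-power objective via the relaxed triangle inequality. Concretely, I would describe the sketch as follows. We maintain, in parallel, one sketch for each guess $\lambda \in \{m, 2m, 4m, \dots, 2^{\lceil\log M/m\rceil}m\}$ of the quantity $\opt_p(X)/(k\log\Delta)$ (up to constants), and $\Theta(\log 1/\gamma)$ independent copies for each guess. A single copy processes an arriving point $x$ by computing $d^p := \dist(x,\centers)^p$; it opens $x$ as a new center with probability $\min(d^p/\lambda,\,1)$, and otherwise maps $x$ via $\mu$ to its nearest existing center and increments that center's weight by one. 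By construction, $\centers\subseteq X$ and $(\centers,\weight)$ is consistent with $\mu$, and $\mu(x)$ is always a point that has appeared earlier in the stream.

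For the cost bound I would fix the ``good'' guess $\lambda^*$ (the one that lies within a factor $2$ of $\opt_p(X)/(k\log\Delta)$ up to constants), and bound the expected cost of that copy. Following the potential-style argument of Meyerson (and its $p$-power extension as used in Lattanzi--Vassilvitskii and Charikar--O'Callaghan--Panigrahy), one shows that at each step the expected combined cost, service plus opening, is at most a constant times the incremental optimum cost. The opening contribution is $\lambda^*$ with probability $d^p/\lambda^*$, so its expectation equals $d^p$, matching the service cost we would pay if we did not open; this lets us charge everything to a telescoping comparison with the optimal clustering. Using the relaxed triangle inequality $d^p(x,y)\le 2^{p-1}(d^p(x,z)+d^p(z,y))$ to relate distances to optimal centers produces the $2^{p+7}$ factor on the service cost.

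For the bound on $|\centers|$, I would argue cluster-by-cluster: within each of the $k$ optimal clusters, the expected number of points of that cluster that get promoted to centers is $O(\log\Delta)$, because once a center has been opened in a cluster, the probability of opening another from that cluster is controlled by $d^p/\lambda^*$, and the $d^p$ values inside a fixed optimal cluster can only double $O(\log\Delta)$ times before they are no larger than the intra-cluster radius (distances lie in $[1,\Delta]$). Summing over the $k$ clusters and multiplying by the $\log 1/\gamma$ repetitions and the $\log M/m$ guesses gives the stated space bound $2^{2p+8}k\log(1/\gamma)\log\Delta\lceil\log M/m\rceil$.

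Finally, the overall success probability is achieved by a standard Markov-plus-amplification argument: for the good guess $\lambda^*$, with constant probability a single copy simultaneously satisfies both the $|\centers|\le 2^{2p+8}k\log\Delta$ and $\cost_p\le 2^{p+7}\opt_p(X)$ bounds (apply Markov separately and union-bound); running $\Theta(\log 1/\gamma)$ independent copies and keeping the one of minimum estimated cost drives the failure probability below $\gamma$. The main obstacle I expect is carrying the $2^p$ loss from the relaxed triangle inequality through the potential-function argument cleanly enough to obtain the stated constants; this is mostly bookkeeping but must be done carefully to see where the factor $2^{2p+8}$ versus $2^{p+7}$ split comes from, and to verify that the charging argument still telescopes when distances are raised to the $p$-th power rather than being used directly as in the $p=1$ case.
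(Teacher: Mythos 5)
Your proposal matches the paper's route closely: the paper also builds a single Meyerson sketch that samples each arriving point as a new center with probability proportional to $d(x,\centers)^p$ scaled by a guess of $\opt$, analyzes cost and size cluster-by-cluster via geometric annuli around the optimal centers (yielding the $O(\log\Delta)$ factor), amplifies success probability with $\Theta(\log\gamma^{-1})$ parallel copies (keeping the cheapest one subject to a size cap), and finally iterates over $O(\log(M/m))$ geometric guesses of the optimum and selects the smallest guess whose resulting sketch satisfies explicit size and cost thresholds (Lemmas~\ref{lem:meyerson}, \ref{lem:full_meyerson}, and Algorithm~\ref{alg:guess_meyerson}). The one cosmetic difference is that your cost bound is phrased in the Meyerson online-facility-location ``charging/potential'' style (opening cost $\lambda^*$ with probability $d^p/\lambda^*$ balances service cost), whereas the paper bounds the expected cost directly annulus-by-annulus within each optimal cluster, but these are two presentations of the same argument and give the same constants up to bookkeeping.
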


We now present the pseudocode of the Meyerson sketch algorithm. Fix the parameter $p$ of the problem. Consider for now to have access to a lower bound to the cost $L^p$ of the optimal solution for the problem, such that $L^p \geq \alpha \opt_p$, for some constant $0 < \alpha < 1$. We will remove this assumption at the end of the section.

We start by presenting the code for a single Meyerson sketch as described in Algorithm~\ref{alg:meyerson}. 
It picks each point as a new center with the probability proportional to the distance of the point from the selected centers so far. 
We note that a point that is far from the current centers should be be added to avoid a large cost for that point. 
To obtain the results in Lemma~\ref{lemma:corollary-of-lemma5o1} we will run multiple instances of Algorithm~\ref{alg:meyerson}.

\begin{footnotesize}
\begin{algorithm}[h!]
\small
\caption{\small Single Meyerson sketch}\label{alg:meyerson}
\begin{algorithmic}[1]
\STATE Input: A sequence of points $x_0,x_1,x_2,\dots,x_n$. A finite $p$, a guess $L^p$, an upper bound $\Delta$ of the max distance, and the parameter $k$.
\STATE Output: A mapping $\mu: X \rightarrow S$, and a weighted instance $(S, \weight)$ that is consistent with $X$ and cost of the moving $\costm$.
\STATE $S \leftarrow \emptyset$
\STATE Let $X$ be a set of points and assume $L^p$ is such that $L^p \geq \alpha \opt_p(X)$, for some constant $0 < \alpha < 1$
\STATE $\costm = 0$
\FOR{$x\in X$}
	 \IF{$S = \emptyset$}
	 	\STATE $S \leftarrow \{x\}$
	\ELSE
      	      \STATE With probability $\min\left(\frac{k (1 + \log \Delta)d(x,S)^p}{L^p},1\right)$ add $x$ to $S$ and set $\weight(x)=1$
	      \STATE Otherwise, let $z \leftarrow \argmin_{y\in S}d(x,y)$, set $\weight(z) \leftarrow  \weight(z)+1$, $\costm \leftarrow  \costm + d(x,y)^p$
	\ENDIF
\ENDFOR
\STATE Return $S$, $\weight$, $\costm$
\end{algorithmic}
\end{algorithm}
\end{footnotesize}

Using a single sketch we can show the following property. 

\begin{lemma}\label{lem:meyerson}
For a constant $\alpha \in (0, 1)$ and input $X$, Algorithm~\ref{alg:meyerson} computes a mapping $\mu$ and a consistent weighted instance $(S,\weight)$ with probability at least $\frac12$:
\begin{align*}
& |S| \leq 4 k (1+\log \Delta) \left(\frac{2^{p+3}}{\alpha^p}+1\right) \qquad {\text{and}} \\ 
& \cost_p(X, S)\leq \costm \leq 2^{p+5}\opt_p(X).
\end{align*}
 Furthermore, in the consistent mapping $\mu$ between $X$ and $S$, every point in $X$ is mapped to a point in $S$ inserted before it into the stream. The algorithm computes the cost of the mapping $\costm$.
\end{lemma}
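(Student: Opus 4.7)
My approach is the standard Meyerson-style analysis of online facility location, specialized to $p$-th power costs. Fix an optimum solution $\centers^* = \{c_1^*,\ldots,c_k^*\}$ with clusters $C_1^*, \ldots, C_k^*$ and per-cluster costs $\opt_j = \sum_{x \in C_j^*} d(x,c_j^*)^p$, so that $\sum_j \opt_j = \opt_p(X)$. I would bound $E[|S|]$ and $E[\costm]$ separately, apply Markov's inequality to each at level $4$, and union-bound to obtain the probability $\ge 1/2$ guarantee with the claimed constants. The inequality $\cost_p(X,S) \le \costm$ and the ``each point is mapped to an earlier-inserted center'' property are immediate from the pseudocode: $S$ is monotone in time, each $x$ is either added to $S$ or assigned to its current nearest center (which was therefore inserted before $x$), and $\costm$ is updated by exactly $d(x,\mu(x))^p$ in the second case; since $S$ only grows, the final $d(x, S) \le d(x, \mu(x))$.

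\paragraph{Reducing size to cost.}
For each $x$, the algorithm opens $x$ with probability $\min\{k(1+\log\Delta)\, d(x, S_{<x})^p / L^p,\,1\}$, where $S_{<x}$ denotes the sketch just before $x$ is processed. Summing indicators over the stream and using $d(x, S_{<x})^p = d(x,\mu(x))^p$ whenever $x$ is assigned rather than added, I would obtain
\[
E[|S|] \;\le\; 1 + \frac{k(1+\log\Delta)}{L^p}\, E[\costm].
\]
Combined with $L^p \ge \alpha\, \opt_p(X)$, this reduces the size bound to any expected cost bound $E[\costm] \le C\cdot \opt_p(X)$, absorbing the $1/\alpha^p$ factor that appears in the final expression.

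\paragraph{Expected cost, cluster by cluster.}
For each $C_j^*$, a Markov argument identifies a set of \emph{close} points within distance $r_j := 2(\opt_j/|C_j^*|)^{1/p}$ of $c_j^*$ containing at least $|C_j^*|/2$ points, whose aggregate $p$-th-power distance to $c_j^*$ is at most $\opt_j$. I would split the arrivals in $C_j^*$ into two phases: phase~1 ends the first time a center is opened within the $r_j$-ball around $c_j^*$, and phase~2 is everything after. In phase~2 the relaxed triangle inequality gives $d(x, S)^p \le 2^{p-1}(d(x, c_j^*)^p + r_j^p)$ for every subsequent $x \in C_j^*$, which sums to $O(\opt_j)$ across the cluster. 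For phase~1, each close arrival either opens a center with probability $1$ (immediately terminating the phase) or, when $d(x,S)^p$ is small, opens with probability $\ge k(1+\log\Delta)\, r_j^p / L^p$; a geometric-tail estimate bounds the expected length of phase~1 and the total assignment cost paid inside it by $O(\opt_j)$. Summing across clusters and tracking constants gives $E[\costm] \le 2^{p+3}\,\opt_p(X)$, and Markov's inequality at level $4$ yields the stated $\costm \le 2^{p+5}\opt_p(X)$ with probability $\ge 3/4$; substituting back into the size reduction proves $|S| \le 4k(1+\log\Delta)(2^{p+3}/\alpha^p + 1)$ with probability $\ge 3/4$, and the union bound completes the lemma.

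\paragraph{Main obstacle.}
The delicate step is the phase~1 analysis inside a cluster: one must simultaneously control the expected \emph{number} of centers opened before the first close point is selected and the expected \emph{cost} paid during this burn-in. The factor $(1+\log\Delta)$ inside the opening probability is precisely what makes this work, since it guarantees that either $d(x, S)^p$ is large enough for $x$ to be opened outright (ending phase~1) or the per-close-point opening probability is $\Omega(1/(k(1+\log\Delta)))$, so a product-of-survival-probabilities argument gives phase length $O(k(1+\log\Delta))$ close arrivals. A secondary subtlety is that the $k$ optimal clusters interact through the shared sketch $S$; I would handle this by noting that centers opened in $C_j^*$ only help other clusters by enlarging $S$, so the per-cluster bounds compose additively without double counting.
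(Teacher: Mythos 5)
Your macro-structure matches the paper: bound $E[|S|]$ and $E[\costm]$ per optimal cluster, apply Markov at level $4$ to each, and union-bound to get probability $\ge 1/2$. The observations that $\cost_p(X,S)\le\costm$ (since $S$ only grows) and that $\mu$ maps each point to an earlier-arriving center are also correct and coincide with the paper's remarks. However, your decomposition inside a cluster has a genuine gap, and the paper's proof is structured differently precisely to avoid it.

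You decompose the arrivals of $C_j^*$ into a single close ball of radius $r_j$ and a two-phase time split (phase 1: before any center is opened in the $r_j$-ball; phase 2: after). Phase 2 is fine: once a close center exists, the relaxed triangle inequality handles \emph{all} subsequent arrivals, close or far. The problem is that your phase-1 analysis only controls the cost contributed by \emph{close} arrivals; far arrivals (points of $C_j^*$ at distance $>r_j$ from $c_j^*$) occurring before a close center is opened are never bounded. Such points can lie at distances up to $\Delta$ from the current sketch $S$, and there can be up to $|C_j^*|/2^p$ of them, so their phase-1 assignment cost is not controlled by $\opt_j$ or by $L^p/k$ without further structure. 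This is exactly the role of the geometric-annulus decomposition the paper uses: each cluster is partitioned into $O(\log\Delta)$ radial shells $\{v : 2^j a_i^*\le d(v,c_i^*)<2^{j+1}a_i^*\}$, and the phase-1/phase-2 argument is run \emph{per shell}, bounding the expected pre-opening cost in every shell by $L^p/(k(1+\log\Delta))$. The $(1+\log\Delta)$ factor in the opening probability is then precisely what amortizes against the $O(\log\Delta)$ shells (and not, as you suggest, a device to raise the per-close-point opening probability). With a single ball, far points fall through the cracks.

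There is also a local error in your phase-1 argument for close points: you claim that when $d(x,S)^p$ is small, the opening probability is $\ge k(1+\log\Delta)\,r_j^p/L^p$. That inequality runs the wrong way --- when $d(x,S)^p < r_j^p$, the opening probability $\min\{1, k(1+\log\Delta)\,d(x,S)^p/L^p\}$ is \emph{below} $k(1+\log\Delta)\,r_j^p/L^p$. The correct argument is a telescoping sum (as in the paper): write the expected pre-opening cost as $\sum_i d(v_i,S)^p\prod_{j\le i}(1-p_j)$, observe $d(v_i,S)^p p_i = d(v_i,S)^{2p} k(1+\log\Delta)/L^p$ when $p_i<1$, and telescope to obtain the bound $L^p/(k(1+\log\Delta))$; a ``probability is at least something'' lower bound is neither needed nor available. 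Finally, your reduction $E[|S|]\le 1 + \frac{k(1+\log\Delta)}{L^p}E[\costm]$ is not quite right as stated, because $\costm$ accumulates only the assignment costs of \emph{non-opened} points, whereas the opening indicator of $x$ has mean proportional to $d(x,S_{<x})^p$ regardless of whether $x$ ends up opened. You actually need a bound on $E\bigl[\sum_x d(x,S_{<x})^p\bigr]$, which the paper's per-shell analysis supplies in one pass for both the size and the cost bound.
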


\begin{proof}
Our mapping $\mu$ is defined by $z=\argmin_{y\in S}d(x,y)$ in Algorithm~\ref{alg:meyerson}, note that every point is mapped to a point inserted earlier than it.
We will show that in expectation $S$ has size $k (1+\log \Delta) \left(\frac{2^{2p+1}}{\alpha^p}+1\right)$ and $\cost_p(X, S)\leq 2^{p+3}\opt_p(X)$ . The lemma then follows by an application of Markov inequality. 
Let $\cen_1^*,\cen_2^*,\dots,\cen_k^*$ be the centers in the optimal solution, $C_1^*,C_2^*,\dots,C_k^*$ be the respective clusters and $\calC^*$ be the optimal clustering. Let $\cost_p(C_i^*, \centers^*)$ be the  cost for cluster $i$ and $a_i^*$ be: 
$a_i^* = \left(\frac{\cost_p(C_i^*, \centers^*)}{|C_i^*|}\right)^{\nicefrac1p}.$

Consider the set of points in $C_i^*$ with distance at most $a_i^*$ to the center $\cen_i^*$. Let $D=\{v_1,v_2,\dots\,v_q\}$ be the set of points with distance at most $a_i^*$ to the center $\cen_i^*$. Let $p_i$ be the probability that element $v_i$ is added to $S$, i.e., $p_i=\min(1, \frac{d(v_i,S)^pk(1+\log \Delta)}{L^p})$. With some abuse of notation we say that $p_{q+1}=1$. We want to estimate the cost of the subset of $D$ in the stream before any point in $D$ is added to $S$. For the points $v_1, \ldots, v_{q'}$, $q'\le q$, before the first point added to the set, the probability of them being added to the set is $\frac{d(v_i,S)^pk(1+\log \Delta)}{L^p}<1$ and its cost is $d(v_i,S)^p$. So the expected cost for these points is:
\begin{align*}
& \sum_{i=1}^{q'} (\sum_{j=1}^i d(v_j,S)^p)\left(p_{i+1}\prod_{j=1}^i\left(1 - p_j\right) \right) \\
& = \sum_{i=1}^{q'} d(v_i,S)^p \left(\prod_{j=1}^i\left(1 - p_j\right)\right) \left(\sum_{j=i+1}^{q'+1}p_j\prod_{l=i+1}^{j-1}\left(1 - p_l\right)\right) \\
& \leq   \sum_{i=1}^{q'} d(v_i,S)^p\left(\prod_{j=1}^i\left(1 - p_j\right)\right) \\
& \leq\frac{L^p}{k(1+\log \Delta)} \sum_{i=1}^{q'}\prod_{j=1}^i\left(1 - p_j\right) 
\leq  \frac{L^p}{k(1+\log \Delta)}.
\end{align*} 
The first inequality follows from the fact that we are dividing by a number smaller than $1$ ($\frac{d(v,S)^pk(1+\log \Delta)}{L^p}< 1$).

After we add a point to the set, we can use the generalized triangle inequality to bound the cost of any point $v$ in the set with  $\left(2^{p-1}((a^*_i)^p + d(v,\cen_i^*)^p)\right)$
Therefore,  the probability of adding $v$ as a center is bounded by  $\left(2^{p-1}((a^*_i)^p + d(v,\cen_i^*)^p)\right)$ $k (1 + \log \Delta) / L^p.$

Similarly for any $j > 0$ when we consider the points in $C_i^*$ that have distance between  $2^ja_i^*$ and $2^{j+1} a_i^*$ to the center $\cen_i^*$, the expected cost for all points added before inserting any element to $S$ is at most $\left(\frac{L^p}{k(1 + \log \Delta)}\right).$  After we add a point from this annulus to set $S$, we can use the generalized triangle inequality to bound the cost of any other point $v$ by 
$
\left(2^{p-1}((2^{j+1} a_i^* + d(v,\cen_i^*)^p)\right)\leq
\left(2^{p-1}((2d(v,\cen_i^*))^p + d(v,\cen_i^*)^p)\right).$
Again, the probability of adding this point as a center is 
$\frac{\left(2^{p-1}(2d(v,\cen_i^*))^p + d(v,\cen_i^*)^p\right)k (1 + \log \Delta)}{L^p}.$

Therefore, we can bound the expected number of centers added by:
\begin{align*}
& \sum_{C_i^*}\Bigg(1 +  \log \Delta +\frac{k (1 + \log \Delta)}{L^p}
 \sum_{v\in C_i^*} 2^{p-1}\Big((a^*_i)^p  + (2d(v,\cen_i^*))^p + d(v,\cen_i^*)^p \Big) \Bigg)  \\
& \leq  \sum_{C_i^*} \Bigg(1 + \log \Delta + \frac{k (1 + \log \Delta)}{L^p}
2^{p+1}\sum_{v\in C_i^*} \Big((a^*_i)^p + d(v,\cen_i^*)^p\Big)
\Bigg) \\
& \leq  k(1 +  \log \Delta) + \sum_{C_i^*}\left(k (1 + \log \Delta)2^{p+2}\frac{\cost_p(C_i^*,\centers^p)}{L^p}\right)\\
& = k(1 +  \log \Delta) + \left(k (1 + \log \Delta)2^{p+2}\frac{\opt_p(X)}{L^p}\right)\\
& \leq  \Bigg(1+\frac{2^{p+3}}{\alpha^p}\Bigg)k (1 + \log \Delta).
 \end{align*}

In a similar manner, we can bound the cost by 
\begin{align*}
& \sum_{C_i^*}\Bigg(\frac{L^p}{k(1 + \log \Delta)}(1 + \log \Delta)  +
    \sum_{v\in C_i^*}2^{p-1}\Big((a^*_i)^p + (2d(v,\cen_i^*))^p  + d(v,\cen_i^*)^p\Big)\Bigg)\\
& \le \sum_{C_i^*}\left(\frac{L^p}{k}+ 2^{p+1}\sum_{v\in C_i}\left((a^*_i)^p + d(v,     
    \cen_i^*)^p\right)\right) \\
& \leq (L^p+ 2^{p+2}\opt_p(X)) \leq 2^{p+3}\opt_p(X) 
\end{align*}
Thus the claim follows.
\end{proof}

First step in proving Lemma~\ref{lemma:corollary-of-lemma5o1} using Lemma~\ref{lem:meyerson} is to 
show how to get our guarantees with probability $1-\gamma$. This is done in Algorithm~\ref{alg:full_meyerson}. Lemma~\ref{lem:full_meyerson} states the properties of this algorithm.

\begin{footnotesize}
\begin{algorithm}[h!]
\small
\caption{\small $ComputeMeyerson(X, L^p,\alpha, \gamma, \Delta, p, k )$}\label{alg:full_meyerson}
\begin{algorithmic}[1]
\STATE {\bf Input:} A sequence of points $X$, a lower bound to the optimum $L^p$, a constant $\alpha$ such a that $L_p \ge \alpha \opt_p$, a constant $\gamma$, an upper bound $\Delta$ on the max distance, the parameters $p$ and $k$ of the problem.
 \STATE {\bf Output:} A mapping $\mu: X \rightarrow M$, and a weighted instance $(M, \weight)$ that is consistent with $X$ and cost of the moving $\costm$
\FOR{$i\in[2\log  \gamma^{-1}]$}{
	\STATE $M_i \leftarrow x_0$
	\STATE $\costm_i=0$
	}
\ENDFOR
\FOR{$x\in X_t$}{
	\FOR{$i\in[2\log  \gamma^{-1}]$}{
		\IF{$|M_i|\leq 4 k (1+\log \Delta) \left(\frac{2^{p+3}}{\alpha^p}+1\right)$}{
	                   \IF{$M_i == \emptyset$}
	 	                    \STATE $M_i \leftarrow \{x\}$
	                  \ELSE
      	                             \STATE With probability $\min\left(\frac{k (1 + \log \Delta)d(x,S)^p}{L^p},1\right)$ add $x$ to $M_i$ and set $\weight_i(x)=1$
	                             \STATE Otherwise, let $z \leftarrow \argmin_{y\in S}d(x,y)$, set $\weight_i(z) \leftarrow \weight_i(z)+1$, $\costm_i \leftarrow \costm_i + d(x,y)^p$
	                  \ENDIF	
		}
	        \ENDIF
	}
	\ENDFOR
	}
\ENDFOR
\STATE Let $j$ be the index of the Meyerson sketch of minimum cost $\costm_j$ such that $|M_j|\leq 4 k (1+\log \Delta) \left(\frac{2^{p+3}}{\alpha^p}+1\right)$, if such $j$ does not exist
return $M=\cup_{i=1}^{2\log \gamma^{-1}}M_i$, $\weight_1$, $\infty$
\STATE Extend $\weight_j$ to give weight $0$ to all the points in $M=\cup_{i=1}^{2\log \gamma^{-1}}M_i$ not contained in $M_j$
\STATE Return $M=\cup_{i=1}^{2\log \gamma^{-1}}M_i$, $\weight_j$, $\costm_j$
\end{algorithmic}
\end{algorithm}
\end{footnotesize}

\begin{lemma}\label{lem:full_meyerson}
For a constant $\alpha \in (0, 1)$ and input $X$, Algorithm~\ref{alg:full_meyerson} computes a mapping $\mu$ and a consistent weighted instance $M=\cup_{i=1}^{2\log\gamma^{-1}} (M_i, \weight_i)$ such that with probability at least $1-\gamma$, we have:
\begin{align*}
& |M| \leq 8 k \log  \gamma^{-1}(1+\log \Delta) \left(\frac{2^{p+3}}{\alpha^p}+1\right)\in O(k \log \Delta \log  \gamma^{-1}) \\
& {\text{and}} \qquad \cost_p(X, M)\leq 2^{p+5}\opt_p(X)
\end{align*}
   Furthermore, in the consistent mapping $\mu$ between $X$ and $M$ every point in $X$ is mapped to a point in $M$ inserted before it into the stream. The algorithm computes the cost of the mapping $\costm$.
\end{lemma}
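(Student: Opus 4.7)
The plan is a standard probability amplification argument built directly on top of Lemma~\ref{lem:meyerson}. Algorithm~\ref{alg:full_meyerson} runs $r := 2\log\gamma^{-1}$ independent copies of the single Meyerson sketch (Algorithm~\ref{alg:meyerson}) in parallel, with each copy artificially capped at the size threshold $N := 4 k (1+\log \Delta)(2^{p+3}/\alpha^p+1)$ supplied by Lemma~\ref{lem:meyerson}. The final output is obtained by picking the index $j$ of the cheapest sketch whose size did not exceed $N$ and returning $M=\bigcup_i M_i$ together with $\weight_j$ (extended by $0$ on the other centers) and the recorded cost $\costm_j$.

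The first step is to argue that capping is harmless on success events. Call a copy \emph{good} if it satisfies both conclusions of Lemma~\ref{lem:meyerson}, namely $|S|\le N$ and $\cost_p(X,S)\le \costm \le 2^{p+5}\opt_p(X)$. A good run of the uncapped Algorithm~\ref{alg:meyerson} never hits the cap, so its execution coincides with the capped version; in particular each of the $r$ copies is good with probability $\ge 1/2$, independently across copies. By independence, the probability that all $r$ copies fail is at most $(1/2)^{r}=\gamma^{2}\le\gamma$, so with probability at least $1-\gamma$ at least one copy is good.

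Conditioning on this event, the selection rule picks some $j$ with $|M_j|\le N$ and $\costm_j$ no larger than the $\costm$ of any good copy, so $\costm_j\le 2^{p+5}\opt_p(X)$; combined with the first inequality of Lemma~\ref{lem:meyerson}, this yields $\cost_p(X,M)\le\cost_p(X,M_j)\le\costm_j\le 2^{p+5}\opt_p(X)$, as claimed. For the size, every individual $|M_i|$ is bounded by $N$ because of the cap, so
\[
|M|=\Bigl|\bigcup_{i=1}^{r}M_i\Bigr|\le rN = 8k\log\gamma^{-1}(1+\log\Delta)\Bigl(\frac{2^{p+3}}{\alpha^p}+1\Bigr)\in O(k\log\Delta\log\gamma^{-1}).
\]
Finally, the returned mapping $\mu$ is simply the mapping of copy $j$, and Lemma~\ref{lem:meyerson} guarantees that each point of $X$ is mapped to a center that was inserted into the stream strictly earlier, a property preserved verbatim. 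The recorded $\costm_j$ is exactly the cost of $\mu$ since Algorithm~\ref{alg:meyerson} accumulates $d(x,\mu(x))^p$ every time it assigns a point to an existing center.

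The only real subtlety is that capping per-sketch size at $N$ must not bias the probability $1/2$ success guarantee; this is immediate because a good uncapped execution stays under $N$ by definition, hence is indistinguishable from its capped counterpart. No union bound over both the size and cost events is needed beyond what Lemma~\ref{lem:meyerson} already provides jointly, and no additional distributional assumption on the stream is required.
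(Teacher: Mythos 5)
Your proposal is correct and takes essentially the same route as the paper: run $2\log\gamma^{-1}$ independent copies of the single Meyerson sketch, use independence to show at least one copy succeeds with probability $\ge 1-\gamma$, observe that capping a successful run leaves its execution unchanged so the selection rule finds a good index, and bound $|M|$ trivially by summing the per-copy cap. Your write-up is in fact somewhat more explicit than the paper's (in particular the coupling argument between the capped and uncapped executions, and the inequality $\cost_p(X,M)\le\cost_p(X,M_j)$ from $M_j\subseteq M$), but the underlying decomposition and key lemma are identical.
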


\begin{proof}
As mentioned above,  Lemma~\ref{lem:meyerson} implies that if we construct $2 \log  \gamma^{-1}$ single Meyerson sketches in parallel, with probability in $1-\gamma$, at least one of them gives a constant approximation to the optimum at every point in time and furthermore the single Meyerson contains only $4 k (1+\log \Delta) \left(\frac{2^{p+3}}{\alpha^p}+1\right)$ points.

Now in Algorithm~\ref{alg:full_meyerson} we are almost building $2\log \gamma^{-1}$ single Meyerson sketches, the only difference is that we stop adding points to a single sketch if that becomes too large.  This modification does not change the probability that there exist at least one single sketch that gives a constant approximation to the optimum at every point in time and that contains only $4 k (1+\log \Delta) \left(\frac{2^{p+3}}{\alpha^p}+1\right)$ points.

Thus with probability $1-\gamma$ at least one of the sketches constructed in~\ref{lem:meyerson} gives a constant approximation to the optimum at every point in time. Merging other sketches to this sketch does not affect this property. Furthermore the number of points in each sketch is explicitly bounded by $4 k (1+\log \Delta) \left(\frac{2^{p+3}}{\alpha^p}+1\right)$ so the total number of points in $M$ is bounded by $8 k \log  \gamma^{-1}(1+\log \Delta) \left(\frac{2^{p+3}}{\alpha^p}+1\right)$.

Now we have to prove the existence of a consistent mapping, note that for this we can just use the weighting of a Meyerson sketch with less $4 k (1+\log \Delta) \left(\frac{2^{p+3}}{\alpha^p}+1\right)$ and cost less $2^{p+5}\opt_p(X)$ and assign the weight to all the other points equal to $0$.
\end{proof}

Finally in order to complete the proof of  Lemma~\ref{lemma:corollary-of-lemma5o1}, %~\ref{lem:meyerson},
we need to remove the assumption of knowing a good lower bound for the optimal solution. We do so by trying different guess of the $OPT$ in particular we try for different $L^p$ in $\{m, 2m, 4m, 8m,\dots,$ $2^{\lceil\log M/m\rceil}m\}$. This guarantees that we also run it for a guess $L^p$ such that $L^p\geq \frac{1}{2} \opt_p(X)$. Now we can detect the correct guess by checking the cost of the Meyerson sketches. The pseudo-code of the final algorithm is presented in Algorithm~\ref{alg:guess_meyerson}.

\begin{footnotesize}
\begin{algorithm}[h!]
\small
\caption{\small $SimpleMeyerson(X, m, M, \gamma, \Delta, p, k)$}\label{alg:guess_meyerson}
\begin{algorithmic}[1]
\STATE {\bf Input:} A sequence of points $X$, lower bound $m$ and upper bound $M$ to the optimum, and $\gamma$, an upper bound $\Delta$ on the max distance, the parameters $p$ and $k$ of the problem.
\STATE {\bf Output:} A mapping $\mu: X \rightarrow X'$, and a weighted instance $(X', \weight)$ that is consistent with $X$ with cost of the moving $\costm$
\FOR{$L^p\in \{m, 2m, 4m, 8m,\dots,2^{\lceil\log M/m\rceil}m\}$}{
	\STATE In parallel $ComputeMeyerson(X, L^p,\alpha=1/2, \gamma, \Delta, p, k)$
}
\ENDFOR
\STATE Let $\ell$ be the smallest index (if it exists, otherwise it is an arbitrary index) in $\{m, 2m, 4m, 8m,\dots,2^{\lceil\log M/m\rceil}m\}$ for which the output $ComputeMeyerson$ called with $L_p = \ell$ has size smaller than $8 k \log  \gamma^{-1}(1+\log \Delta) \left(2^{2p+3}+1\right)$ and cost smaller than $2^{p+6}\ell$
\STATE Return the result of the index $\ell$ call for $ComputeMeyerson$.
\end{algorithmic}
\end{algorithm}
\end{footnotesize}

We are now ready to prove the main Lemma of this section.

\begin{proof}[Proof of Lemma~\ref{lemma:corollary-of-lemma5o1}]
The bound on the size of the set and cost follow from the check in Algorithm~\ref{alg:guess_meyerson}. The probability of success is larger than or equal to the probability of success for the $L^p\in \{m, 2m, 4m, 8m,\dots,$ $2^{\lceil\log M/m\rceil}m\}$ such that $L^p\geq \frac{1}{2} \opt_p(X)$. The total space is a result of running $O(\log M/m)$ times $ComputeMeyerson$ calls in parallel and the result in Lemma~\ref{lem:full_meyerson}.
\end{proof}

\section{Augmented Meyerson Sketches}
\label{app:AugmentedMeyerson}
We now show how we can augment the sketch data structure with additional information for clustering in the sliding window model. \\
\subsection{Maintaining Weights.}\label{subsec:MaintainWeight}
As we mentioned above, one of the challenges in adapting the Meyerson sketch to the sliding window setting lies in tracking the weight of a center as points expire. In fact, some of the points mapped initially to a center by $\mu$ may no longer be  part of the sliding window. Formally, given a stream $X = x_1, x_2, \ldots, x_t$, we would like to maintain an estimate of the weight of each center in $\centers$ restricted to the sliding window, i.e., for $\cen \in \centers$, we want to estimate $| \{ x \in W : \mu(x) = \cen\}|$. We denote this quantity by $\weight_\mu(\cen, X_{[t-w, t]})$. 
Maintaining an estimate of such a weight function falls directly into the Smooth Histograms framework introduced by \cite{Braverman07}. We leverage their approach to maintain a $(1 + \epsilon)$ estimate using only $\log _{1+\epsilon} w$ additional overhead per cluster center.
\begin{lemma}\label{lem:estimate-weight}
Fix constant $\epsilon>0$. Using additional space $O\left(|\centers| \log_{1+\epsilon}(w)\right)$, we can extend the Meyerson sketch $\centers, \weight$ with a function $\widehat{\weight} : \centers \times [n] \to \mathbb{Z}$ such that, at every time $t$, for every $ \cen \in \centers$, and every time $\tau$ in the active window: $\weight_\mu(\cen, X_{[\tau, t]}) \leq \widehat{\weight}(\cen,\tau) \leq (1+\epsilon) \weight_\mu(\cen, X_{[\tau, t]})$. 
\end{lemma}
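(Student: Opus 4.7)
The plan is to treat the weight of each center $\cen \in \centers$ independently and apply a standard sliding-window counting data structure to the sequence of ``assignment events'' for $\cen$. Observe that as the stream $X$ arrives, the map $\mu$ produced by the (augmented) Meyerson sketch is determined online: each new point $x_t$ either becomes a new center (adding a new sequence we must start tracking) or is assigned to some existing center $\cen = \mu(x_t)$. Thus for every $\cen \in \centers$ we obtain a monotonically growing set of timestamps $T_\cen \subseteq \{1, \dots, n\}$, and the quantity $\weight_\mu(\cen, X_{[\tau,t]})$ is exactly $|T_\cen \cap [\tau, t]|$, i.e.\ a sliding-window count. Since counting is an additive function over disjoint sub-streams, it falls squarely into the exponential/smooth histogram framework of \cite{datar2002maintaining,Braverman07}.

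Concretely, for each center $\cen$ I would maintain a list of ``buckets'' $B_\cen = ((\tau_1, c_1), (\tau_2, c_2), \dots, (\tau_r, c_r))$ with $\tau_1 > \tau_2 > \dots > \tau_r$, where bucket $i$ represents a contiguous block of assignment events to $\cen$ ending at timestamp $\tau_i$ and of size $c_i$. The invariant I would enforce is geometric growth: $c_{i+1} \leq (1+\epsilon) c_i$ (roughly; one can equivalently use the DGIM rule that each size appears $O(1/\epsilon)$ times). When a point is mapped to $\cen$ at time $t$, I append a new bucket $(t, 1)$ at the front; whenever the invariant is violated, I merge two adjacent buckets of the same (smallest) size into one bucket whose timestamp is that of the more recent one and whose count is their sum. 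Since bucket sizes are bounded by $w$ (only points in the window can contribute), the number of distinct geometric sizes is $O(\log_{1+\epsilon} w)$ and each size has $O(1)$ representatives, so $|B_\cen| = O(\log_{1+\epsilon} w)$, which summed over centers yields the claimed space $O(|\centers| \log_{1+\epsilon} w)$.

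To answer the query $\widehat\weight(\cen, \tau)$, I would locate the oldest bucket $(\tau_j, c_j)$ with $\tau_j \geq \tau$ and return $\widehat\weight(\cen, \tau) = \sum_{i \leq j} c_i$. The lower bound $\widehat\weight(\cen,\tau) \geq \weight_\mu(\cen, X_{[\tau,t]})$ is immediate, since every event counted in $\weight_\mu(\cen, X_{[\tau,t]})$ lies in some bucket with timestamp $\geq \tau$ and hence is included (possibly together with a few older events from the ``straddling'' bucket $j$). For the upper bound, the only over-counting comes from bucket $j$: it contributes $c_j$ while it may contain events before $\tau$. By the geometric invariant, $c_j \leq \epsilon \sum_{i < j} c_i \leq \epsilon\, \weight_\mu(\cen, X_{[\tau,t]})$ (after a routine constant adjustment of $\epsilon$), giving $\widehat\weight(\cen,\tau) \leq (1+\epsilon) \weight_\mu(\cen, X_{[\tau,t]})$. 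Buckets whose entire content has timestamp $< t - w$ are deleted lazily, which does not affect any in-window query.

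The only real subtlety, and the main thing I would verify carefully, is the interaction with the augmented Meyerson sketch: when a center $\cen$ expires from the window (i.e.\ its designated representative point leaves) or is added/re-assigned by the augmentation of Section~\ref{subsec:MaintainWeight}, the corresponding per-center histogram $B_\cen$ must be transferred or discarded consistently with the re-representation, without breaking the invariant $\mu(x) = \cen$ for points already accounted for. Since the histograms depend only on the timestamps of assignments and not on the identity of the representative point, this bookkeeping is mechanical and adds no asymptotic overhead. The error and space bounds follow directly from the analysis above.
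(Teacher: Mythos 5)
Your proof is correct and takes essentially the same approach as the paper's: a per-center exponential-histogram (smooth-histogram) structure over the fixed sequence of assignment events, pruned to keep only geometrically-spaced levels, giving the one-sided $(1+\epsilon)$ overestimate and $O(\log_{1+\epsilon}w)$ space per center. The paper presents the same idea in cumulative form: it stores suffix counts $r_{v,j}$ at checkpoint times $t_{v,j}$, deletes an internal entry whenever $r_{v,l-1}\le(1+\epsilon)r_{v,l+1}$, and answers a query by reading off the single entry with largest $t_{v,j}\le\tau$; your bucket-count version is the per-interval rendering of the same data structure. One small caveat worth fixing: the invariant you state, $c_{i+1}\le(1+\epsilon)c_i$, is not what actually powers the upper-bound step $c_j\le\epsilon\sum_{i<j}c_i$ — what you need is the DGIM-style rule you allude to parenthetically (each bucket size appearing $O(1/\epsilon)$ times, or equivalently the paper's pruning condition), since that is what guarantees the straddling bucket is a small fraction of the newer buckets' total.
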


\begin{proof}

In the sketch, for each point $v\in C$, we maintain a sequence of weights $R_v= (r_{v,1}, r_{v,2},\ldots)$ corresponding to the number of points assigned to $v$ by $\mu$ and a sequence of times $T_v = (t_{v,1}, t_{v,2},\ldots)$. The two sequences are initialized as empty.
We preserve the invariant that for each time $t_{v,j} \in T_v$, the number of points assigned to center $v$ by $\mu$ from time $t_{v,j}$ (inclusive) to the end of the stream is equal to $r_{v,j}$.
To do so, when at time $i$, a point is assigned to $v$ by $\mu$ (recall that the assignment is fixed),  we increase by one all of the weights stored in $R_v$ and then we add a new weight initialized to $1$ to $R_v$ and a new time equal to $i$ to $T_v$.

To reduce the size of the structure, we maintain only the significant changes in weights in $R_v$. More precisely, at any time, we delete $r_{v,l}$ and $t_{v,l}$ for any $l \in [2, \ldots, |R_v|-1]$, if $r_{v,l-1} \leq (1+\epsilon)r_{v,l+1}$. We also renumber the indices to be consecutive. Finally, we  remove the $r_{v,l}$ (and the corresponding $t_{r,l}$) for which the weights are larger than $(1+\epsilon)|W|$.\footnote{Note that we can do this because no point has weight larger than $W$ inside the sliding window at any point in the algorithm.}
Notice that at any time, for each $v$ and $l$, either $r_{v,l} = r_{v,l+1} + 1$ or $r_{v,l} \leq (1+\epsilon)r_{v,l+1}$. In fact, if $r_{v,l}$ and $r_{v,l+1}$ refer to consecutive assignments of points to the center, the first case is true. If $r_{v,l}$ and $r_{v,l+1}$   became consecutive after the removal of a point between them, the latter condition is true at the time of the removal, and is preserved by adding $1$ to both elements.

Now, in order to answer $\widehat{\weight}(\cen,\tau)$ we return the  value $r_{i_v}$ in the corresponding $R_v$ array,  where $i_v$ is the index of the largest value smaller or equal to $\tau$ in $T_v$.  Then $\weight_\mu(\cen, X_{[\tau, t]}) \leq \widehat{\weight}(\cen,\tau) \leq (1+\epsilon)\weight_\mu(\cen, X_{[\tau, t]})$.
Finally, note that for each $l \in [|R_v|-2]$,  $r_{v,l} > (1+\epsilon)r_{v,l+2}$, hence, the sequence is decreasing by a factor of $(1+\epsilon)$ every $2$ steps, so that the total length is at most $O(\log_{1+\epsilon}(w))$.
\end{proof}

\subsection{Maintaining Centers.}\label{subsec:MaintainCenter}
Our second task is making sure each cluster has a good center. Once again, this issue is unique to the sliding window setting, as in traditional data streams, points (and thus centers) never expire.  Specifically, whenever a center $\cen$ expires, we aim to replace it with a center $\cen'$ such that, $\dist(\cen, \cen')$ is small. In our context, small means comparable to the distance between the center $\cen$ and any point $x$ in the window that was mapped to it. 

\begin{definition}
Fix a mapping $\mu$ and a center $\cen$. We say that $y \in X_{[\tau, t]}$ is an $\epsilon$-replacement for $\cen$ at time $\tau$ if:
$\dist(\cen, y)^p \leq (1 + \epsilon) \dist(\cen, x)^p$ for any $x \in X_{[\tau, t]}$ with $\mu(x) = \cen$, where $t$ is the last time in the current stream.
\end{definition}

To obtain an $\epsilon$-replacement, we consider a sequence of concentric shells with geometrically increasing radii around each center $\cen$, and keep the last occurring point in every shell. Since distances are between $1$ and $\Delta$, we can bound the total number of shells by $O(\log_{1+\epsilon} \Delta)$ per center. When $\cen$ needs to be replaced, we simply look for the smallest non-empty shell and return its representative. By construction, we will ensure that the $\epsilon$-replacement property is satisfied. 

 \begin{lemma}\label{mapp:m}

Let $\epsilon \in (0,1)$, then with space $O(|\centers| \log_{1+\epsilon}(\Delta))$, it is possible to construct a function $\last : \centers \times [n] \to X$ that will always return an $\epsilon$-replacement for $\cen \in \centers$. 
\end{lemma}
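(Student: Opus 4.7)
My plan is to implement $\last$ exactly as sketched in the paragraph preceding the lemma: for every $\cen \in \centers$ I keep one ``representative slot'' per logarithmic shell around $\cen$, storing only the most recently arrived point that has been mapped to $\cen$ and whose cost falls into that shell. Formally, since $d(\cdot,\cen)^p \in [1,\Delta^p]$ for any input point, I partition this range into $L = \lceil \log_{1+\epsilon} \Delta^p \rceil = O(\log_{1+\epsilon}\Delta)$ buckets, where bucket $j$ corresponds to $d(x,\cen)^p \in [(1+\epsilon)^{j-1},(1+\epsilon)^j)$ (with a special bucket for $x=\cen$, i.e.\ $d=0$). Each bucket stores a single pair (point, arrival time), and when a new $x$ with $\mu(x)=\cen$ arrives it simply overwrites the content of its bucket. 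The total space is $O(|\centers|\cdot L) = O(|\centers|\log_{1+\epsilon}\Delta)$ as required.

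To evaluate $\last(\cen,\tau)$, I scan the buckets in increasing order of $j$ and return the first stored point whose arrival time is $\geq \tau$. The key monotonicity observation that makes this work is: for each bucket $j$, the point currently stored in $j$ has arrival time at least as large as the arrival time of every point ever placed into $j$ (since later arrivals overwrite earlier ones). Hence, if any point $x$ with $\mu(x)=\cen$, $x \in X_{[\tau,t]}$ lies in bucket $j'$, then the representative currently stored in bucket $j'$ also has arrival time $\geq \tau$ and is therefore a valid candidate in our scan.

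It remains to verify the $\epsilon$-replacement guarantee. Let $y=\last(\cen,\tau)$ be found in some bucket $j$, and let $x \in X_{[\tau,t]}$ with $\mu(x)=\cen$ be arbitrary, sitting in bucket $j'$. By the observation above and minimality of $j$, we must have $j' \geq j$. Therefore
\begin{equation*}
d(x,\cen)^p \;\geq\; (1+\epsilon)^{j'-1} \;\geq\; (1+\epsilon)^{j-1} \;\geq\; \frac{(1+\epsilon)^{j}}{1+\epsilon} \;>\; \frac{d(y,\cen)^p}{1+\epsilon},
\end{equation*}
which rearranges to $d(y,\cen)^p \leq (1+\epsilon)\, d(x,\cen)^p$, exactly the $\epsilon$-replacement condition. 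The degenerate case where $x=\cen$ (bucket at $d=0$) is handled by always preferring that bucket first, and if it is non-empty with a timestamp $\geq\tau$ the replacement has distance $0$ and the inequality is trivial.

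I do not expect a serious obstacle here; the argument is purely combinatorial. The only subtle point is the monotonicity observation linking ``last arrival in a bucket'' to ``some arrival in the window'', which relies on the fact that $\mu$ is fixed at insertion time (as guaranteed by the Meyerson construction) so that bucket assignments are determined once and never revised. Given that, correctness and the claimed space bound follow immediately, so I would present the construction, state the monotonicity observation, and conclude with the two-line chain of inequalities above.
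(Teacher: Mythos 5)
Your proof is correct and takes essentially the same approach as the paper: maintain for each center a logarithmic family of geometric distance shells, store in each shell the most recently arrived point assigned to that center whose distance falls in it, and answer $\last(\cen,\tau)$ by returning the representative of the innermost shell whose stored timestamp is at least $\tau$. The only superficial difference is that the paper uses nested thresholds (bucket $\lambda$ holds the latest point with $d^p \le \lambda$, so an arrival overwrites every bucket with $\lambda \ge d^p$) while you use disjoint shells $[(1+\epsilon)^{j-1},(1+\epsilon)^j)$ with a single overwrite per arrival, closing the gap with your monotonicity observation; both yield the identical $\epsilon$-replacement guarantee via the same short chain of inequalities.
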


\begin{proof}
Recall that the minimum distance between two distinct points is $1$, and that $\Delta$ is the maximum distance between two points in the dataset. Let $\Lambda$ to be the set $\{1, (1+\epsilon), (1+\epsilon)^2, \ldots \Delta^p\}$ with both extremes included. For each point $y\in \centers$, we maintain a sequence of elements $L_y = \{l_{y,\lambda} | \lambda \in \Lambda)$ and a sequence of times $H_y = \{h_{y,\lambda} | \lambda \in \Lambda\}$ indexed by $\lambda\in\Lambda$. When a point $y$ is added to $\centers$, we initialize the sequence $L_y$ with $y$ and  $H_y$ with its insertion time.  Whenever a point $x$ is mapped by the Meyerson sketch to $y$ at time $i$, let $d=\dist(y, x)^p$, we set $l_{y,\lambda}= x$ and $h_{y,\lambda}= i$ for all $\lambda \geq d$.
First note that we never add a point $x$ with distance larger than $\Delta$ by definition, so each point is added to at least one set.

Now, to answer $\last(y,\tau)$, we return the point $l_{y,\lambda}$ for the smallest $\lambda$ such that $h_{y,\lambda}\ge \tau$. If no such $\lambda$ exists, we return $\emptyset$.

Note that by construction, we return a point in $\{x_{\tau},\dots,$ $x_t\}$ that is mapped to $y$ if and only if the set is non-empty. Now, suppose that we return $y'$ and $y' \in l_{y,\lambda}$. 
\comment{We know that $\dist(x, y)^p \le \lambda'$. If $\lambda' =  \frac{2^{-p}m}{n}$ the result follows from the fact that $\dist^p(x, y) \le \frac{2^{-p}m}{n}$ and that the distances are non-negative. Suppose that $\lambda' > \frac{2^{-p}m}{n}$ then}
We know that there exists no point $x \in \cap \{x_{\tau},\dots,x_t\}$ mapped to $y$ by the Meyerson sketch such that $\dist(x, y)^p < \frac{\lambda}{1+\epsilon}$; otherwise, the point would be stored in $l_{y,\frac{\lambda}{1+\epsilon}}$, which is a contradiction as we returned $l_{y,\lambda}$. Hence, we have $\dist(y, y')^p\leq (1+\epsilon)\dist(x,y)^p$ for any $x \in \cap \{x_{\tau},\dots,x_t\}$ mapped to $y$ by the Meyerson sketch.

Finally, notice that we store $O(\log_{1+\epsilon}\Delta)$ elements for each element in $\centers$ (notice that $p$ is a constant).
\end{proof}

\subsection{Suffix Sketch}
\label{app:suffix-sketch}
The two augmentations before allow us to modify the Meyerson sketch in such a way that it can return an approximate solution 
for any suffix of the stream with length at most $w$. More precisely, we will show that we maintain a valid solution for the suffix 
with a cost comparable to the optimum of {\it the entire stream}.

Before proving the main result of this section, we need to add some additional bookkeeping to estimate the cost of the sketch inside the sliding window. The idea is similar to that used to maintain the approximate weight of a point.

\begin{lemma}\label{lem:estimate-cost}
Let $\epsilon\in (0,1)$ be a constant. Then using additional space $O\left(\log_{1+\epsilon}(M/m)\right)$ we can extend the Meyerson sketch $(\centers, \weight)$ with a function $\widehat{\costm} : [n] \to \mathbb{R}$ such that, for every time $\tau$ in the active window: $\costm(\tau) \leq \widehat{\costm}(\tau) \leq (1+\epsilon) \costm(\tau)$, where $\costm(\tau)$ is the mapping cost for points inserted after time $\tau$ 
\end{lemma}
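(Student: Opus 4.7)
The plan is to mirror, on the scalar quantity $\costm$, the exponential-histogram bookkeeping used for the per-center weight counters in Lemma~\ref{lem:estimate-weight}. I would maintain a sequence of checkpoints $(t_1, c_1), \ldots, (t_L, c_L)$ with $t_1 < \cdots < t_L$, preserving the invariant that $c_j$ equals the total contribution to $\costm$ from points mapped at times $\geq t_j$ up to the current time, so $c_1 > c_2 > \cdots > c_L$. The update rule is immediate: when a new point $x$ arriving at time $i$ is assigned by the Meyerson sketch to its nearest center $y$, let $d = \dist(x,y)^p$ be the incremental mapping cost; add $d$ to every stored $c_j$ and append a new checkpoint $(i, d)$. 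A query $\widehat{\costm}(\tau)$ returns $c_j$ where $j$ is the largest index with $t_j \leq \tau$; by the invariant this is an upper bound on $\costm(\tau)$, since it counts the contribution of any points falling in $[t_j, \tau)$ on top of those in $[\tau, t]$.

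To keep the list short while bounding the multiplicative error, I would apply the standard compression rule: after each update, scan the list and delete any interior checkpoint $(t_l, c_l)$ satisfying $c_{l-1} \leq (1+\epsilon) c_{l+1}$. The key invariant, proved inductively as in Lemma~\ref{lem:estimate-weight}, is that for every pair of adjacent retained checkpoints either $c_l = c_{l+1} + d_l$ for the single contribution $d_l$ of the point that created checkpoint $l{+}1$, or $c_l \leq (1+\epsilon) c_{l+1}$; both cases are preserved when the same increment $d$ is added to both $c_l$ and $c_{l+1}$ during an update. Combined with $c_{j+1} \leq \costm(\tau) \leq c_j$ for the query index $j$, this gives $\costm(\tau) \leq \widehat{\costm}(\tau) \leq (1+\epsilon)\costm(\tau)$ in the generic case.

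For the size bound, the compression rule forces retained skip-one pairs to satisfy $c_{l-1} > (1+\epsilon) c_{l+1}$, so the subsequence $c_1, c_3, c_5, \ldots$ decreases geometrically by a factor of at least $(1+\epsilon)$. To match the claimed $O(\log_{1+\epsilon}(M/m))$ bound, I would additionally drop any checkpoint whose $c_l$ is outside the operational range $[\Omega(m), O(M)]$: values above $(1+\epsilon) M$ cannot correspond to the cost of a valid sliding-window suffix given the OPT upper bound $M$, and values below $m$ fall below the OPT lower bound and may be coalesced. The main technical obstacle lies in this low-end handling: one must verify that collapsing checkpoints with $c_l < m$ does not break the $(1+\epsilon)$ guarantee for queries whose true $\costm(\tau)$ itself falls below $m$. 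This is reconciled by exploiting the downstream use of $\widehat{\costm}$ in Algorithm~\ref{alg:pseudo}, which compares it only to thresholds $\lambda \in \{m, (1+\delta) m, \ldots\}$, so any query whose true value lies below $m$ may be safely answered by any value likewise below the smallest such $\lambda$ without affecting the algorithmic decisions that Lemma~\ref{lemma:mb} and Lemma~\ref{lem:aas} ultimately rely on.
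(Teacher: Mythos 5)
Your construction is essentially identical to the paper's: the same checkpoint list with the same "add the increment to every stored value, append a new pair" update, the same deletion rule $c_{l-1} \leq (1+\epsilon) c_{l+1}$, the same two-case adjacency invariant (your statement of the invariant is in fact cleaner than the paper's, which has a typo writing $b_l = b_{l+1}+1$ where the increment should be the new point's cost), and the same geometric-decay size argument. The one place you diverge is the low-end truncation: the paper only explicitly discards checkpoints above $(1+\epsilon)2^{p+7}M$, relying implicitly on the minimum per-point increment being at least $1$ under the normalization $\dist \geq 1$, which already gives a list of length $O(\log_{1+\epsilon} M)$; your additional coalescing of checkpoints below $m$, together with the argument that sub-$m$ queries are harmless given how $\widehat{\costm}$ is consumed by the thresholded Algorithm~\ref{alg:pseudo}, is a sound alternative way to reach the stated $O(\log_{1+\epsilon}(M/m))$ bound, but it is not needed if one is content with $O(\log_{1+\epsilon} M)$.
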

\begin{proof}
The proof is very similar to the proof of Lemma~\ref{lem:estimate-weight} and we include it here for completeness.

In the sketch, we maintain a sequence of cost $B= (b_1, b_2,\ldots)$ corresponding to different moving costs of $\mu$ and a sequence of times $G = (g_1, g_{2},\ldots)$. The two sequences are initialized as empty.
We preserve the invariant that for each time $g_{j} \in G$, the moving cost of $\mu$ for points inserted after time $g_{j}$ (inclusive) to the end of the stream is equal to $b_{j}$.
To do so, when at time $i$, a point, $x_i$ is assigned to some center $y$ by $\mu$ (recall that the assignment is fixed),  we increase by $d(x_i,y)^p$ all of the weights stored in $B$ and then we add a new weight initialized to $d(x_i,y)^p$ to $B$ and a new time equal to $i$ to $G$.

To reduce the size of the structure, we maintain only the significant changes in weights in $B$. More precisely, at any time, we delete $b_{l}$ and $g_{l}$ for any $l \in [2, \ldots, |B|-1]$, if $b_{l-1} \leq (1+\epsilon)b_{l+1}$. We also renumber the indices to be consecutive. Finally, we  remove the $b_{l}$ (and the corresponding $g_{l}$) for which the weights are larger than $(1+\epsilon)2^{p+7} M$.\footnote{Note that we can do this because mapping of a good Meyerson sketch cost more than $(1+\epsilon)2^{p+7} M$ by Lemma~\ref{lemma:corollary-of-lemma5o1}.}
Notice that at any time, for each $l$, either $b_{l} = b_{l+1} + 1$ or $b_{l} \leq (1+\epsilon)b_{l+1}$. In fact, if $b_{l}$ and $b_{l+1}$ refer to consecutive assignments of points to the center, the first case is true. If $b_{l}$ and $b_{l+1}$   became consecutive after the removal of a point between them, the latter condition is true at the time of the removal, and is preserved by adding $1$ to both elements.

Now, to compute the cost of mapping for every suffix, $\widehat{\costm}(\tau)$, we return the value $b_{i}$ in the $B$ array,  where $i$ is the index of largest value smaller or equal to $\tau$ in $G$.  Then $\costm(\tau) \leq \widehat{\costm}(\tau) \leq (1+\epsilon) \costm(\tau)$.
Finally, note that for each $l \in [|B|-2]$,  $b_{l} > (1+\epsilon)b_{l+2}$, hence, the sequence is decreasing by a factor of $(1+\epsilon)$ every $2$ steps, so that the total length is at most $O(\log_{1+\epsilon}(M/m))$.
\end{proof}
We note that the the algorithm defined previously can be easily modified to add the bookkeeping described in the previous lemmas, in particular we need only to modify the $ComputeMeyerson$ function. 

Now we can formally prove the main properties of our augmented Meyerson sketch.

\begin{footnotesize}
\begin{algorithm}[t]
\small
\caption{$\textrm{AugmentedMeyerson}(X, w, m, M, \Delta)$}\label{alg:augmented_meyerson}
\begin{algorithmic}[1]
\STATE {\bf Input:} A sequence of points $X$, lower bound $m$ and upper bound $M$ to the optimum, and an upper bound $\Delta$ on the max distance.
\STATE {\bf Output:} A data structure allowing to extract a $\epsilon$-consistent weighted instance of the substream $X_{[\tau, t]}$ with centers belonging to $X_{[\tau,t]}$.
\STATE Let parameters $p$ and $k$ be the parameters of the $k$-clustering problem.
\STATE Let $\gamma$ be the desired probability of success of the individual Meyerson algorithm.
\STATE Let $\epsilon$ be the parameter of the $\epsilon$-consistent mapping and $\epsilon$-replacement of centers as described in sections~\ref{subsec:MaintainWeight} and \ref{subsec:MaintainCenter}.
\FOR{$L^p\in \{m, 2m, 4m, 8m,\dots,2^{\lceil\log M/m\rceil}m\}$}{
	\STATE In parallel run $ComputeMeyerson(X, L^p,\alpha=1/2, \gamma, \Delta, p, k)$ with bookkeeping for maintaining $\epsilon$-consistent mappings and $\epsilon$-replacement of centers as described in sections~\ref{subsec:MaintainWeight} and \ref{subsec:MaintainCenter}.
}
\ENDFOR
\STATE Let $\ell$ be the smallest index (if it exists, otherwise it is an arbitrary index) in $\{m, 2m, 4m, 8m,\dots,2^{\lceil\log M/m\rceil}m\}$ for which the output $ComputeMeyerson$ called with $L_p = \ell$ has size smaller than $8 k \log  \gamma^{-1}(1+\log \Delta) \left(2^{2p+3}+1\right)$ and cost smaller than $2^{p+6}\ell$
\STATE Return the result of the index $\ell$ call for $ComputeMeyerson$ with the needed booking for allowing the operation $\suffix_\tau$.
\end{algorithmic}
\end{algorithm}
\end{footnotesize}

\begin{lemma}[Lemma~\ref{lemma:mb} restated]
Let $w$ be the size of the sliding window, $\epsilon\in (0,1)$ be a constant and $t$ the current time. Let $(X,\dist)$ be a metric space and fix $ \gamma \in (0,1)$. The augmented Meyerson algorithm computes
an implicit mapping $\mu: X \rightarrow \centers$, and
 an $\epsilon$-consistent weighted instance $(\centers, \widehat{\weight})$ for all substreams $X_{[\tau, t]}$ with $\tau \geq t - w$ such that,  with probability $1- \gamma$, we have: \\
$$ |\centers| \leq 2^{2p+8} k \log  \gamma^{-1}\log \Delta \qquad {\text{and}} \qquad $$ %\\
$$ \cost_p(X_{[\tau,t]}, \centers)  \le  2^{2p+8} \opt_p(X).$$ %\\
The algorithm uses space $O( k \log  \gamma^{-1} \log \Delta \log M/m (\log M + \log w + \log \Delta))$ and stores the cost of the consistent mapping, $\cost(X, \mu)$, and also a $1+\epsilon$ approximation to the cost of the $\epsilon$-consistent mapping, $\widehat{\cost}(X_{[\tau, t]}, \mu)$.
\end{lemma}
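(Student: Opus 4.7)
The plan is to layer three pieces: the standard Meyerson guarantee of Lemma~\ref{lemma:corollary-of-lemma5o1} applied to the entire stream $X$, together with the bookkeeping structures for approximate weights (Lemma~\ref{lem:estimate-weight}), in-window center replacements (Lemma~\ref{mapp:m}), and approximate suffix costs (Lemma~\ref{lem:estimate-cost}). Algorithm~\ref{alg:augmented_meyerson} is just $ComputeMeyerson$ run over the geometric grid $L^p \in \{m, 2m, \ldots, 2^{\lceil \log M/m \rceil} m\}$, augmented with these structures, so the underlying randomness event is exactly the one from Lemma~\ref{lemma:corollary-of-lemma5o1}: with probability $\ge 1 - \gamma$, the selected sketch yields centers $\centers \subseteq X$ with $|\centers| \le 2^{2p+8} k \log \gamma^{-1} \log \Delta$, a consistent mapping $\mu$, and $\cost_p(X, \centers) \le 2^{p+7}\opt_p(X)$. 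Since restricting $\mu$ to any subset only drops non-negative summands, $\cost_p(X_{[\tau,t]}, \centers) \le 2^{p+7}\opt_p(X)$ for every substream of interest.

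The nontrivial step is that some $\cen \in \centers$ may have been inserted before $\tau$ and lie outside the active window, so they do not form a valid in-window solution. I would define the reported center set $\centers'$ by replacing each such $\cen$ with $\cen' := \last(\cen, \tau)$ from Lemma~\ref{mapp:m}, which is guaranteed to lie in $X_{[\tau,t]}$ and to satisfy $\dist(\cen,\cen')^p \le (1+\epsilon)\dist(\cen, x)^p$ for every $x \in X_{[\tau,t]}$ with $\mu(x) = \cen$. For each such $x$, the relaxed triangle inequality then gives
\[
\dist(x,\cen')^p \le 2^{p-1}\bigl(\dist(x,\cen)^p + \dist(\cen,\cen')^p\bigr) \le 2^{p-1}(2+\epsilon)\,\dist(x,\cen)^p \le 2^{p+1}\dist(x,\cen)^p.
\]
Summing over all window points preserves the $2^{p+1}$ factor, so $\cost_p(X_{[\tau,t]}, \centers') \le 2^{p+1} \cdot 2^{p+7}\opt_p(X) = 2^{2p+8}\opt_p(X)$, matching the claimed bound (and $|\centers'| \le |\centers|$ is immediate). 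I expect this per-point use of the $\epsilon$-replacement guarantee to be the main place that requires care: applying the bound per-center instead of per-point would introduce a spurious factor of $|\centers|$.

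The remaining requirements are direct. The $\epsilon$-consistent weight function is produced by $\widehat{\weight}(\cen, \tau)$ from Lemma~\ref{lem:estimate-weight}, which by definition satisfies $\weight_\mu(\cen, X_{[\tau,t]}) \le \widehat{\weight}(\cen, \tau) \le (1+\epsilon) \weight_\mu(\cen, X_{[\tau,t]})$, and this is exactly the notion of $\epsilon$-consistency from Section~\ref{sec:preliminaries}. The $(1+\epsilon)$-approximation to the cost of the $\epsilon$-consistent mapping on each suffix is provided directly by Lemma~\ref{lem:estimate-cost}, while $\cost(X, \mu)$ itself is maintained online as a running accumulator. The space bound is obtained by aggregating: $O(\log M/m)$ guesses for $L^p$, each running $O(\log \gamma^{-1})$ parallel single sketches of $O(k \log \Delta)$ centers; per center $O(\log w)$ weight slots (Lemma~\ref{lem:estimate-weight}) and $O(\log \Delta)$ replacement slots (Lemma~\ref{mapp:m}); per sketch $O(\log M/m)$ suffix-cost buckets (Lemma~\ref{lem:estimate-cost}); and a $\log M + \log w + \log \Delta$ factor for the bit-size of stored weights, timestamps, and distance values, which matches the stated $O(k \log \gamma^{-1} \log \Delta \log(M/m) (\log M + \log w + \log \Delta))$ bound.
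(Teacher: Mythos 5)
Your proof is correct and takes essentially the same approach as the paper: invoke Lemma~\ref{lemma:corollary-of-lemma5o1} for the full-stream guarantee, note that restricting to a suffix only shrinks the cost, replace expired centers via the $\epsilon$-replacement of Lemma~\ref{mapp:m}, apply the relaxed triangle inequality point-by-point to pick up exactly the $2^{p+1}$ factor (using $2+\epsilon < 4$), and attach the bookkeeping of Lemmas~\ref{lem:estimate-weight} and~\ref{lem:estimate-cost} for the approximate weights and suffix cost. Your space accounting slightly double-counts by treating the $(\log M + \log w + \log \Delta)$ factor both as a tally of the per-center bookkeeping structures and again as a bit-size overhead — in the paper that factor is just the per-center aggregation of the weight buckets ($\log w$), replacement shells ($\log \Delta$), and suffix-cost buckets (bounded by $\log M$) — but this is a presentational wrinkle, not a gap, and the stated bound comes out the same.
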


\begin{proof}
The main idea of the proof is to use our three bookkeeping tricks to provide a good sketch also for an arbitrary suffix of a stream. In particular our algorithm will run the Meyerson sketch with the additional bookkeeping as presented in Algorithm~\ref{alg:guess_meyerson} and output a weighted instance and its cost using the additional bookkeeping.

Note that by Lemma~\ref{lemma:corollary-of-lemma5o1} we know that with probability $1-\gamma$ the sets of center that we use in a Meyerson sketch are at most $2^{2p+8} k \log  \gamma^{-1}\log \Delta$, our augmented Meyerson sketch change the center as describe in Lemma~\ref{mapp:m} to guarantee that $\centers \subseteq X_{[\tau,t]}$ but it never increases their number so the final number of center is still bounded by $2^{2p+8} k \log  \gamma^{-1}\log \Delta$.

To bound the cost of the moving, we note that  $\cost_p(X_{[\tau,t]}, \centers)$ is initially bounded by $2^{p+7} \opt_p(X)$. Although in the augmented sketch we change the set of centers so the cost could increase, nevertheless for every point $y\in X_{[\tau,t]}$, we are guaranteed that there is a point in the final set of centers at distance at most $2^{p+1}d(y,\mu(y))^p$. This follows from Lemma~\ref{mapp:m} and triangle inequality and because $2+\epsilon < 4$. Finally the  the factor $2^{2p+8}$ in the statements comes from multiplying $2^{p+1}$ with the approximation factor of Meyerson $2^{p+7}$.

Finally we notice that we can compute $\widehat{\weight}$ using the bookkeeping and the algorithm presented in Lemma~\ref{lem:estimate-weight} and the $\widehat{\costm}(\tau)$ using bookkeeping and the algorithm presented in Lemma~\ref{lem:estimate-cost}.

The space bound follows from the space of the simple Meyerson sketch times the required space for bookkeeping. 
\end{proof}

For completeness, we provide a pseudocode for the $\textrm{AugmentedMeyerson}$ (Algorithm~\ref{alg:augmented_meyerson})  which is used in Algorithm~\ref{alg:pseudo}. The $\textrm{AugmentedMeyerson}$ corresponds to $\textrm{SimpleMeyerson}$ (Algorithm~\ref{alg:guess_meyerson}) with the additional bookkeeping described in Section~\ref{app:AugmentedMeyerson}. More precisely, $\textrm{AugmentedMeyerson}$ stores for each sketch the data structures need for maintaining the estimate of the weights and centers described in sections~\ref{subsec:MaintainWeight} and \ref{subsec:MaintainCenter}.
Using these data structure the algorithm we can extract the weighted set $\suffix_\tau(S)=(\centers, \widehat{\weight})$ for a given $\tau$ which represents an $\epsilon$-consistent weighted instance of the substream $X_{[\tau, t]}$ with centers $C$ belonging to $X_{[\tau,t]}$.

\section{Omitted Proofs from Sliding Windows subsection}
\label{app:sliding-window}

\begin{lemma}[Lemma~\ref{lem:aas} restated]
Using an approximation algorithm $\algo$, from the augmented Meyerson sketch $S(Z)$, with probability $\ge 1-\gamma$, we can output a solution $\algo(S(Z))$ and an estimate  $\acost_p(S(Z), \algo(S(Z)))$ of its cost s.t. 
\begin{align*}
\cost_p(Z, \algo(S(Z)))  & \le  \acost_p(S(Z), \algo(S(Z))) \\
 & \le \beta(\rho) \cost_p(Z, \opt(Z))
\end{align*}
for a constant $\beta(\rho) \le 2^{3p+6}\rho$ depending only the approximation factor $\rho$ of $\algo$.
\end{lemma}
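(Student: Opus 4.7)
The plan is to define the estimate as
\[
\acost_p(S(Z), \algo(S(Z))) \;=\; 2^{p-1}\Bigl(\widehat{\cost}(Z,\mu) \;+\; \sum_{c \in \centers} \widehat{\weight}(c)\, d\bigl(c, \algo(S(Z))\bigr)^p\Bigr),
\]
where $\mu$ is the implicit mapping and $\widehat{\cost}(Z,\mu)$ and $\widehat{\weight}$ are the quantities maintained by the sketch per Lemma~\ref{lemma:mb}. This is computable directly from $S(Z)$ and the output of $\algo$. Throughout I would condition on the $(1-\gamma)$ event that the sketch satisfies its guarantees; everything that follows is deterministic.

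For the lower bound $\cost_p(Z, \algo(S(Z))) \le \acost_p$, I would apply the relaxed triangle inequality point-wise: for every $x \in Z$,
\[
d(x, \algo(S(Z)))^p \;\le\; 2^{p-1}\bigl(d(x, \mu(x))^p + d(\mu(x), \algo(S(Z)))^p\bigr).
\]
Summing over $x \in Z$ and regrouping the second term by center yields $\cost_p(Z, \algo(S(Z))) \le 2^{p-1}\bigl(\cost(Z,\mu) + \sum_c \weight_\mu(c)\, d(c, \algo(S(Z)))^p\bigr)$. Replacing $\weight_\mu$ by $\widehat{\weight}$ and $\cost(Z,\mu)$ by $\widehat{\cost}(Z,\mu)$, both of which only overshoot by the $\epsilon$-consistency guarantee of Lemma~\ref{lemma:mb}, delivers exactly $\acost_p$.

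For the upper bound, the key step is to invoke the $\rho$-approximation property of $\algo$ on the weighted instance $(\centers, \widehat{\weight})$. Letting $\opt^*$ denote the optimal $k$ centers for $Z$ --- a feasible solution for the weighted problem --- we get $\sum_c \widehat{\weight}(c)\, d(c, \algo(S(Z)))^p \le \rho \sum_c \widehat{\weight}(c)\, d(c, \opt^*)^p$. I would then push this weighted sum back to a sum over $Z$ using $\epsilon$-consistency and another relaxed triangle inequality $d(\mu(x), \opt^*)^p \le 2^{p-1}(d(x, \mu(x))^p + d(x, \opt^*)^p)$, which bounds the right-hand side by $\rho(1+\epsilon)\,2^{p-1}\bigl(\cost(Z,\mu) + \opt_p(Z)\bigr)$. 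Finally, Lemma~\ref{lemma:mb} gives $\cost(Z,\mu) = O(\opt_p(Z))$ (with constant $2^{2p+8}$) and $\widehat{\cost}(Z,\mu) \le (1+\epsilon)\cost(Z,\mu)$. Plugging both into the definition of $\acost_p$ produces $\acost_p \le \beta(\rho)\opt_p(Z)$ with $\beta(\rho) = O(\rho \cdot 2^{3p})$, and tracking constants carefully yields the claimed $\beta(\rho) \le 2^{3p+6}\rho$.

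The main obstacle is purely careful bookkeeping of constants. The compounding sources of slack --- the Meyerson factor from Lemma~\ref{lemma:mb}, the $\rho$ from $\algo$, two uses of the relaxed triangle inequality, and two $(1+\epsilon)$ multipliers from $\epsilon$-consistency --- must be arranged so that the exponent on $2$ is exactly $3p+6$ rather than the looser $4p+O(1)$ that naive composition would give; in particular one has to avoid double-charging a factor of $2^{p-1}$ across the two triangle-inequality applications and use the tighter Meyerson mapping-cost bound (rather than the weaker center-cost bound of Lemma~\ref{lemma:mb}) wherever applicable. A secondary subtlety is that $\opt^*$ is never materialized by the algorithm: it enters the proof only as a feasible comparison solution inside the $\rho$-approximation guarantee for the weighted sub-instance.
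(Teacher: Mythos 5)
Your proof follows essentially the same route as the paper's: both define $\acost_p$ via a relaxed triangle-inequality split into the sketch's internal mapping cost plus the clustering cost of the weighted sub-instance, then bound the latter by the $\rho$-approximation of $\algo$, and finally fold in the Meyerson guarantee $\cost(Z,\mu)=O(\opt_p(Z))$. However, there is a genuine gap in the upper-bound step: you assert that $\opt^*$, the optimal $k$ centers for $Z$, is ``a feasible solution for the weighted problem.'' In this paper the $k$-clustering problem is of the $k$-medoids type (centers must be chosen from the input points), so a feasible solution for the weighted instance $(\centers,\widehat{\weight})$ must consist of centers drawn from $\centers=Y(Z)$, not from all of $Z$. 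Since $\opt^*\subseteq Z$ but not necessarily $\opt^*\subseteq Y(Z)$, it is not in general feasible, and the $\rho$-approximation guarantee for $\algo$ cannot be compared against it directly. The paper handles this by projecting $\opt^*$ onto $Y(Z)$ (the same step used inside the proof of Lemma~\ref{lem:cmap}), which costs an additional factor of $2^p$ and is why the paper bounds the weighted instance's optimum by $2^{2p-1}(\costm(Z)+\opt_p(Z))$ rather than the $2^{p-1}(\cdot)$ that your argument yields. Once that factor is restored, your constant accounting gives roughly $2^{4p+O(1)}\rho$, not the $2^{3p+6}\rho$ you claim to reach by ``careful tracking''; the avoidance of double-charging that you hoped for does not actually materialize because the two triangle-inequality applications sit in genuinely different places in the chain.

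A secondary (non-fatal) point: since $Z$ is the \emph{entire} stream on which the sketch was built, the quantities $\weight(Z)$ and $\costm(Z)$ are stored exactly by the augmented Meyerson sketch; the paper uses these exact values in its definition of $\acost_p$. Your version uses the $\epsilon$-consistent approximations $\widehat{\weight}$ and $\widehat{\cost}$, which introduces unnecessary $(1+\epsilon)$ slack. That slack is harmless to the qualitative conclusion, but it further pressures the constant you are trying to hit, and it is simply not needed for this lemma (it is needed in Lemma~\ref{lem:suffix-composition}, where one genuinely restricts to a suffix).
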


\begin{proof}
Given the sketch $S(Z)$ computed with Algorithm~\ref{alg:guess_meyerson}, we obtain a set of centers $Y(Z)$ and theirs weights $\weight(Z)$, and a cost of the mapping $\costm(Z)$ (as well as an implicitly defined consistent mapping $\mu$). 

So we can construct a weighted instance $(Y(Z), \weight(Z))$ and solve it approximately with $\algo$ to obtain centers $\centers= \algo(S(Z))$ which are given in output.  We now bound the cost of $\centers$ over $Z$. 

As we have shown in the proof of Lemma~\ref{lem:cmap}, we have that $\cost_p(Z, \centers) \le 2^{p-1} (\costm (Z) + \cost_p(Y(Z), \weight(Z),  \centers))$, where $\cost_p(Y(Z), \weight(Z),  \centers)$ is the cost of the solution $\centers$ over the weighted instance. We also know that the weighted instance has an optimum cost of $\le 2^{2p-1} (\costm (Z) +  \opt_p(Z))$. So applying an $\rho$-approximate algorithm we get 
$$\cost_p(Y(Z), \weight(Z),  \centers) \le \rho 2^{2p-1} (\costm (Z) +  \opt_p(Z)) $$

Finally we get that  $\cost_p(Z, \centers) \le \rho  2^{2p} (\costm (Z) + \opt_p(Z))$ which by Lemma~\ref{lem:full_meyerson} gives $\cost_p(Z, \centers) \le 2^{3p+6} \rho \opt_p(Z)$, with probability $\ge 1-\gamma$, so the theorem holds with $\beta(\rho) := 2^{3p+6} \rho$.

Finally, notice that the value $\acost_p(S(Z), \algo(S(Z)))  := 2^{p-1} (\costm (Z) + \cost_p(Y(Z), \weight(Z),  \centers))$, can be computed from $S(Z)$ (using $\algo$) so we can output it as an estimate of the cost. Notice, that this is valid a lowerbound of the actual cost on $Z$ of the solution, and it is $ \le \beta (\rho) \opt_p(Z)$. 
 \end{proof}
 
 Now we prove some basic properties of Algorithm~\ref{alg:pseudo}] that we will use in other prove.
 
  \begin{lemma}[Invariants of Algorithm~\ref{alg:pseudo}]\label{lem:invariantsofAlg}
For a set $A_{\lambda}$, let $A^+_\lambda$ be the set $A_\lambda$ together with the first element of the corresponding $B_{\lambda}$. The following are invariants  maintained by the algorithm: 
(i) $A_{\lambda}$ and $B_{\lambda}$ are two disjoint consecutive substreams of $X$,
(ii) $A_{\lambda}$ precedes $B_{\lambda}$, and 
(iii) $B_{\lambda}$ ends with the current time $t$ and always contains the last element of the stream.
(iv) When $|A_\lambda| \geq 1$: %\not= \emptyset$:
$\acost_p(S(A_{\lambda}^+),\algo (S(A_{\lambda}^+))) > \lambda.$
(v) When $|A_\lambda| \geq 2$: %\not= \emptyset$:
$\acost_p(S(A_{\lambda}), 
\algo(S(A_{\lambda}))) \leq \lambda $.
(vi) When $|B_\lambda| \geq 2$:
 $\acost_p(S(B_{\lambda}),\algo(S(B_{\lambda}))) \leq \lambda$
 \end{lemma}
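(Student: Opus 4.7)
\medskip

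The plan is to prove all six statements simultaneously by induction on the number of stream elements consumed by Algorithm~\ref{alg:pseudo}. The structural items (i)--(iii) and the cost items (iv)--(vi) need to travel together because the branch conditions that justify (iv)--(vi) are exactly the events that modify the structure. For the base case, before any point is read we have $A_\lambda=B_\lambda=\emptyset$, so the preconditions $|A_\lambda|\ge 1$, $|A_\lambda|\ge 2$ and $|B_\lambda|\ge 2$ all fail and the structural invariants hold vacuously.

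For the inductive step, suppose the invariants hold just before processing $x_t$ and split on the two branches. In the \emph{if} branch ($\acost_p(S_{temp},\algo(S_{temp}))\le\lambda$), $A_\lambda$ is untouched and $x_t$ is appended to $B_\lambda$; (i), (ii), (iii) and (v) carry over verbatim. Since $S_{temp}$ is by construction the augmented Meyerson sketch of the new $B_\lambda$, invariant (vi) is exactly the branch condition. For (iv), if $|B_\lambda^{\mathrm{old}}|\ge 1$ the first element of $B_\lambda$ does not change so $A_\lambda^+$, and hence its sketch, are unchanged; if $|B_\lambda^{\mathrm{old}}|=0$ then the previous step was the very first one, forcing $|A_\lambda|=0$, and the precondition of (iv) still fails. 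In the \emph{else} branch we do the reset $A_\lambda \leftarrow B_\lambda^{\mathrm{old}}$, $B_\lambda \leftarrow \{x_t\}$: items (i)--(iii) follow from the reset by inspection, (vi) is vacuous since $|B_\lambda|=1$, and (v) follows by noting that $|A_\lambda^{\mathrm{new}}|\ge 2$ forces $|B_\lambda^{\mathrm{old}}|\ge 2$ and invoking the previous step's invariant (vi). Finally, (iv) holds because $A_\lambda^{+,\mathrm{new}}=B_\lambda^{\mathrm{old}}\cup\{x_t\}$, whose sketch is precisely the $S_{temp}$ just tested, which by the branch condition satisfies $\acost_p>\lambda$.

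The only part that requires a moment of care is lining up the \emph{conceptual} sketch $S(A_\lambda^+)$ in invariant (iv) with an object the algorithm actually handles. The observation is that $A_\lambda^+$ is fixed at the moment of a reset (it equals $B_\lambda^{\mathrm{old}}\cup\{x_t\}$) and does not change thereafter until the next reset, so the sketch $S_{temp}$ computed at that reset is exactly $S(A_\lambda^+)$ for the entire lifetime of the current $A_\lambda$. This is a bookkeeping remark rather than a technical obstacle, and I expect it to be the only place where a few extra sentences are needed; everything else is mechanical case analysis of the two branches of Algorithm~\ref{alg:pseudo}.
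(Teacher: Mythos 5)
Your proof is correct and uses the same core observations as the paper's own proof, which states the invariants tersely by inspecting the two branches and their conditions; you simply make the implicit induction on stream positions explicit. The extra care you take in identifying $S(A_\lambda^+)$ with the $S_{temp}$ computed at the most recent reset (and noting it is frozen until the next reset) is a point the paper leaves implicit, and your handling of it is sound.
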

\begin{proof}
The first three points follow from construction. The fourth follows from the fact that $A_\lambda$ is set to a non-empty sketch only when the value associated with $B_\lambda \cup \{ x \}$ is$> \lambda$ and this property is maintained thereafter. Equation (v) follows from the fact that $|A_\lambda|\ge 2$ implies that $A_\lambda$ has been set as a copy of a prior $B_\lambda$ which was not a singleton, and this happens only if the value of the cost of the sketch associated with $B_\lambda$ is $<\lambda$. Equation (vi) follows from the same property.
\end{proof}

\begin{lemma}[Composition with a Suffix of stream, restatement of Lemma~\ref{lem:suffix-composition}] 
 Given two substreams $A$,$B$ (with possibly $B=\emptyset$) and a time $\tau$ in $A$, let $\algo$ be a constant approximation algorithm for the $k$-clustering problem. Then if $\opt_p(A) \le O(\opt_p(A_\tau \cup B)$, then, with probability $\ge 1-O(\gamma)$, we have  $f_p(A_\tau \cup B, \algo(\suffix_\tau(S(A)) \cup S(B))) \le O(\opt_p(A_\tau \cup B)$.
\end{lemma}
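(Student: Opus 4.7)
The plan is to reduce the statement to a clean application of Lemma~\ref{lem:cmap} and Lemma~\ref{lem:aas} by exhibiting a concrete mapping from $A_\tau \cup B$ to the centers of $\suffix_\tau(S(A)) \cup S(B)$ whose total movement cost is $O(\opt_p(A_\tau \cup B))$, and observing that the union sketch is $\epsilon$-consistent with that mapping.

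\textbf{Step 1 (Guarantees on the individual sketches).} First I would invoke Lemma~\ref{lemma:mb} twice, once on the stream $A$ and once on $B$. With probability at least $1-2\gamma$ the following hold simultaneously: the augmented Meyerson sketch $S(A)$ maintains an implicit mapping $\mu_A : A \to Y_A$ with $\cost_p(A,\mu_A) \le 2^{2p+8}\opt_p(A)$, and $\suffix_\tau(S(A))$ is an $\epsilon$-consistent weighted instance for $A_\tau$ whose centers lie in $A_\tau$. Analogously $S(B)$ carries a mapping $\mu_B : B \to Y_B$ with $\cost_p(B,\mu_B) \le 2^{2p+8}\opt_p(B)$ and is $\epsilon$-consistent for $B$ with centers in $B$.

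\textbf{Step 2 ($\epsilon$-consistency of the union).} Define the combined mapping $\mu : A_\tau \cup B \to Y_A \cup Y_B$ by $\mu(x)=\mu_A(x)$ for $x \in A_\tau$ and $\mu(x)=\mu_B(x)$ for $x \in B$. Since $A_\tau$ and $B$ are disjoint and the weights are additive, $\suffix_\tau(S(A)) \cup S(B)$ is $\epsilon$-consistent with $\mu$ on $A_\tau \cup B$. Its associated movement cost is
\[
\cost_p(A_\tau, \mu_A) + \cost_p(B,\mu_B) \;\le\; \cost_p(A,\mu_A) + \cost_p(B,\mu_B) \;\le\; 2^{2p+8}(\opt_p(A) + \opt_p(B)).
\]
By the hypothesis $\opt_p(A) \le O(\opt_p(A_\tau \cup B))$ and, since adding points can only decrease the optimum by at most a factor $2^p$ (Lemma~\ref{lem:adding}), $\opt_p(B) \le 2^p \opt_p(A_\tau \cup B)$. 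Hence $\cost_p(A_\tau \cup B, \mu) \le \alpha \opt_p(A_\tau \cup B)$ for an absolute constant $\alpha$.

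\textbf{Step 3 (From $\epsilon$-consistent sketch to approximate solution).} Now I would mimic the argument of Lemma~\ref{lem:aas}: the combined weighted instance has optimum cost at most $2^{2p-1}(\cost_p(A_\tau \cup B,\mu) + \opt_p(A_\tau \cup B)) = O(\opt_p(A_\tau \cup B))$ by the triangle-inequality argument of Lemma~\ref{lem:cmap} (applied to the multiset of sketch-centers weighted according to $\mu$). Running the $\rho$-approximate algorithm $\algo$ on the union sketch therefore produces centers $\centers$ whose cost on the weighted instance is $O(\rho \cdot \opt_p(A_\tau \cup B))$, and relaxed triangle inequality yields $\cost_p(A_\tau \cup B, \centers) \le 2^{p-1}(\cost_p(A_\tau \cup B,\mu) + \cost_p(\text{weighted instance}, \centers)) = O(\opt_p(A_\tau \cup B))$, which is the claim. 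A union bound over the two underlying sketches gives success probability $1 - O(\gamma)$.

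\textbf{Main obstacle.} The only nontrivial point is ensuring that the union step preserves $\epsilon$-consistency and that the mapping cost of the union is really the sum of the two individual mapping costs; this relies on $A_\tau$ and $B$ being disjoint streams so the augmented weight-bookkeeping of Section~\ref{subsec:MaintainWeight} extends additively, and on the center-replacement bookkeeping of Section~\ref{subsec:MaintainCenter} guaranteeing that $\suffix_\tau(S(A))$ still represents every live cluster of $A_\tau$ by a center inside $A_\tau$ at cost blown up by at most $(1+\epsilon)$. Once that bookkeeping is in place, the rest of the argument is essentially a repetition of the proofs of Lemmas~\ref{lem:cmap} and~\ref{lem:aas}.
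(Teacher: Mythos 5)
Your proposal follows essentially the same route as the paper's proof: both construct the combined mapping $\mu$ from $\mu_A$ (restricted to $A_\tau$) and $\mu_B$, observe that $\suffix_\tau(S(A)) \cup S(B)$ is $\epsilon$-consistent with $\mu$, bound the movement cost by $O(\opt_p(A_\tau \cup B))$ via the hypothesis on $\opt_p(A)$ and the monotonicity fact that $\opt_p(B) \le 2^p\opt_p(A_\tau \cup B)$, and then conclude via the relaxed triangle inequality and the Lemma~\ref{lem:cmap} argument. The only minor difference is that you explicitly cite Lemma~\ref{lem:adding} for the $\opt_p(B)$ bound, which the paper uses implicitly; otherwise the two arguments coincide.
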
 
\begin{proof}
Let $\mu_A$ be the mapping consistent with the sketch $S(A)$ and $\mu_B$ be the mapping consistent with the sketch $S(B)$. Note that from this two mappings we can construct a third mapping $\mu$ such that $\mu(x)=\mu_A(x)$ if $x\in A$ and $\mu(x)=\mu_B(x)$ if $x\in B$(note that input points are in general position so they are all in distinct position). The mapping $\mu$ is now a valid mapping for $A_\tau \cup B$, furthermore we can note that the number of points mapped to a point $y$ by the mapping is equal to $|\{x\in A: \mu(x)=y\}|+|\{x\in B: \mu(x)=y\}|$, so by definition of the sketch $\suffix_\tau(S(A)) \cup S(B)$ and by the fact that $\suffix_\tau(S(A))$ is $\epsilon$-consistent and $S(B)$ is consistent we obtain that $\suffix_\tau(S(A)) \cup S(B)$ is $\epsilon$-consistent with $\mu$.

Note from Lemma~\ref{lemma:corollary-of-lemma5o1} we know that the moving cost of $\mu_A$ on $A$ is in $O(\opt_p(A))$ with probability $\ge 1-\gamma$. Similarly the moving cost of $\mu_B$ on $B$ is in $O(\opt_p(B))$ with probability $\ge 1-\gamma$, so the moving cost of the function $\mu$ on $A\cup B$ is $O(\opt_p(A)+\opt_p(B)) \in O(\opt_p(A_\tau \cup B)$ by the hypothesis in our Lemma with probability $\ge 1-2\gamma$. Furthermore also the moving cost of $\mu$ on $A_\tau \cup B$ is in $O(\opt_p(A_\tau \cup B))$. 

Now, let $\centers_\algo$ be the centers selected by a constant approximation algorithm on $\suffix_\tau(S(A)) \cup S(B)$. From generalized triangle inequality we get 
$\cost_p(A_\tau \cup B, \centers_\algo) =$ $\sum_{x\in A_\tau \cup B}$ \\*
$d(x, \centers_\algo)^p$ $\leq \sum_{x\in A_\tau \cup B} 2^{p-1}(d(x, \mu(x))^p + d(\mu(x), \centers_\algo)^p)\in O(\opt_p(A_\tau \cup B) + \sum_{x\in A_\tau \cup B} d(\mu(x), \centers_\algo)^p)$. Now from Lemma~\ref{lem:cmap} we know that the instance obtained by applying the mapping $\mu$ to $A_\tau \cup B$ has optimal solution $\centers_\mu$ with cost in $O(\opt_p(A_\tau \cup B))$. Unfortunately we do not apply the algorithm directly on an instance consistent with the mapping, nevertheless by $\epsilon$-consistency we know that the cost of the set of centers $\centers_\mu$ for the instance $\suffix_\tau(S(A)) \cup S(B)$ is most $(1+\epsilon)$ times the cost for the instance obtained by the mapping. So also the optimal solution of $\suffix_\tau(S(A)) \cup S(B)$ has cost in $O(\opt_p(A_\tau \cup B))$. So also $\centers_\algo$ have cost in $O(\opt_p(A_\tau \cup B))$ because $\algo$ is a constant approximation algorithm. The Lemma follows by noticing that by definition $\epsilon$-consistent $\sum_{x\in A_\tau \cup B} d(\mu(x), \centers_\algo)^p \leq \cost_p(\suffix_\tau(S(A)) \cup S(B), \centers_\algo)$.
\end{proof}

\begin{theorem}[Theorem~\ref{th:algo-main-th} restated]
With probability $1-\gamma$, Algorithm~\ref{alg:algo-min}, outputs an $O(1)$-approximation for the sliding window $k$-clustering problem using space:
$O\big(k \log (\Delta) (\log (\Delta) + \log(w) + \log(M) )$
$\log^2 (M / m) \log(\gamma^{-1} \log(M/m)) \big)$
and total update time 
$O(  T(k \log (\Delta), k)$ $\log^2 (M / m) \log(\gamma^{-1}  \log(M / m))$ $ (\log (\Delta) + \log(w) + \log(M) )$.
\end{theorem}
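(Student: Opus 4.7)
My proof proposal has two components: (i) establishing the $O(1)$-approximation guarantee, and (ii) accounting for the space and update time. For the approximation, the plan is to discretize the unknown value $\opt_p(W)$ via the geometric grid $\Lambda$, identify a ``good'' guess $\lambda^*$, and then do a three-case analysis that mirrors the three \textsf{return} branches of Algorithm~\ref{alg:algo-min}. The fact that $\Lambda$ ranges from $m$ up to $2^p\beta(1+\delta)M$ guarantees that for some $\lambda \in \Lambda$ we have $\lambda \in [(1+\delta)^{-1}\opt_p(W), \opt_p(W)]$ (or more generally, close enough to $\beta\opt_p(W)$ to make the thresholding meaningful). To boost the overall success probability to $1-\gamma$, I will run each $\textrm{AugmentedMeyerson}$ at confidence $\gamma' = \gamma / (2|\Lambda|) = \Theta(\gamma / \log(M/m))$ and take a union bound over all $\lambda \in \Lambda$ and the two sketches per $\lambda$.

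\textbf{Approximation analysis.}
Let $\lambda^* := \min\{\lambda \in \Lambda : A_\lambda \not\subseteq W\}$ as chosen in line~6 (or, if $B_{\lambda^*}=W$ for some $\lambda^*$, take the early return branch). The critical inequality is that $\lambda^* = O(\opt_p(W))$. The direction $\lambda^* \ge \Omega(\opt_p(W))$ is essentially immediate from invariant~(v) of Lemma~\ref{lem:invariantsofAlg} combined with Lemma~\ref{lem:aas}: since $A_{\lambda^*}$ straddles the window boundary and has $\acost_p \le \lambda^*$, together with Lemma~\ref{lem:adding} (to control $\opt_p(W \cap A_{\lambda^*})$ by $2^p\opt_p(W)$), one gets that $\lambda^*/(1+\delta)$ cannot be too small compared to $\opt_p(W)$. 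For the three \textsf{return} cases:
\begin{itemize}\itemsep=0pt
\item If $B_{\lambda^*} = W$, Lemma~\ref{lem:aas} directly gives $\cost_p(W, \algo(S_{\lambda^*,2})) \le \beta(\rho) \opt_p(W)$.
\item If $W \cap A_{\lambda^*} \neq \emptyset$, set $\tau = |X|-w+1$ so that $W = (A_{\lambda^*})_\tau \cup B_{\lambda^*}$. Applying Lemma~\ref{lem:suffix-composition} with $A = A_{\lambda^*}$, $B = B_{\lambda^*}$ gives $\cost_p(W, \algo(\suffix_\tau(S_{\lambda^*,1}) \cup S_{\lambda^*,2})) \le O(\opt_p(W))$. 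The hypothesis $\opt_p(A_{\lambda^*}) \le O(\opt_p(W))$ needed by that lemma follows because $\opt_p(A_{\lambda^*}) \le \lambda^*$ (again via invariant~(v) and Lemma~\ref{lem:aas}) and $\lambda^* = O(\opt_p(W))$ from the previous paragraph.
\item If $W \subseteq B_{\lambda^*}$, apply Lemma~\ref{lem:suffix-composition} degenerately (with $B = \emptyset$ and $A = B_{\lambda^*}$, using invariant~(vi) to bound $\opt_p(B_{\lambda^*})$), yielding the same $O(1)$ factor on $\suffix_\tau(S_{\lambda^*,2})$.
\end{itemize}

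\textbf{Space and time.}
Summing over $\lambda \in \Lambda$ (which has cardinality $O(\log_{1+\delta}(M/m)) = O(\log(M/m))$ after absorbing $\delta$) and over the two sketches per $\lambda$, each of size given by Lemma~\ref{lemma:mb}, the total space is
\[
O\bigl(|\Lambda| \cdot k \log(\gamma'^{-1}) \log \Delta \cdot \log(M/m) \cdot (\log M + \log w + \log \Delta)\bigr),
\]
which after substituting $\gamma' = \Theta(\gamma/\log(M/m))$ matches the stated bound. For per-update time, each arrival triggers, for each $\lambda$, one ``trial insertion'' into $S_2$ and a call to $\algo$ on the resulting weighted instance of size $O(k \log \Delta)$, costing $T(k \log \Delta, k)$; multiplying by $|\Lambda|$ and by the overhead from the augmented bookkeeping (the $(\log \Delta + \log w + \log M)$ factor coming from Sections~\ref{subsec:MaintainWeight}–\ref{subsec:MaintainCenter}) gives the claimed update time.

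\textbf{Main obstacle.}
The delicate step is establishing the two-sided bound $\lambda^* = \Theta(\opt_p(W))$, which is what makes the case analysis tight. The upper bound on $\opt_p(W)$ in terms of $\lambda^*$ comes almost for free from the invariants, but the lower bound requires carefully leveraging the \emph{minimality} of $\lambda^*$: one must argue that for the predecessor guess $\lambda^*/(1+\delta) \in \Lambda$, the sketches force $A_{\lambda^*/(1+\delta)} \subseteq W$, and translate this containment into a nontrivial lower bound on $\opt_p(W)$ via Lemma~\ref{lem:adding} together with the approximation factor $\beta(\rho)$ hidden in $\acost_p$. Keeping the constants $2^p$, $\beta(\rho)$, and $(1+\delta)$ consistent through the composition is the bookkeeping heart of the proof, but no new technical idea is needed beyond the lemmas already established.
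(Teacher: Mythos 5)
Your overall plan—geometric discretization of the optimum value, a union bound over the sketches, a three-case analysis matching the return branches of Algorithm~\ref{alg:algo-min}, and invoking Lemma~\ref{lem:suffix-composition} for the two nontrivial branches—matches the paper's proof, and your space and update-time accounting is right. There is, however, a genuine confusion about what must be proven about $\lambda^*$, and the derivation you offer for part of it does not work.

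You assert that a two-sided bound $\lambda^* = \Theta(\opt_p(W))$ is needed. It is not: every branch of the analysis only requires the one-sided inequality $\lambda^* \le O(\opt_p(W))$. Concretely, the hypothesis of Lemma~\ref{lem:suffix-composition} is $\opt_p(A) \le O(\opt_p(A_\tau \cup B))$; in the case $W\cap A_{\lambda^*}\ne\emptyset$ you take $A = A_{\lambda^*}$ and use invariant (v) to get $\opt_p(A_{\lambda^*}) \le \lambda^*$, and in the case $W\subseteq B_{\lambda^*}$ you take $A = B_{\lambda^*}$ and use invariant (vi) to get $\opt_p(B_{\lambda^*}) \le \lambda^*$; either way, combining with $\lambda^* \le O(\opt_p(W))$ closes the argument. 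No lower bound on $\lambda^*$ in terms of $\opt_p(W)$ ever enters. Worse, the argument you give for the claimed direction $\lambda^* \ge \Omega(\opt_p(W))$ does not produce it: invariant (v) plus Lemma~\ref{lem:aas} yields $\opt_p(A_{\lambda^*}) \le \lambda^*$, and Lemma~\ref{lem:adding} applied to $W\cap A_{\lambda^*} \subseteq W$ yields $\opt_p(W\cap A_{\lambda^*}) \le 2^p \opt_p(W)$; both inequalities point the same way and give no upper bound on $\opt_p(W)$ in terms of $\lambda^*$.

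The one inequality that actually matters, $\lambda^* \le O(\opt_p(W))$, is the one you gesture at in the "main obstacle" paragraph, and for it the correct invariant is (iv), not (v). The paper's chain is: take the predecessor $\lambda' = \lambda^*/(1+\delta)$, so that $A_{\lambda'}$ is nonempty and $A_{\lambda'} \subseteq W$ by minimality of $\lambda^*$. Invariant (iv) gives the strict inequality $\lambda' < \acost_p(S(A^+_{\lambda'}), \algo(S(A^+_{\lambda'})))$; Lemma~\ref{lem:aas} bounds the right-hand side by $O(\opt_p(A^+_{\lambda'}))$; and since $A^+_{\lambda'}\subseteq W$, Lemma~\ref{lem:adding} gives $\opt_p(A^+_{\lambda'}) \le 2^p\opt_p(W)$. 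Hence $\lambda' \le O(\opt_p(W))$ and $\lambda^* = (1+\delta)\lambda' \le O(\opt_p(W))$. Invariant (v) cannot be the source of this bound because it gives $\acost_p \le \lambda$ rather than $> \lambda$, which is the wrong direction for lower-bounding $\opt_p$ in terms of $\lambda$. With those two corrections—drop the two-sided claim and use invariant (iv) for the predecessor argument—the rest of your proposal carries through and coincides with the paper.
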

\begin{proof} 
Notice that at any point in time we run at most $2 \log_{1+\delta} (M/m)$ Augmented Meyerson sketches. We can set each of them to use as probability of error bound $\frac{\gamma}{2 \log_{1+\delta} (M/m)}$ to have total probability of any of them failing $\le 1- \gamma$. We continue the analysis on assuming we are in the case all current sketches did not fail.

Consider the sketches maintained by the algorithm. 
If for some $\lambda$, the active window $\activewindow$ is identical to the interval $B_{\lambda}$, we return $\algo(S_2)$ as the solution. Since $S_2$ is a Meyerson sketch we have that the weighted instance computed by the sketch is consistent with a mapping of cost at most a constant factor larger than the cost of the optimal solution in the active window and so by Lemma~\ref{lem:cmap} we have that $\algo(S_2)$ is a constant approximation. (Note that this is independent of the value of $\lambda$, in fact $\lambda$ is not used by the Meyerson sketch that independently tries several possible lower bounds for the cost of the solution)%(by composing $B_{\lambda}$ with the empty sketch).

Otherwise, we find the maximum $\lambda' \in \Lambda$ for which $A_{\lambda'}$ is not empty and fully contained in the current active window.  As long as $|W| > 1$, for the sketches associated with the smallest $\lambda$, we have $A_m \neq \emptyset$, which guarantees that such a $\lambda'$ exists. Here $m$ is the lower bound on the optimum value as defined in Section~\ref{sec:preliminaries}. Furthermore, let $\lambda^*  = \lambda'  (1+\delta)$. Note that the fact that for the largest $\lambda$, all elements are contained in $B_{\lambda}$, this guarantees that such a $\lambda^*$ exists in the set of thresholds $\Lambda$  as well. % (we can't have an estimated cost $>2^p \beta M$ in any substream). 

We now show that $\lambda'  < O(\opt(W))$, which implies that $\lambda^* < (1+\delta) O( \opt(W))$. 
In fact, suppose $\lambda'$ is such that $A_{\lambda'}$ is non-empty. We know that $A_{\lambda'} \subseteq W$. Let $A^+_{\lambda'}$ be $A_{\lambda'}$ plus the first element of $B_\lambda$, by definition of the algorithm, we have that 
$\lambda'<  \acost_p(S_{A^+_{\lambda'}},\algo(S(A_{{\lambda'}}^+))) $
 where the inequality follows from the invariant (iv) of Lemma~\ref{lem:invariantsofAlg}, but now $\acost_p(S_{A^+_{\lambda'}},\algo(S(A_{{\lambda'}}^+)))$ is the cost of a solution computed on the Meyerson sketch for $A^+_{\lambda'}$ for $A^+_{\lambda'}$ so using the fact that the Meyerson sketch provides a constant approximation we get 
 $ \lambda'<  \acost_p(S_{A^+_{\lambda'}},\algo(S(A_{{\lambda'}}^+))) \le O( \opt(A^+_{\lambda'}))$
 but now we can notice that by adding points to any set $X$ we can decrease the cost of the solution by at most a factor $2^p$(for a formal proof of this fact look at Lemma~\ref{lem:adding}) so we get 
$  \lambda'<  \acost_p(S_{A^+_{\lambda'}},\algo(S(A_{{\lambda'}}^+))) \le O( \opt(A^+_{\lambda'})) \le 2^pO( \opt(W))\in O(\opt(W)).
$

So, we can then assume that $\lambda^* < (1+\delta)O( \opt(W))$. By definition of $\lambda^*$, $W \subseteq A_{\lambda*} \cup B_{\lambda*}$. There are three cases to consider. 
\begin{itemize}\itemsep=0in\vspace{-0.15in}
\item $\activewindow$ is a strict suffix of $B_{\lambda^*}$. In this case, the $\activewindow$ is equal to a suffix of $B_{\lambda^*}$ starting at position $\tau$ for some $\tau$. Notice that $|B_{\lambda^*}| \ge 2$ so we know that the sketch over $B_{\lambda^*}$ has cost $\le \lambda^*$. so we can apply the Lemma~\ref{lem:suffix-composition} to compose  the suffix of $B_{\lambda^*}$ with the empty sketch. 
\item $\activewindow = B_{\lambda^*}$, this ensures to a constant approximation. 
\item $\activewindow$  intersects with $A_{\lambda^*}$. We start by computing $\suffix_t(S_1)$, and then we compute $\algo(\suffix_t(S_1)\cup S_2)$. Again we know that the $A_{\lambda^*}$ has a sketch of small cost so we can apply the composition Lemma~\ref{lem:suffix-composition}.
\vspace{-0.1in}
\end{itemize}
Notice that the space bound comes from the total number of points in all sketches used. Notice also that the number of distance function evaluations (and thus the update time) to update the Meyerson sketches per each point in the stream is bounded by the number of points in all sketches (we do one evaluation for each center). The total update time depends also on the complexity of $\algo$, but we only run it on instances of $k \log (\Delta)$ points, so the total update time is 
$
 O\big(  T(k \log (\Delta), k) \log^2 (M / m) \log(\gamma^{-1}  \log(M / m))  (\log (\Delta) + \log(w) + \log(M) \big).
$
Notice also that at the last step, to return a solution, after the updates are done, we only require solving a constant number of $k$-clustering instances with $\algo$, again each of them of size $O(k \log (\Delta))$.
\end{proof}
\section{Empirical analysis}
 \label{sect:app-exp}

\subsection{Additional information on the experimental setup}
\textbf{Baselines.} We consider the following baselines.
\textbf{Off-Line K-Means++}:  We use $k$-means++ over the entire window as a proxy for the optimum, since the latter is NP-hard to compute. At every insertion, we report the best solution over 10 runs of $k$-means++ on the window.  Observe that this is inefficient as it requires $\Omega(w)$ space and $\Omega(kw)$ cost per update. 
 Specifically, we run $k$-means++ 10 times and report the best solution obtained. To compare the cost of our algorithm to the optimum, we run $k$-means++ every time a new element is inserted. While this gives the best cost, this is obviously inefficient: such an approach requires $\Omega(w)$ space and $\Omega(kw)$ cost per update.  
 \textbf{Sampling}: We maintain a random sample of points from the active window, and then run $k$-means++ on the sample. This allows us to evaluate the performance of a baseline, at the same space cost of our algorithm. 
\textbf{SODA16}: We also evaluated the only previously published algorithm for this setting  in~\cite{DBLP:conf/soda/BravermanLLM16}. 
We note that we made some practical modifications to further improve the performance of our algorithm which we  now describe.

For our algorithm we used a standard optimization from the literature \cite{lattanzi2017consistent}: running one copy of the Meyerson sketch instead of the $O(\log(1/\gamma))$ copies that are needed for high probability statements. We also developed a lazy evaluation of the cost of the Meyerson sketch that saves update time.
For the constant $\alpha$ in the Meyerson algorithm we use $.5$, as shown in the pseudo-code.

Second, as we focus on the cost of the solution, instead of the centers output, we ignore the presence of the center in the window (i.e., we do not use the method to output centers in the window). This is consistent with the problem addressed by~\cite{DBLP:conf/soda/BravermanLLM16} in which they do not restrict the centers to be in the sliding window. This way we allow a fair comparison of our results with this baseline.

Third, similarly to~\cite{lattanzi2017consistent}, in the implementation of the update function, the cost of the sketch is estimated through a lazy evaluation in which the cost is not recomputed from the last time unless at least one new center is added or the number of points or total cost associated with a center is increased by a factor $(1+\eta)$ from the last full evaluation (we use $\eta = 0.05$); we stop a Meyerson sketch only when the size bound is exceeded, not the cost bound. 

Finally, to output a solution, instead of the specific pair of summaries described in our algorithm, we make best-effort use of all available pairs of sketches associated with all guesses of the optimum and return the best one with lowest estimated cost--this also makes the algorithm more resistant to error in estimating the lower and upper bounds of the cost. Our empirical analysis shows that these techniques speed up the computation significantly while retaining very strong approximation factors in real dataset, as we show in this section.

{\bf Upper and lower bounds on cost.} 
Our algorithms described before require an estimate of the lower and upper bounds for the cost to initialize the thresholds used in the summaries. In the Appendix~\ref{app:assumption} we show that such assumptions can be removed at the expenses of a more complicated algorithm. In our experiments instead of removing the assumptions we have implemented the simpler algorithms described in the paper and we used an heuristic to estimate the bounds $m,M$ in input to the algorithms. Here we describe the heuristic, which empirically results in strong approximation results and efficiency: sample a number of window-sized substreams from (a prefix) of the stream and compute an approximate solution for each window with any algorithm (e.g., kmeans++ or an insertion-only stream algorithm); then use the min and max cost found (with a slack) for the bounds. 

More precisely, let $m',M'$ be the minimum and maximum cost observed in the samples and let $\mu, \sigma$ be the empirical mean and standard deviation of the samples' costs. To estimate $m,M$ we use $m := \max(\frac{m'}{3},\mu-3\sigma)$ and $M := \max(3M,\mu+3\sigma)$. We use $10$ samples in our experiments for the estimation.  Note that the $\max$ with $\frac{m'}{3}$ is used to ensure that the lower bound is set to be $>0$. We use the same technique in the evaluation of the SODA16 baseline~\cite{DBLP:conf/soda/BravermanLLM16}. For the $\Delta$ input of our algorithms we use the $M'$ obtained from the heuristic.

{\bf Implementation details for \cite{DBLP:conf/soda/BravermanLLM16}}.
To the best of our knowledge, this algorithm has not been evaluated empirically. In implementing this algorithm we used the same speed up techniques used for our algorithm: we used one instead of logarithmic many Meyerson sketches, we used the same implementation of the Meyerson sketch used for our algorithm for fairness of comparison. We also provide the same upper and lower bounds on the optimum used for our algorithm. Finally, we set the grid width parameter $\delta$ of the algorithm, as the equivalent $\delta$ parameter of our algorithm. We set similarly the $\beta$ parameter~\cite{DBLP:conf/soda/BravermanLLM16}.

\subsection{More detailed comparison with previous work}

\begin{table}
\small
\centering
\begin{tabular}{ c | c | c | c | c |}
\textbf{Dataset} & $k$ &
\makecell{\textbf{Space Decr.} \\ \textbf{Factor}} &
\makecell{\textbf{Speed-Up} \\ \textbf{Factor}}  &
\makecell{\textbf{Cost} \\ \textbf{(ratio)}}  \\
\hline
\multirow{5}{*}{COVER}&4&5.23&	10.88&	$99.5\%$ \\
 &8&4.86&	10.30&	$99.9\%$\\
&10&4.86&	10.36&	$97.6\%$\\
 &16&6.04&	13.10&	$95.1\%$\\
&20&7.09&	16.45&	$94.3\%$\\
\hline
\multirow{3}{*}{SHUTTLE}&4&5.07&	9.09&	$106.8\%$\\
&8&6.09&	7.89&	$102.4\%$\\
&10&6.28&	7.79&	$103.2\%$\\
&16&15.32&	15.14&	$118.9\%$\\
&20&13.30&	12.32&	$143.7\%$\\
\end{tabular}
\caption{Decrease of space use, decrease in (speed-up) and ratio of mean cost of the solutions of our algorithm vs the SODA16 baseline. Space decrease and the speed-up factors are multiplicative, the cost column reports the ratio of the cost of our algorithm over that of the SODA16 algorithm.
}
\label{table:soda16-extended}
\vspace{-0.15in}
\end{table}

Here we report a more detailed analysis in Table~\ref{table:soda16-extended} where we show that our algorithm uses up to $15x$ less space, does up to $16x$ fewer distance calculations and achieves a solution up to 7\% cheaper. Notice how the improvements over the baseline grow with $k$ as predicted by theory.  We observe that for $k=16$ on the SODA16 baseline always has roughly the same space or more that that required to store the entire sliding window making the sliding window algorithm trivial.

Notice how our algorithm result in substantial improvements even for small values of $k$. 

\subsection{Additional experiments on cost of the solution}

\begin{table}
\small
\centering
\begin{tabular}{ c | c | c | c | c |}
\textbf{Dataset} & $W$ & \textbf{Sketch} & \textbf{Sampling}  & \textbf{Reduction}  \\
\hline
\multirow{3}{*}{COVER}&10000&1.09 (0.14)&1.29 (0.31)&$-15.9\%$ \\
&20000&1.05 (0.12) &1.31 (0.30)&$-19.4\%$\\
&40000&1.06 (0.13) &1.41 (0.44)&$-25.0\%$\\
\hline
\multirow{3}{*}{SHUTTLE}&10000&1.11 (0.16)&1.48 (0.25)&$-24.8\%$\\
&20000&1.09  (0.14)&1.32 (0.14)&$-17.4\%$\\
&40000&1.07 (0.11)&1.26 (0.13)&$-15.0\%$\\
\hline
\multirow{3}{*}{SKIN}&10000&1.13 (0.19)&1.40 (0.40)&$-19.7\%$\\
&20000&1.10 (0.17)&1.34 (0.32)&$-17.8\%$\\
&40000&1.09 (0.14) &1.31 (0.28)&$-16.4\%$\\
\hline
\end{tabular}
\caption{Mean cost ratio of our algorithm (std-dev in parenthesis) and of the sampling baseline over the $k$-means++ gold standard for $k=10$, $\delta=0.2$}
\label{table:cost-over-baseline}
\end{table}

\begin{figure}
\begin{center}
\subfigure[COVERTYPE]{\includegraphics[width=0.3\textwidth,keepaspectratio]{figs/covtype-40000-20-0_2-cost.pdf}\label{fig:evolution-covtype-ext}}
\subfigure[SHUTTLE]{\includegraphics[width=0.3\textwidth,keepaspectratio]{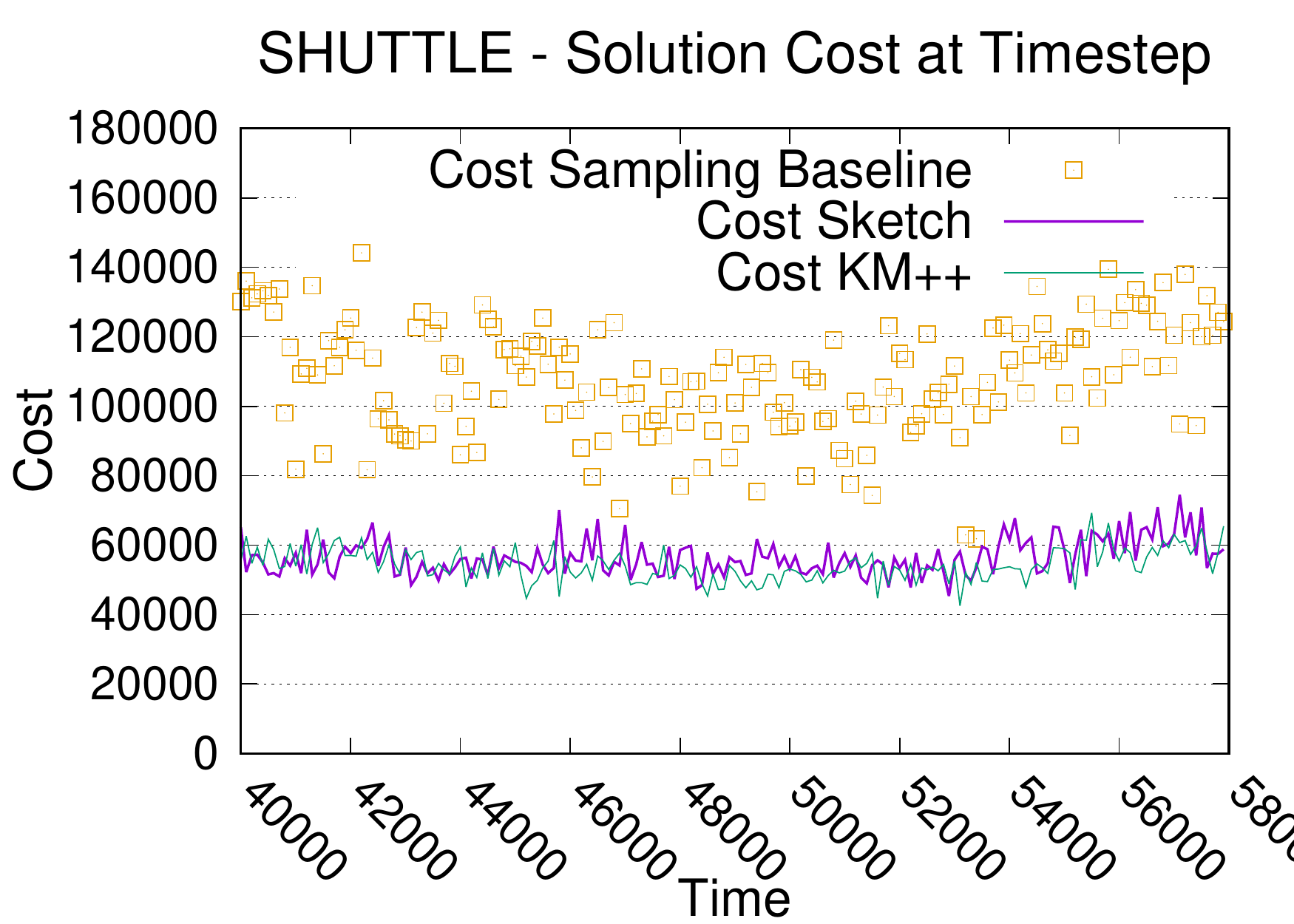}\label{fig:evolution-shuttle-ext}}
\subfigure[SKINTYPE]{\includegraphics[width=0.3\textwidth,keepaspectratio]{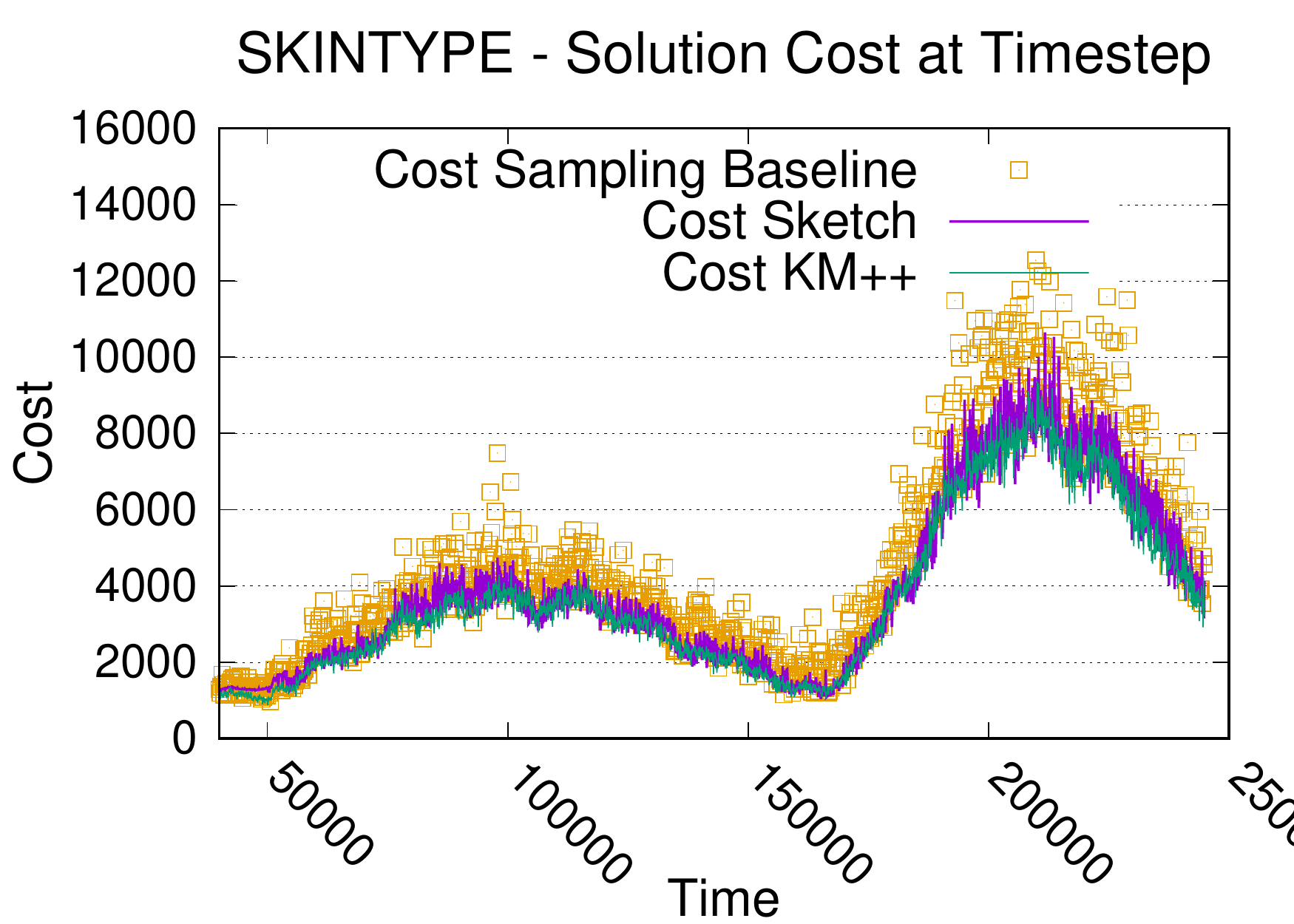}\label{fig:evolution-skin-ext}}
\caption{Cost of the solution obtained by our algorithm (Sketch) and the two baselines for $k=20$, $|W|=40{,}000$ and $\delta=0.2$. Notice that our algorithm's cost is close to that of the off-line algorithm and significantly better than the random sampling baseline}
\label{fig:evolution-ext}
\end{center}
\vspace{-0.15in}
\end{figure}

Observe that our algorithm closely tracks that of the optimum, even as the cost fluctuates up and down. Notice how the random  sampling baseline presents a particularly high approximation factor in COVERTYPE (Fig.~\ref{fig:evolution-covtype-ext}) during timesteps when there is a shift in the cost magnitude; this confirms the importance of sliding window algorithms to detect correctly (and re-cluster appropriately) shifts in data promptly. Similar results sre observed in all datasets Figure~\ref{fig:evolution-ext}.
To capture this analytically, we show the median relative error in Table~\ref{table:cost-over-baseline}. While our algorithm has a median error within 10\% of KM++, the  median error of the sampling baseline is 30-40\% higher than KM++. Notice how the size of the sliding window does not affect substantially the experimental results.

\subsection{Additional experiments on update time and space tradeoff}

\begin{table}
\centering
\begin{tabular}{ c | c | c | c |}
$W$ & $k$ & \textbf{Space}  & \textbf{Time}  \\
\hline
\multirow{3}{*}{20000}&10 &$7.2 \%$ $ (0.75\%)$&$1.2\%$ $ (0.68\%)$\\
&20&$13.72\% $ $ (2.06\%)$&$2.6\%$ $ (1.5\%)$\\
&40&$21.3\% $ $(3.4\%)$&$3.5\%$ $ (2.8\%)$\\
\hhline{~---}
\multirow{3}{*}{40000}&10&$3.5\%$ $ (0.39\%)$&$0.45\%$ $ (0.29\%)$\\
&20&$6.5\% $ $ (0.87\%)$&$0.93\% $ $(0.63\%)$\\
&40&$11.3\%$ $ (1.74\%)$&$1.58\%$ $ (1.23\%)$\\
\end{tabular}
\caption{ Max percentage of sliding window stored (Space) and median percentage of time (Time) vs. one run of $k$-means++ (stdddev in parantesis).}
\label{table:time-space-over-baseline2}
\vspace{-0.15in}
\end{table}

Additional experiments are presented in Table~\ref{table:time-space-over-baseline2} where we observe substantial savings w.r.t. the naive method. Notice how the savings increase when the size $w$ grows.

\subsection{Recovering ground-truth clusters.} 
In this section we evaluate the accuracy of the clusters produced by our algorithm on a dataset with ground-truth clusters. Note that the ground truth is for all the all set so in our setting we use the cluster assignments of the original ground truth restricted to the particular sliding window. We compared the offline KM++ baseline (KM++), our algorithm (Sketch), and the same-space (Sampling) baseline. We do not evaluate the SODA16 baseline because it is dominated by KM++ in space use and solution quality. We use the well known V-Measure accuracy definition for clustering~\cite{rosenberg2007v}. 

We report in Table~\ref{table:ground-truth-extended} the results for our empirical evaluation of the accuracy of our algorithm on identifying the ground truth clusters in synthetic datasets. 

We observe that our algorithm always dominates the Sampling baseline for all parameters and datasets and it has results comparable to the gold standard offline algorithm while using significantly less space and running time. 

\begin{table}
\centering
\begin{tabular}{llrrr}
\toprule
   &           &  \textbf{Off-line} &  \textbf{Sketch}&  \textbf{Sampling}\\
\textbf{Dataset} & \textbf{W} &                     &                     &                             \\
\midrule
s1 & 2000 &            0.958 (0.016) &      0.964  (0.019)&                    0.927 (0.026)\\
   & 3000 & 0.965 (0.015) &            0.969 (0.015)&                    0.926 (0.025)\\
   & 4000 & 0.969 (0.012) &             0.969 (0.017)&                    0.933 (0.030)\\
\hline
s2 & 2000 & 0.901 (0.018)&            0.903 (0.015)&                    0.856 (0.026)\\
   & 3000 & 0.889 (0.014)&            0.903 (0.014)&                    0.860 (0.028)\\
   & 4000 & 0.901 (0.014)&            0.903 (0.021)&                    0.851 (0.025)\\
\hline
s3 & 2000 & 0.741 (0.018)&            0.748 (0.013)&                    0.717 (0.019)\\
   & 3000 & 0.740 (0.017)&            0.745 (0.011)&                    0.717 (0.015)\\
   & 4000 & 0.735 (0.018)&            0.746 (0.017)&                    0.712 (0.013)\\
\hline
s4 & 2000 & 0.678 (0.015)&            0.677 (0.017)&                    0.665 (0.019)\\
   & 3000 & 0.678 (0.017)&            0.674 (0.015)&                    0.659 (0.019)\\
   & 4000 & 0.676 (0.009)&            0.675 (0.013)&                    0.666 (0.013)\\
\bottomrule
\end{tabular}
\caption{Average V-Measure of the cluster using the $k$-means++, sampling baselines and, our algorithm as compared to the ground truth clusters in synthetic datasets (stddev in paranthesis). We use $\delta=0.2$.}
\vspace{-0.2in}
\label{table:ground-truth-extended}
\end{table}

\subsection{Additional empirical results on comparisons with ths  baselines}
 
\paragraph{Max cost ratios vs gold standard}

\begin{table}
\small
\centering
\begin{tabular}{ c | c | c | c | c |}
\textbf{Dataset} & $W$ & \textbf{Sketch} & \textbf{Sampling}  & \textbf{Reduction}  \\
\hline
\multirow{3}{*}{COVER}&10000&1.64&4.09&$-60.0\%$ \\
&20000&1.49&4.05&$-63.1\%$\\
&40000&1.62&3.79&$-57.2\%$\\
\hline
\multirow{3}{*}{SHUTTLE}&10000&2.03&2.53&$-20.0\%$\\
&20000&1.69&1.80&$-6.3\%$\\
&40000&1.45&1.75&$-17.00\%$\\
\hline
\multirow{3}{*}{SKIN}&10000&2.28&4.16&$-45.3\%$\\
&20000&1.19&4.13&$-54.2\%$\\
&40000&1.76&3.27&$-46.3\%$\\
\end{tabular}
\caption{Maximum cost ratio of our algorithm and the sampling baseline over the $k$-means++ gold standard for $k=10$, $\delta=0.2$}
\label{table:cost-over-baseline-max}
\end{table}

Our improvement over this baseline is even larger if we look at max error (Table~\ref{table:cost-over-baseline-max}) over the entire stream.  Notice that, in these experiments, our algorithm always has less than $2.2$ cost factor, while the sampling baseline can return costs that are $>4$x higher than the gold standard, thus resulting in a reduction of up to $-60\%$ in cost.  Note also that the empirical approximation ratios are much better than what the worst case theoretical analysis pessimistically predicts.

\begin{figure}
\begin{center}
\includegraphics[width=0.6\textwidth,keepaspectratio]{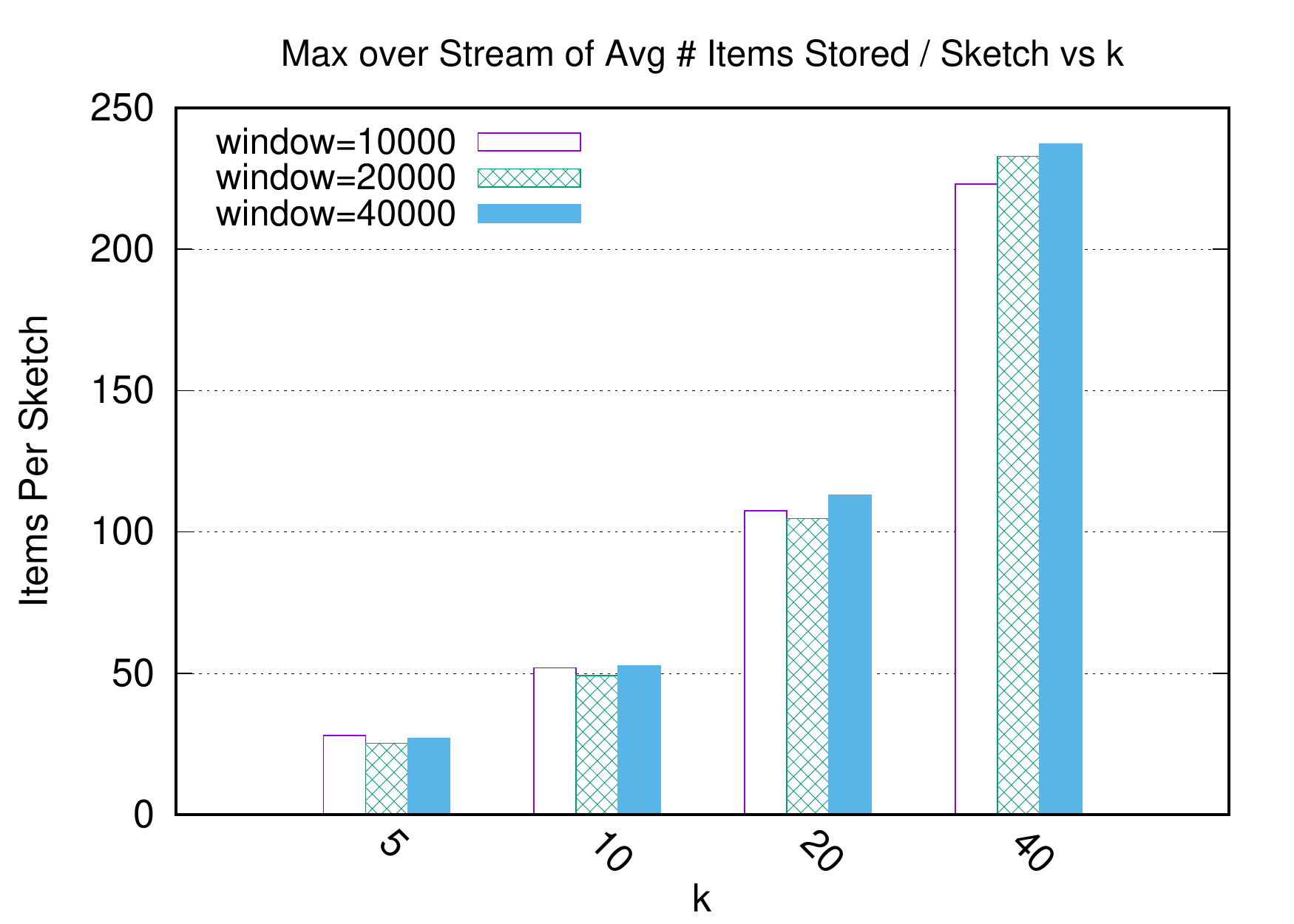}
\caption{Number of items per sketch stored by our algorithm.}
\label{fig:space-per-sketch}
\end{center}
\end{figure}

We report in Table~\ref{table:time-space-over-baseline-extended} additional results. 
\begin{table}
%\small
\centering
\begin{tabular}{ c | c | c | c | c |}
\textbf{Dataset} & $W$ & $k$ & \textbf{Space}  & \textbf{Time}  \\
\hline
COVER & 10000 & 5  &              7.24 &                                1.57 \\
     &       & 10 &             14.43 &                                3.12 \\
     &       & 20 &             28.33 &                                6.56 \\
     &       & 40 &             37.22 &                                6.33 \\
     & 20000 & 5  &              3.66 &                                0.62 \\
     &       & 10 &              7.28 &                                1.22 \\
     &       & 20 &             13.72 &                                2.62 \\
     &       & 40 &             21.08 &                                3.51 \\
     & 40000 & 5  &              1.83 &                                0.24 \\
     &       & 10 &              3.51 &                                0.46 \\
     &       & 20 &              6.53 &                                0.93 \\
     &       & 40 &             11.34 &                                1.59 \\
SHUTTLE & 10000 & 5  &              3.86 &                                0.57 \\
     &       & 10 &              5.77 &                                0.88 \\
     &       & 20 &              8.23 &                                1.60 \\
     &       & 40 &              4.81 &                                0.58 \\
     & 20000 & 5  &              1.60 &                                0.18 \\
     &       & 10 &              1.72 &                                0.17 \\
     &       & 20 &              2.22 &                                0.20 \\
     &       & 40 &              2.22 &                                0.18 \\
     & 40000 & 5  &              0.72 &                                0.09 \\
     &       & 10 &              0.69 &                                0.05 \\
     &       & 20 &              0.90 &                                0.07 \\
     &       & 40 &              0.94 &                                0.05 \\
SKIN & 10000 & 5  &              7.08 &                                1.61 \\
     &       & 10 &              9.81 &                                1.18 \\
     &       & 20 &              7.09 &                                0.43 \\
     &       & 40 &              9.14 &                                0.29 \\
     & 20000 & 5  &              3.26 &                                0.62 \\
     &       & 10 &              4.15 &                                0.51 \\
     &       & 20 &              3.38 &                                0.22 \\
     &       & 40 &              4.64 &                                0.16 \\
     & 40000 & 5  &              1.67 &                                0.26 \\
     &       & 10 &              2.46 &                                0.28 \\
     &       & 20 &              1.85 &                                0.12 \\
     &       & 40 &              2.10 &                                0.08 \\
\end{tabular}
\caption{Max percentage of the items stored w.r.t. the sliding window (Space) and median percentage of distance evaluation vs. one run of $k$-means++ for $\delta=0.2$}
\label{table:time-space-over-baseline-extended}
\end{table}

\paragraph{Comparison of costs with baseline}
We now analyze quantitatively the cost of the solutions obtained by our algorithm and the other baselines. Table~\ref{table:cost-over-baseline} reports the median ratio (over all timesteps evaluated) between the cost obtained by our algorithm and the $k$-means++ gold standard (column Sketch) as well as the median ratio of the sampling baseline over the gold standard (Sampling) for $k=10$, $\delta=0.2$, and various $|W|$'s. Notice how our algorithm has a median approximation factor (w.r.t. the gold standard) of $5-13\%$ higher cost and a reduction of cost w.r.t. the sampling baseline of up to $-25\%$. An approximation factor of $<1.2$ is significantly better than the pessimistic worst-case theoretical analysis and shows that our algorithm is highly precise in practice.

\paragraph{Fraction of window stored}

\begin{figure}
\begin{center}
\includegraphics[width=0.5\textwidth,keepaspectratio]{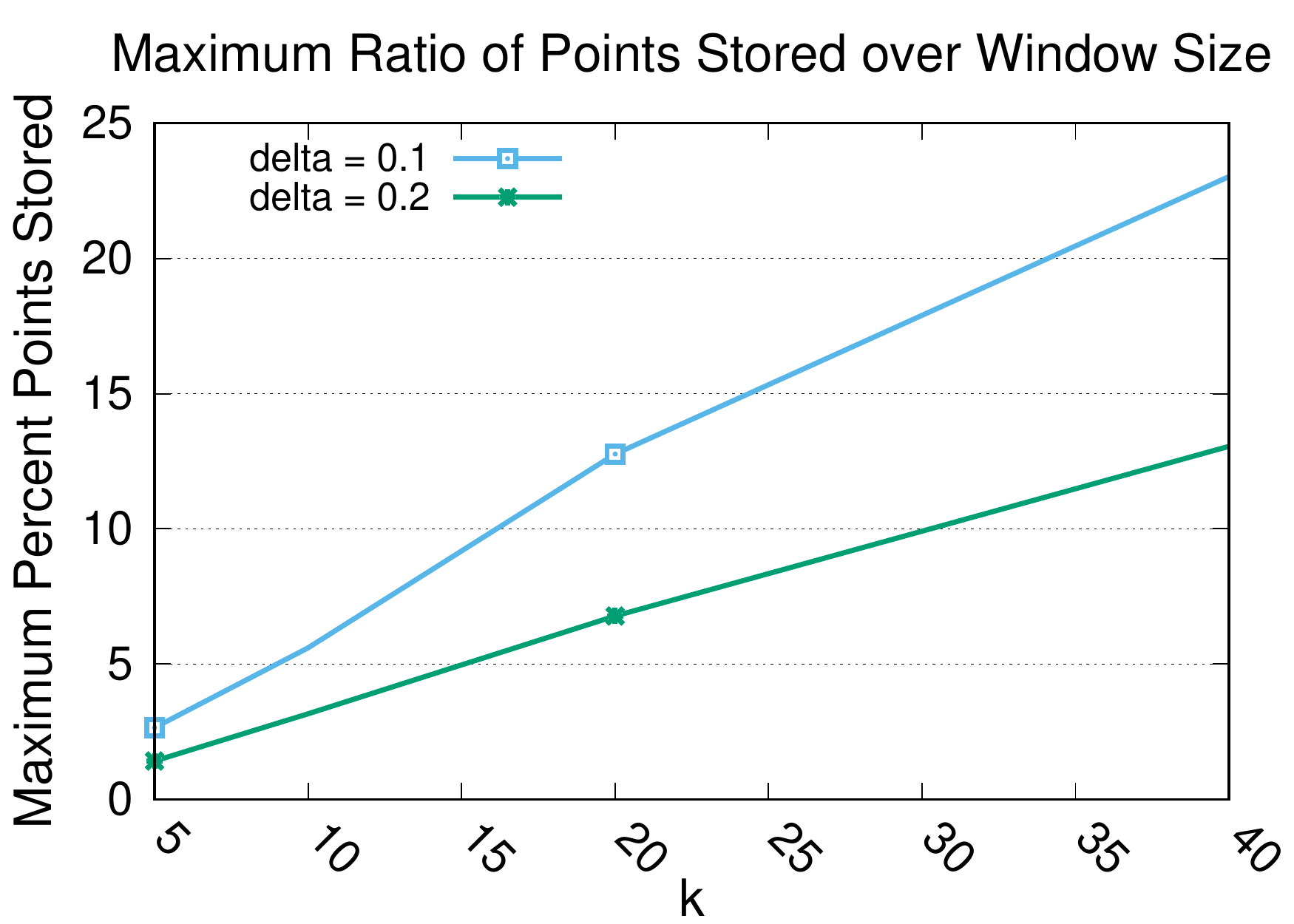}
\caption{Space usage of our algorithm for different values of $k$ and $\delta$ and $W=40{,}000$ in COVERTYPE. Space usage is reported as percent of he window points stored .}
\label{fig:space-time-items}
\end{center}
\end{figure}

 Figure~\ref{fig:space-time-items} we show the percent of the sliding window data points stored as a function of the desired number of clusters, $k$, for the COVERTYPE dataset and $W =40{,}000$ (other results are similar). Observe that we always store a small fraction of the dataset, and that the bounds grow linearly with $k$, as suggested by the theory. In %Figure~\ref{fig:space-time-dist}, we show the median update time as compared to a single run of $k$-means++ over the sliding window. Again, the running time grows linearly in $\frac{1}{\delta}$ and roughly quadratically in $k$ (as the number of centers stored grows linearly in $k$ and the running time of reclustering with $k$-means++ grows linearly in $k$). Again, we notice that in this range of parameters, the algorithm is much faster than the naive approach, requiring a small fraction of the time. 
Finally, In Figure~\ref{fig:space-per-sketch} we show the average number of points stored per individual sketch. Notice how the number of points stored is mostly unaffected by $w$, thus confirming that our algorithm can scale to large window sizes.

\subsection{Experiments using a different $p$}
All the experiments we have reported are for k-means ($p=2$). We observe that our implementation easily adapts to any $p$ by just changing the underlying $k$-clustering solver. For sake of completeness, we also ran some experiments with the $k$-median objective, $p=1$. All the results we observe are in line with that of the k-means case ($p=2$). For instance, replicating the clustering accuracy experiment using $p=1$ (See Table~\ref{table:ground-truth-extended} for the results for $p=2$) we obtain the following V-Measure accuracy for  the offline algorithm, our stream algorithm and the sampling baseline (same space), $81.8\%$, $79.9\%$, $78.5\%$, respectively. This confirming the same trend observed for k-means. We observe the same trend in terms of speedups and space savings w.r.t. the gold-standard baseline.

\section{Appendix: Relaxing the assumptions on the input}\label{app:assumption}

In the paper we made the following assumptions for the sake of exposition: 
\begin{itemize}
\item the distances between points are normalized to lie between $1$ and $\Delta$,
\item  the points are distinct, and 
\item we have access to a lower bound $m$ and upper bound $M$ of the cost of the optimal solution in any sliding window.
\end{itemize}
We can relax these assumptions as follows.  

\subsection{Relaxing the distinct points' assumption.}
Notice that we made the assumption that the points in $X$ are distinct only for sake of simplifying the notation of the mapping function from points to centers (we want to define a function from a point). In general, points can be repeated in the stream and without any loss of generality one can define a function (implicitly) on the points assuming their arrival time make them distinct entities (note that we do not need to store the mapping which is not used by the algorithm, this is only for sake of analysis).

\subsection{Relaxing the assumption having distances in  $[1,\Delta]$ in Meyerson algorithm} 
Notice that in the Meyerson algorithm (Algorithm~\ref{alg:meyerson})  we assume that the min distance of two non-coincident points is $1$ and we assume to know the maximum distance $\Delta$. The latter value is used to sample points with probability depending on $\log(\Delta)$. In reality we do not need such assumptions. Notice that the $\log(\Delta)$ factor is needed to perform a union bound over the number of exponentially growing balls around a center that starts at distance given by the average cost of a point in a cluster. It is possible to observe that the maximum distance from a point in a cluster $C$ to the center, is a most $|C|$ times the average distance of points to the center in the cluster. So if we assume that the algorithm is run on a stream of size at most $O(w)$ we can use $\log(w)$ instead of $\log(\Delta)$ in the algorithm, and ignore the assumption on the min distance of points. In the Section~\ref{app:short-stream} we show that we never need to compute the sketches on streams of length larger than $2w$ so this we can replace $\log(\Delta)$ with $\log(2w)$ in Algorithm~\ref{alg:meyerson}.

\subsection{Estimating the min cost $m$}
The final assumption is knowing a lower bound on the minimum cost $m$ and an upper bound on cost, $M$ for defining the grid containing the guesses of the optimum. To have a lower bound on $m$, we maintain the most recent $k+1$ distinct points at all times in a set $S$. 
Whenever a new point $x$ arrives, we check to see if we can replace one of the $k+1$ points in $S$ with $x$. For instance if $x \notin S$, we add $x$ to $S$ and remove the most stale (oldest) point in $S$. If an identical point $y = x$ already exists in set $S$, we just replace $y$ with this new copy $x$. This way, we always keep the most recent distinct $k+1$ points in $S$. If some of the points in $S$ lie outside the current active window $W$, the total number of distinct points in the active window does not exceed $k$. In this case, the optimum clustering has cost zero and we have access to the $\leq k$ centers yielding this optimum clustering. Otherwise, all points in $S$ belong to the active window $W$. Therefore the minimum cost of clustering $S$ lower bounds the optimum clustering cost of the whole window $W$. Notice that the minimum cost of clustering $S$ is simply the minimum pairwise distance between points in $S$. This is a valid lower bound on $m$. 

As the sliding window shifts, the lower bound on $m$ might increase or decrease. Whenever, the lower bound increases, in algorithm $SimpleMeyerson(X_t)$, we need to discard some runs of $ComputeMeyerson(X_t, L^p)$. These are instances where $L^p$ is smaller than the new lower bound on $m$ and therefore not relevant anymore. Discarding those runs is easy to do. 

However, when the lower bound on $m$ decreases, we need to instantiate new runs of $ComputeMeyerson(X_t, L^p)$ for some smaller values of $L^p$. But these new runs are only valid for sliding windows that consist of only the current set of points in $S$. Older versions of $S$ were providing higher lower bounds on $m$ and therefore we do not need to do anything for those older sliding windows. So the only sliding window for which we need to run new instances of $ComputeMeyerson(X_t, L^p)$ is the sliding window starting at time when $S$ begins, which consists of only distinct points in $S$ so far. Notice that we can easily keep the multiplicities of the coincident points for each point in $S$. Therefore we have all the information we need to run the new instances of $ComputeMeyerson(X_t, L^p)$ for the new smaller values of $L^p$ simulating the algorithm on a arbitrary ordering of points (with multiplicities) from $S$.

\subsection{Estimating the max cost $M$}
Now we will talk about maintaining an upper bound $M$ on the optimum cost. The number of points in the window is $w$, and the optimum cost cannot be more than $w$ times the diameter of the points in the sliding window. Cohen-Addad et al. \cite{diameter} design an algorithm that approximates the diameter of points in a sliding window with a constant $3+\epsilon$ approximation guarantee and using $O(\log(\Delta)/\epsilon)$ memory. Therefore, we can always use this dynamically changing upper bound on $M$. Whenever our upper bound decreases, we need to discard some runs of $ComputeMeyerson(X_t, L^p)$ for values of $L^p$ that are greater than the new upper bound on $M$ which is easy to do. 

However, when the upper bound on $M$ increases, there are new values of $L^p$ for which we need to run $ComputeMeyerson(X_t, L^p)$. The challenge is to run these sketches retrospectively. We do not have access to the whole active window $W$. 
Let $L^{old}$ be the highest value of $L^p$ we had before we increased our upper bound on $M$.  
For any  new value of $L^p$ that exceeds our previous upper bound on $M$ (let's call this new value $L^{new}$), we initialize the sketch with the sketch we have maintained so far for $L^{old}$. In other words, we use the closest value of $L^p$ to initialize all new sketches. 

In the sampling procedure, $L^p$ is the denominator in the sampling probability. So reusing the sketch with lower $L^p$ means that the previous points have been over-sampled. The main properties of the sketch proved in Lemma~\ref{lem:meyerson} are about upper bounding the cost of Meyerson sketch and its size. The cost is upper bounded in the same way and the proof is intact as the sampling probabilities only increased. 

The upper bound on the size can be adapted as follows. We divide the stream into chunks using the times we increased our upper bound on $M$ as pivot points. Notice that we do not need to update $L^p$ each time $M$ changes but only when $M$ increases by a constant  factor. The sketch $L^{new}$ consists of the points we sample in each of these chunks. So we upper bound the size of the sketch in each chunk. The last chunk is similar to any other Meyerson sketch with a set of points being pre-selected in previous chunks. So the proof of Lemma~\ref{lem:meyerson} works here as well. For any other chunk $S'$, we are running the sketch with the maximum $L^p$ at the time the points in $S'$ are arriving. So this $L^p$ is at least  a constant fraction of the optimum cost of $S'$. This suffices for the argument to upper bound the size in Lemma~\ref{lem:meyerson}. To summarize, we have a new overall upper bound on the size of the sketch with an extra $\log(\Delta)$ as the number of chunks is related to logarithm of how much diameter grows (as we start a new chunks only when $M$ increases by a constant factor). 

\subsection{Ensuring that the each sketch is computed on at most $O(w)$ points}
\label{app:short-stream}
Finally, we now show that we never need to compute any streaming algorithm on a too long stream. This can be obtained by at most doubling the time and memory requirement of the algorithm with the following technique. We keep at most $2$ independent instances of our sliding window algorithm (Algorithm~\ref{alg:algo-min}) the following way. Every $w$ steps we start a new algorithm instance that receives points from this moment on. When an algorithm instance has consumed $2w$ points that algorithm is discarded, meanwhile we update the (at most $2$ active algorithms) with each point received. Any time a solution is needed we use the active algorithm that has started earlier. It is easy to see that for such algorithm all the properties needed by the theorem~\ref{th:algo-main-th} are true as the algorithm has consumed the entire active window, so the solution is good. Notice that this ensures that no sketch consumes more than $2w$ points. Also notice that the running time and the memory is doubled at most.

\fi

\end{document}